\def\Im {\mathop{\rm Im}\nolimits}
\def\arg {\mathop{\rm arg}\nolimits}
\def\Re {\mathop{\rm Re}\nolimits}
\def\tr {{\rm tr}}
\newtheorem{thm}{Theorem}
\newtheorem{cor}{Corollary}
\newtheorem{lem}{Lemma}
\newtheorem{prop}{Proposition}
\numberwithin{equation}{section}
\newtheorem{rem}{Remark}
\numberwithin{equation}{section}
\newcounter{comment}
\numberwithin{equation}{section}
\title{{ Gap probability of the circular unitary ensemble  with a Fisher-Hartwig singularity  and the coupled Painlev\'{e} V system }}
\author{Shuai-Xia Xu$^{\,1}$  and  Yu-Qiu Zhao$^{\,2}$}
\date{
\hbox{\small \emph{$^1$ Institut Franco-Chinois de l'Energie Nucl\'{e}aire, Sun Yat-sen University, GuangZhou 510275,  China }}
 \hbox{\small
\emph{$^2$  Department of Mathematics, Sun Yat-sen University, GuangZhou 510275,
China}}}
\begin{document}

\maketitle

\noindent \hrule width 6.27in\vskip .3cm

\noindent {\bf{Abstract } }
We consider the circular unitary ensemble with a Fisher-Hartwig singularity of both jump type and root type at $z=1$. A rescaling of the ensemble at the Fisher-Hartwig singularity leads to the  confluent hypergeometric kernel. By studying the asymptotics of the Toeplitz determinants, we show that the probability of there being  no eigenvalues in a symmetric arc  about the singularity on the unit circle for a random matrix in the ensemble can be explicitly evaluated via an integral of the Hamiltonian of the coupled Painlev\'{e} V system in dimension four.  This leads to a Painlev\'e-type representation of the   confluent hypergeometric-kernel  determinant. Moreover, the large gap asymptotics, including the constant terms, are  derived by evaluating the total integral of the Hamiltonian. In particular, we reproduce the large gap asymptotics of the confluent hypergeometric-kernel determinant obtained by Deift, Krasovsky and Vasilevska,  and the sine-kernel determinant as a special case, including the constant term conjectured earlier by Dyson.

  \vskip .5cm
 \noindent {\it{2010 Mathematics subject classification:}} 33E17; 34M55; 41A60

\vspace{.2in} \noindent {\it {Keywords:Random matrices; Tracy-Widom distribution; Painlev\'{e} equations; Riemann-Hilbert problem; constant problem; Deift-Zhou method.}}
\noindent \hrule width 6.27in\vskip 1.3cm

\tableofcontents

\section{Introduction and statement of results} \indent\setcounter{section} {1}

One of the most celebrated results in random matrix theory is the universality of local statistics of eigenvalues for large random matrices.  Well-known universality classes  arise in the unitary invariant random matrices in the bulk, and at the soft edges and hard edges   \cite{abd, F,m}.  Accordingly, for large unitary random matrices, the eigenvalues near a regular point in the bulk, near the soft edge and the hard edge, form determinantal point processes  associated with the sine kernel, Airy kernel and Bessel kernel, respectively; see  \cite{m} and \cite[Chapter\;6]{abd}.  These three determinantal point processes are now well understood. A significant feature   is   that the gap probabilities in these determinantal point processes can be expressed in terms of the Painlev\'e transcendents  and their associated Hamiltonians; see \cite{FW-2001,FW-2002, JMMS,TW, TW2}.  The large gap asymptotics, including the non-trivial constant terms, are also worked out; see \cite{DIK08, DIKZ,diz, E06, E10, kr}.

Another basic universality class in the unitary  random matrix ensembles is the confluent hypergeometric-kernel  determinantal point process, having been used to  describe  the statistics of eigenvalues  near a Fisher-Hartwig singularity in the spectrum; see  \cite[Chapter\;6]{abd}.   For example, we consider the following  Gaussian unitary ensemble  with a Fisher-Hartwig singularity at the origin, defined by the
 joint probability density function
 \begin{equation}\label{GUE-with FH}p_n(\lambda_1,...,\lambda_n)=\frac{1}{Z_n}
\prod_{j=1}^n e^{- \lambda_j^2}|\lambda_j|^{2\alpha}\chi(\lambda_j)\prod_{j<k}(\lambda_j-\lambda_k)^2,\end{equation}
where
$Z_n$ is the normalization constant, the parameter  $\alpha>-\frac{1}{2}$, and the function
  \begin{equation*}
  \chi(x)=\left\{ \begin{array}{ll}
                e^{i\pi\beta},&  x<0, \\
               e^{-i\pi\beta},&  x>0,
        \end{array}\right.  \end{equation*}
with $i\beta\in \mathbb{R}$.  The simultaneous occurrence of a  jump type singularity as $\chi(x)$, and  a root type singularity $|x|^{2\alpha}$,  indicates a Fisher-Hartwig singularity at the origin; cf. \cite{DIK1}.
 For $\alpha\in\mathbb{N} $ and $\beta=0$, the ensemble  is a degenerate GUE of size $ n + \alpha$ with $\alpha$   eigenvalues  located at the origin, which was considered by Chen and Feigin in  \cite{cf}. If $\alpha=0$ and $i\beta\in\mathbb{R}$, the ensembles can be realized as a  conditional GUE.  More precisely,  one obtains a thinned process of eigenvalues by deleting  each of the eigenvalues of the original  GUE  independently with probability $p=e^{-2i\beta}$ for $i\beta>0$ (while $p=e^{2i\beta}$ if $i\beta<0$); see \cite{bp04,bp06}. One may interpret the remaining and removed eigenvalues
  as observed and unobserved particles, respectively.  Knowing the information that the particles observed are all negative when $i\beta>0$ (or positive if $i\beta<0$),  the eigenvalue distribution of the conditional GUE  is given  by \eqref{GUE-with FH}; see \cite{bci,cc}. The asymptotics of the partition function and the recurrence coefficients of the  orthogonal polynomials  associated with the Gaussian weight with a Fisher-Hartwig singularity at the origin have been obtained by Its and Krasovsky in \cite{ik}. The asymptotic  analysis of the orthogonal polynomials therein  implies that the scaling limit of the correlation kernel near the origin can be expressed in terms of the confluent hypergeometric kernel. For $\beta=0$, the kernel is reduced to the Bessel kernel of the first kind,   obtained earlier by Kuijlaars and Vanlessen in \cite{kv}.

  The confluent hypergeometric kernel also arises in the Dyson circular unitary ensemble with a Fisher-Hartwig singularity
  \begin{equation}\label{def:CUE-with FH}p_n(\theta_1,\cdots,\theta_n)=\frac{1}{Z_n'}
\prod_{j=1}^n |e^{i\theta_j}-1|^{2\alpha}e^{i\beta(\theta_j -\pi)}\prod_{j<k}  |e^{i\theta_j}-e^{i\theta_k}  |^2, \quad \theta_1,...,\theta_n\in(0, 2\pi), \end{equation} with real parameter $\alpha>-1/2$ and  pure imaginary parameter $\beta$.
Then, it is direct to see that the  weight function $|e^{i\theta}-1|^{2\alpha}e^{i\beta(\theta -\pi)}$ is positive for $\theta\in(0,2\pi)$ and possesses  a Fisher-Hartwig singularity of both root type and jump type at $\theta=0$. With the normalization constant $Z_n'$,    \eqref{def:CUE-with FH} is made a probability measure. We denote by $C_{t}$ an arc of the  unit circle, oriented counterclockwise, that is,
 \begin{equation}\label{def: arc}
 C_{t}=\{e^{i\theta}: t \leqslant \theta \leqslant 2\pi-t \} \quad\mbox{for}\quad 0< t<\pi,
 \end{equation}
 and define the following positive weight function $w(z;t)$ on $C_{t}$
  \begin{equation}\label{def: weight-circle}
w(z;t)=|z-1|^{2\alpha}z^{\beta}e^{-\pi i\beta},  \quad z=e^{i\theta}\in C_{t}, \end{equation}
with real parameter $\alpha>-1/2$ and  pure imaginary parameter $\beta$.
Let $D_n(t)$ denote the Toeplitz determinant with the  weight function \eqref{def: weight-circle}, namely
\begin{align}\label{def: D-n}
D_n(t)&=\det\left(\frac{1}{2\pi}\int_{0}^{2\pi}w(e^{i\theta};t)e^{-(j-k)i\theta}d\theta\right)_{j,k=0}^{n-1}\nonumber\\
&=\frac{1}{(2\pi)^nn!}\int_{0}^{2\pi}\cdots\int_{0}^{2\pi}\prod_{j=1}^n w(e^{i\theta_j};t)\prod_{j<k} | e^{i\theta_j}-e^{i\theta_k} |^2 \prod_{j=1}^nd\theta_j, \end{align}
where the second expression is known as  Heine's multiple integral representation \cite{Szego}.
Then, the probability that all eigenvalues of the random matrix in the ensemble \eqref{def:CUE-with FH} are contained in the arc $C_{t}$   can be expressed  as
\begin{equation}\label{def: gap pro-n}
\mbox{Pro}(\{\theta_j\in(t,2\pi-t): 1\leqslant j\leqslant n\})=\frac{D_n(t)}{D_n(0)}. \end{equation}
The circular ensemble  \eqref{def:CUE-with FH} was introduced by Deift, Krasovsky and Vasilevska
 in \cite{dkv} to derive the large gap asymptotics for the Fredholm determinant of the  confluent hypergeometric kernel. It was shown that the scaling limit of the correlation kernel near $z=1$ is the confluent hypergeometric kernel
 \begin{equation}\label{def: CFH-kernel}
 K^{(\alpha,\beta)}(u,v)=\frac{1}{2\pi i}\frac{\Gamma(1+\alpha+\beta)\Gamma(1+\alpha-\beta)}{\Gamma(1+2\alpha)^2}
\frac{A(u)B(v)-A(v)B(u)}{u-v}, \end{equation}
where
 \begin{equation}\label{def: A,B}
 A(x)=\chi(x)^{1/2}|2x|^{\alpha}e^{-ix}\psi(1+\alpha+\beta,1+2\alpha,2ix), \quad B(x) =\overline{A(x)}, \end{equation}
 the confluent hypergeometric function \cite{as}
 \begin{equation}\label{def: phi-CHF}
\psi(a,b,z)=1+\sum_{k=1}^{\infty}\frac{a(a+1)\cdots (a+k-1)}{b(b+1)\cdots(b+k-1)}\frac{z^k}{k!}, \end{equation}
and $\chi(x)=\left\{ \begin{array}{ll}
                e^{i\pi\beta},&\quad x<0, \\
               e^{-i\pi\beta}, &\quad x>0,
        \end{array}\right.$ as defined in \eqref{GUE-with FH}. If $\alpha=\beta=0$, the kernel is reduced to the classical sine kernel,
        \begin{equation}\label{def: sin-kernel}
 K^{(0,0)}(u,v)=K_{\sin}(u,v)=
\frac{\sin (u-v)}{\pi(u-v)} .\end{equation}
Thus, for $n$ large, the gap probability \eqref{def: gap pro-n} can be expressed in terms of the Fredholm determinant
\begin{equation}\label{def: determinant of CHF }
\lim_{n\to \infty}\frac{D_n(\frac{2t}{n})}{D_n(0)}=\det(I-K^{(\alpha,\beta)}_t), \end{equation}
where $K^{(\alpha,\beta)}_t$ is the integral operator on $L^2(-t,t)$, $t\in(0,\pi)$,  with the confluent hypergeometric kernel \eqref{def: CFH-kernel}; see \cite[Lemma 6]{dkv}.   In \cite{dkv}, the large-$n$ asymptotics of the logarithmic derivative of  the Toeplitz determinant  $D_n(t_n)$ were derived, uniformly
for the parameter $2t_0/n  < t _n<\pi$
with   $t_0$   sufficiently large. The
boundary condition for $D_n(t)$ as $t\to\pi$ was obtain by comparing it with the Hankel determinant associated with the Legendre polynomials, of which the large-$n$ asymptotics was derived before in \cite{W}.
 Therefore, asymptotic approximation  of the Toeplitz determinant,  uniformly for  $2t_0/n  < t _n<\pi$ with $t_0$ big enough, was obtained by
 integration. Then, recalling the relation \eqref{def: determinant of CHF }, the large gap asymptotics of the determinant of confluent hypergeometric kernel as $t\to \infty$ was readily worked out. The  constant term, namely the Widom-Dyson constant, was expressed in terms of the Barnes $G$-functions therein.

 Recently, the studies of the asymptotics of Toeplitz determinant  with Fisher-Hartwig singularities are of
 great interest. In \cite{DIK1}, the asymptotics of Toeplitz determinant  for a general weight function on the circle with several Fisher-Hartwig singularities were derived and   the Barnes $G$-functions were involved.  A nice review of the history of the asymptotics of Toeplitz determinants, with   applications in the Ising model,   can be found in \cite{DIK2}. In \cite{cik},  Claeys, Its and Krasovsky studied the emergence of a Fisher-Hartwig singularity for the Toeplitz determinant. They obtained transition type asymptotics of the  Toeplitz determinant between the  case with no singularity and the case with one  Fisher-Hartwig singularity. Later, the transition asymptotics as two Fisher-Hartwig  singularities merge into  one Fisher-Hartwig  singularity was considered by  Claeys and Krasovsky in \cite{ck}.  Both of the transition asymptotics are described by the Painlev\'e V transcendents with different
 boundary behaviors. Applications in two-dimensional Ising models and random matrix theory were also considered  therein.

From the relation \eqref{def: determinant of CHF },  the large-$n$ asymptotics of the  Toeplitz determinant
$D_n(\frac{2t}{n})$ defined in \eqref{def: D-n} for bounded $t$ will give  the determinant of the confluent hypergeometric kernel on $(-t,t)$. Thus one obtains in the unitary ensembles \eqref{def:CUE-with FH} the  probability that the symmetric interval around the  Fisher-Hartwig singularity is free of eigenvalues.  For $\beta=0$, the  kernel was reduced to the Bessel kernel of the first kind and the determinant can be expressed  in terms of the Hamiltonian associated with the Painlev\'e V equation; see \cite{WF}.  It was observed by Borodin and Deift  \cite{bd} that the determinant of the confluent hypergeometric kernel, restricted to one side interval $(0,t)$, should be related to a solution to the Painlev\'e V equation.
Since the confluent hypergeometric kernel is not symmetric, thus the determinant on the interval $(-t,t)$  can not be obtained from that on the one side interval $(0,t)$.  Moreover, it is interesting to note that as $\frac{2t}{n}\to 0$, in \eqref{def: weight-circle}, the two  Fisher-Hartwig singularities of jump type at $\theta=\frac{2t}{n}$ and $\theta= 2\pi-\frac{2t}{n}$ are merging together  to the root type singularity $|e^{i\theta}-1|^{2\alpha}$. The result of such a process is a new Fisher-Hartwig singularity of both jump-type and the root-type singularity in the weight function in \eqref{def:CUE-with FH}.  The transition is different from the ones considered in \cite{cik,ck} as mentioned above.  It is of interest to study the new transition asymptotics of the  Toeplitz determinant from this point of view.

More precisely, this paper is devoted to the study of the determinant of the confluent hypergeometric kernel on $(-t,t)$ both for finite $t$ and the asymptotics as $t\to\infty$. Firstly, we  derive the  large-$n$ asymptotics of the Toeplitz determinant $D_n(\frac{2t}{n})$ defined in \eqref{def: D-n} for bounded $t$.  Using the relation \eqref{def: determinant of CHF } with the Toeplitz determinant, we obtain an integral representation of the determinant in terms of the Hamiltonian of a coupled Painlev\'{e} V system. We further study the existence of solutions to the system and the asymptotics of the solutions.  Secondly,  we evaluate the total integral of the  Hamiltonian, which then gives us the  large gap asymptotics of the determinant, including the constant terms.

 The Hamiltonian  integral is  closely related to the Jimbo-Miwa-Ueno tau-function, since the logarithmic derivative of the tau-function is
 the Hamiltonian; see \cite{JMU}.  The tau-functions of Painlev\'e equations play important roles in random matrix theory, two-dimensional quantum gravity,  Ising models  and many other areas of  mathematical physics; see \cite{fikn,F,ilp,TW}.  Usually, it is a hard problem to determine the
 constant term in the asymptotics of the tau-functions.    Recently,  Its, Lisovyy, Prokhorov  and their coauthors have developed a new method to evaluate the asymptotics of the tau-functions, including the constant terms; see  \cite{ilp, ilt,ip}. A crucial observation is that
 the Hamiltonian is equal to the classical action differential up to a total differential.  A nice property of the action differential is that the derivative of it, with respect  to the parameters of the equation or the Stokes multipliers,  is again a total differential.
 Thus, the evaluation of the Hamiltonian integral is eventually reduced to the calculation of an action integral for special parameters and Stokes multipliers.
 For certain parameters and Stokes multipliers, the Hamiltonian may admit
 solutions in terms of classical special functions, and its integral can be evaluated.  Quite recently, the asymptotics of the  tau-functions for classical Painlev\'e equations are successfully  obtained by applying this method; see \cite{bip,ilt13,ilp, ilt,ip,LR}. More general Painlev\'e-type  equations are also considered. For example, in \cite{XD}, Xu and Dai obtained the asymptotics of the Hamiltonian integral of the coupled Painlev\'e II system with an application in the large gap asymptotics of the Painlev\'e II and Painlev\'e XXXIV determinants.  In the present paper, however, we find we are in a different position, to use  the approach to  evaluate   the tau-function of non-classical Painlev\'e equations, namely the  coupled Painlev\'e V system in dimension four in the list of  \cite{kns-1}.


\subsection{The coupled Painlev\'{e} V system}
To state our main results, we need to introduce the    coupled Painlev\'e V system in dimension four, which is a special 4-dimensional Painlev\'e-type
equation; see \cite{k-3,kns-1}. It is convenient to express the  coupled Painlev\'e V system
in the    following Hamiltonian form
\begin{equation}\label{int:H-equation}
  \frac{d v_k}{ds}=\frac{\partial H}{\partial u_k} ,\quad  \frac{d u_k}{ds}=-\frac{\partial H}{\partial v_k}, \quad k=1,2,\end{equation}
 where the Hamiltonian $H=H(u_1,v_1,u_2,v_2,s;\alpha,\beta)$ is defined by
\begin{equation}\label{int:H}
sH=\frac{s}{2}H_{V}(u_1,v_1,s/2; \alpha,\beta) -\frac{s}{2}H_{V}( u_2,v_2,-s/2; \alpha,\beta)  +u_1u_2(v_1+v_2)(v_1-1)(v_2-1) ,     \end{equation}
and $ H_V$ is the classical Hamiltonian for the Painlev\'e V equation, such that
\begin{equation}\label{int:H-v}
sH_{V}(u,v,s; \alpha, \beta)=u^2v(v-1)^2-suv-\alpha u (v^2-1)-\beta u (v-1)^2;\end{equation}
see \cite{FW-2002,JM}.
  In terms of $u_k$ and $v_k$, we write \eqref{int:H-equation} as  the following system of equations
 \begin{equation}\label{int:cpv}
 \left\{ \begin{array}{l}
                 s\frac{d u_1}{ds}=\frac s{2}u_1-u_1^2(v_1-1)(3v_1-1)-u_1u_2(v_2-1)(2v_1+v_2-1)+2(\alpha+\beta)u_1v_1-2\beta u_1,\\
                    s\frac{d u_2}{ds}=-\frac{s}{2}u_2-u_2^2(v_2-1)(3v_2-1)-u_1u_2(v_1-1)(v_1+2v_2-1)+2(\alpha+\beta)u_2v_2-2\beta u_2 ,\\
                   s\frac{d v_1}{ds}=-\frac{  s}{2}v_1+2u_1v_1(v_1-1)^2+u_2(v_1+v_2)(v_1-1)(v_2-1)-\alpha(v_1^2-1)-\beta (v_1-1)^2,\\
                      s\frac{d v_2}{ds}=\frac{s }{2}v_2+2u_2v_2(v_2-1)^2+u_1(v_1+v_2)(v_1-1)(v_2-1)-\alpha(v_2^2-1)-\beta (v_2-1)^2.
                                           \end{array}
\right. \end{equation}
If the parameter $\beta=0$, we have the symmetry relations
\begin{equation}\label{H-symm}
 u_2=-u_1v_1^2\quad \mbox{and} \quad v_2=1/v_1,
\end{equation}
as shown later in Section \ref{subsec:special function-sol}.
Accordingly, the  system of equations  \eqref{int:cpv} is reduced to
  \begin{equation}\label{eq:mpv}
 \left\{ \begin{array}{ll}
                 s\frac{d u_1}{ds}=\frac{1}{2}su_1-2u_1^2v_1(v_1^2-1)+2\alpha u_1v_1 ,\\
                s\frac{d v_1}{ds}=-\frac{1}{2}sv_1+u_1(v_1^2-1)^2-\alpha(v_1^2-1).
        \end{array}
\right. \end{equation}
From the second equation in  \eqref{eq:mpv}, we see that $u_1$ can be expressed in terms of $v_1$ and $v_1'$. Using this expression to delete $u_1$ from the system, we find the second order nonlinear equation for $v_1$
 \begin{equation}\label{eq:mpv-v-1}
 \frac{d^2v_1}{ds^2}-\frac{2v_1}{v_1^2-1}\left(\frac{dv_1}{ds}\right)^2+\frac{1}{s}\frac{dv_1}{ds}+\frac{v_1}{2s}+\frac{v_1(v_1^2+1)}{4(v_1^2-1)}+\alpha\frac{v_1^2+1}{2s}=0.\end{equation}
Applying  the Mobius transformation $v=\frac{v_1+1}{v_1-1}$ to \eqref{eq:mpv-v-1}, we obtain the Painlev\'{e} III equation
 \begin{equation}\label{eq:piii}
\frac{d^2v}{ds^2}-\frac{1}{v}\left(\frac{dv}{ds}\right)^2+\frac{1}{s}\frac{dv}{ds}-\frac{1}{2s}
\left(\left(\alpha+\frac{1}{2}\right )v^2+\left(\alpha-\frac{1}{2}\right )\right)-\frac{1}{16}v^3+\frac{1}{16v}=0.
  \end{equation}

{\begin{rem}\label{Rem:H-special par} If the parameters $\alpha=\beta=0$,  we obtain the Painlev\'e V  equation from \eqref{eq:mpv-v-1}
 after the transformation $q=v_1^2$, that is,
 \begin{equation}\label{eq:pv}
 \frac{d^2q}{ds^2}-\left (\frac{1}{q-1}+\frac{1}{2q}\right )\left(\frac{d q}{ds}\right)^2+\frac{1}{s}\frac{d q}{ds}+
\frac{q}{s}+\frac{q(q+1)}{2(q-1)}=0.
\end{equation}
Moreover, using the symmetry \eqref{H-symm}, the Hamiltonian \eqref{int:H} is reduced to the classical Hamiltonian for the Painlev\'e V equation after elementary transformations,
\begin{equation}\label{H-special parameter}
sH(u_1,v_1,u_2,v_2,s;0,0)=u_1^2(v_1^2-1)^2-su_1v_1=p^2q(q-1)^2-spq,
\end{equation}
where $q=v_1^2$ and $p=u_1/v_1$.  Denote
\begin{equation}\label{eq: sigma-PV}
\sigma(s)=sH(u_1,v_1,u_2,v_2,s;0,0),
\end{equation}
we get  from the Hamilton equations $q'=\frac{\partial H}{\partial p}$ and $p'=-\frac{\partial H}{\partial q}$ the Jimbo-Miwa-Okamoto $\sigma$-form of the Painlev\'e V equation
\begin{equation}\label{equation-sigma}
\left(s\sigma''\right)^2-\left(\sigma-s\sigma'+4\sigma'^2\right)\left(\sigma-s\sigma'\right)=0;
\end{equation}
see \cite{FW-2002, JM}.
\end{rem}}

Quite recently, the coupled Painlev\'e systems have found interesting applications in random matrix theory. For example, a different coupled Painlev\'e V system was used in the studies  of the Bessel process   and  the distribution of the ratio between the second smallest and the smallest eigenvalue in the Laguerre unitary ensemble \cite{acz,cd}. The coupled Painlev\'e II system was applied  in the studies of the Airy point process and the determinants of the Painlev\'e II and Painlev\'e  XXXIV kernels  \cite{cld,XD}.  Moreover, the coupled Painlev\'e III system was involved in the Painlev\'e-type representation of the determinants of a Painlev\'e III kernel, which arose  in the studies of a singularly perturbed Laguerre unitary ensemble \cite{dxz, xdz}.

We have the following existence and asymptotic results for the coupled Painlev\'e V system \eqref{int:cpv}. The solutions play a central role in our
integral representation of the gap probability.
  \begin{thm}\label{thm-asy}  For pure imaginary parameter $\beta$ and $\alpha>-1/2$,  there exist solutions to the coupled Painlev\'e V system \eqref{int:cpv}
  with the following asymptotic behavior
 \begin{equation}\label{thm: u-1-asy-small s}
u_1(s)=-\frac{\Gamma(1+\alpha+\beta)\Gamma(1+\alpha-\beta)}{2\pi i\Gamma(1+2\alpha)^2}e^{-\pi i\beta}\left(\frac {|s|}2\right )^{2\alpha}\left[1+O\left(s^{2\alpha+1}\right )+O(s)\right], \quad is\to0^+,
\end{equation}
\begin{equation}\label{thm: v-1-asy-small s}
v_1(s)=1+O\left (s^{2\alpha+1}\right )+O(s),  \quad is\to 0^+,
\end{equation}
\begin{equation}\label{thm: u-2-asy-small s}
u_2(s)=\frac{\Gamma(1+\alpha+\beta)\Gamma(1+\alpha-\beta)}{2\pi i\Gamma(1+2\alpha)^2}e^{\pi i\beta}\left (\frac{|s|}2\right )^{2\alpha}\left[1+O\left (s^{2\alpha+1}\right )+O(s)\right], \quad is\to 0^+,
\end{equation}
\begin{equation}\label{thm: v-2-asy-small s}
v_2(s)=1+O\left (s^{2\alpha+1}\right )+O(s),  \quad is\to 0^+,
\end{equation}
\begin{equation}\label{thm: u-1-asy-large s}
u_1(s)=\frac{i}{8}s\left(1+O\left(  1/s\right)\right ), \quad is\to+\infty,
\end{equation}
\begin{equation}\label{thm: v-1-asy-large s}
v_1(s)=i+O(1/s), \quad is\to+\infty,
\end{equation}
\begin{equation}\label{thm: u-2-asy-large s}
u_2(s)=\frac{i}{8}s(1+O(1/s)), \quad is\to+\infty,
\end{equation}
\begin{equation}\label{thm: v-2-asy-large s}
v_2(s)=-i+O(1/s), \quad is\to+\infty.
\end{equation}
Moreover, the Hamiltonian $H(s;\alpha,\beta)=H(u_1,v_1,u_2,v_2,s;\alpha,\beta)$, associated with these $u_k(s)$  and $v_k(s)$,   is  pole-free
  for $ s\in -i(0, +\infty)$  and has the asymptotic behavior
 \begin{equation}\label{thm: H-asy-small s}
 H(s;\alpha,\beta)=\frac{\Gamma(1+\alpha+\beta)\Gamma(1+\alpha-\beta)\cos(\pi \beta)}{i\pi 2^{2\alpha+1}(2\alpha+1)\Gamma(1+2\alpha)^2}|s|^{2\alpha}+O\left (s^{2\alpha+1}\right ), \quad is\to 0^+,
 \end{equation}
 and
 \begin{equation}\label{thm: H-asy-large s}
H(s;\alpha,\beta)=\frac{s}{16}+\frac{i}{2}\alpha-\left(\alpha^2-\beta^2+\frac{1}{4}\right)\frac{1}{s}+O\left(\frac 1{s^2}\right), \quad is\to+\infty.
\end{equation}
\end{thm}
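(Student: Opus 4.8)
The plan is to construct the solutions of the coupled Painlevé V system \eqref{int:cpv} via a Riemann--Hilbert (RH) analysis, rather than by direct ODE arguments. The natural starting point is the model RH problem associated with the confluent hypergeometric kernel on the symmetric interval $(-t,t)$: a $2\times 2$ (or, after unfolding the symmetry, a larger) jump matrix built from \eqref{def: CFH-kernel}--\eqref{def: A,B}, with $s$ essentially $2t$ (or $is$ running over $(0,\infty)$). Differentiating the RH solution with respect to the spectral variable $\lambda$ and with respect to $s$ produces a Lax pair; the compatibility condition is exactly the coupled Painlevé V system, and the entries of the residue/connection matrices of the RH solution give the functions $u_1,v_1,u_2,v_2$. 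This identification is where the ``dimension four'' and the particular Hamiltonian \eqref{int:H} come from: the two Painlevé V blocks $H_V(u_1,v_1,s/2)$ and $H_V(u_2,v_2,-s/2)$ correspond to the two endpoints $\pm t$ of the interval, and the coupling term $u_1u_2(v_1+v_2)(v_1-1)(v_2-1)$ records the interaction between the two local parametrices. I would first set up this RH problem carefully, verify solvability for $is\in(0,\infty)$ (using that $\beta$ is purely imaginary so the weight is positive and the operator $I-K^{(\alpha,\beta)}_t$ has no kernel — this gives the pole-freeness of $H(s;\alpha,\beta)$ claimed in the theorem), and extract the Lax-pair formulas expressing $u_k,v_k,H$ in terms of the RH data.

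Next I would derive the small-$s$ asymptotics \eqref{thm: u-1-asy-small s}--\eqref{thm: v-2-asy-small s} and \eqref{thm: H-asy-small s}. As $is\to0^+$ the interval shrinks, $K^{(\alpha,\beta)}_t\to0$ in trace norm, and the RH solution is a small perturbation of the identity; the leading correction is governed by a single explicit integral of $K^{(\alpha,\beta)}(u,v)$ against constants. Expanding $A(x),B(x)$ in \eqref{def: A,B} for small argument — using $\psi(1+\alpha+\beta,1+2\alpha,2ix)=1+O(x)$ and the prefactor $|2x|^\alpha$ — one reads off the $(|s|/2)^{2\alpha}$ behavior and the constant $\frac{\Gamma(1+\alpha+\beta)\Gamma(1+\alpha-\beta)}{2\pi i\,\Gamma(1+2\alpha)^2}$, together with the phases $e^{\mp\pi i\beta}$ from $\chi(x)^{1/2}$ evaluated on the two sides of the origin; the two error scales $O(s^{2\alpha+1})$ and $O(s)$ come respectively from the next term of $|2x|^{2\alpha}\cdot x$ and from the linear term of $\psi$. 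Substituting these into the Hamiltonian \eqref{int:H} and noting the coupling term is $O(s^{4\alpha+1})$, hence subleading (for $\alpha>-1/2$, after checking the borderline), yields \eqref{thm: H-asy-small s} with the $\cos(\pi\beta)$ arising as $\tfrac12(e^{\pi i\beta}+e^{-\pi i\beta})$ from adding the $u_1$ and $u_2$ contributions.

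For the large-$s$ asymptotics \eqref{thm: u-1-asy-large s}--\eqref{thm: H-asy-large s} I would perform a Deift--Zhou steepest-descent analysis of the same RH problem as $is\to+\infty$: rescale the spectral variable by $s$, introduce the $g$-function/phase $\lambda\mapsto\lambda$ (the sine-kernel phase), open lenses, and build a global parametrix plus confluent-hypergeometric local parametrices at $\pm1$ (the rescaled endpoints). Matching the local and global parametrices on the matching circles gives the RH solution up to $1+O(1/s)$, and reading off the residue data produces $v_1\to i$, $v_2\to -i$, $u_1,u_2\sim \tfrac i8 s$ — the values $\pm i$ are exactly the fixed points of the large-$s$ flow of \eqref{int:cpv}, which one can cross-check directly by balancing the $s$-linear terms in the system. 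Feeding $u_k\sim\tfrac i8 s$, $v_1\sim i$, $v_2\sim -i$ into \eqref{int:H}: the leading $s^2/16$ terms from the two $H_V$ blocks combine with the coupling term; using $sH_V(u,v,s/2)\sim \tfrac s2\cdot\text{(stuff)}$ and expanding to the first two subleading orders gives $H\sim s/16 + \tfrac i2\alpha - (\alpha^2-\beta^2+\tfrac14)/s$. The $\alpha$- and $\beta$-dependent coefficients require the $O(1/s)$ terms in the local-parametrix matching, which is the delicate bookkeeping step.

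The main obstacle I anticipate is precisely this last point: getting the constant $\tfrac i2\alpha$ and the $1/s$-coefficient $-(\alpha^2-\beta^2+\tfrac14)$ correct demands tracking the subleading terms of the confluent-hypergeometric parametrix through the matching, and then substituting consistently into the nonlinear Hamiltonian \eqref{int:H} where cancellations of the leading $O(s^2)$ and $O(s)$ pieces must be verified exactly — a single sign or factor of $2$ error propagates. A secondary difficulty is the uniform control needed to conclude $H(s;\alpha,\beta)$ is genuinely pole-free on all of $-i(0,+\infty)$ and that the constructed $u_k,v_k$ extend to a global solution interpolating the two asymptotic regimes; here I would lean on the solvability of $I-K^{(\alpha,\beta)}_t$ for every $t\in(0,\pi)$ together with the standard vanishing-lemma argument for the associated RH problem, and on analyticity of the RH solution in $s$ to rule out movable singularities.
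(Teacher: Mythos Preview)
Your overall strategy matches the paper's: set up a model RH problem, extract the Lax pair whose compatibility is the coupled Painlev\'e V system, prove solvability on $-i(0,\infty)$ via a vanishing lemma (giving pole-freeness of $H$), and run Deift--Zhou analysis separately in the small-$s$ and large-$s$ regimes. The paper works not from the IIKS problem for the kernel directly but from an equivalent model problem $\Psi(\zeta;s)$ with constant jumps on a fixed contour and three finite singularities at $\zeta=0,\pm1$; this makes the Lax-pair derivation immediate since the jumps are $s$-independent.

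One concrete correction to your large-$s$ plan: the local parametrices at the rescaled endpoints $\pm 1$ are built from the \emph{Bessel} model, not the confluent hypergeometric one. In the $\Psi$-problem the singularities at $\zeta=\pm1$ are logarithmic (hard edges), so the local model there is the standard Bessel RH problem $\Phi_B$. The Fisher--Hartwig behaviour (the $\alpha,\beta$ data) sits at $\zeta=0$ and in the large-$s$ regime is absorbed into the \emph{outer} parametrix through a Szeg\H{o}-type function $D(\zeta)=\bigl((-i+\sqrt{\zeta^2-1})/\zeta\bigr)^{\alpha}$ together with a $\varphi(\zeta)^{-\beta}$ factor; there is no local parametrix at $0$. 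Conversely, in the small-$s$ regime the confluent hypergeometric model $\Phi$ appears as the \emph{outer} parametrix (after rescaling $\xi=|s|\zeta/2$), with a scalar Cauchy-integral local parametrix near the origin handling the shrinking interval $(-|s|/2,|s|/2)$. The delicate subleading constants $\tfrac{i}{2}\alpha$ and $-(\alpha^2-\beta^2+\tfrac14)$ that you flag come precisely from the expansion of $D(\zeta)$ at infinity and the first correction term $\tfrac{1}{8\sqrt{\zeta}}\left(\begin{smallmatrix}-1&-2i\\-2i&1\end{smallmatrix}\right)$ in the Bessel matching, so your bookkeeping concern is well placed but the source of the constants is different from what you anticipate.
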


{\begin{rem}\label{Rem:special solution} For the parameters $\alpha=1/2 $ and $\beta=0$,   the coupled Painlev\'e V system \eqref{int:cpv} admits solutions in terms of classical special functions; see Section \ref{subsec:special function-sol} for details. In particular, we have
 \begin{equation}\label{rem: H-Bessel solution}
H(u_1,v_1,u_2,v_2,s;1/2,0)=\frac{s}{16}+\frac{i}{4}\frac{I_0'(is/4)}{I_0(is/4)},
\end{equation}
where $I_0(s)$ is the modified Bessel function of order zero.  It is noted that for this special solution, the integral of the Hamiltonian can be easily evaluated.
\end{rem}}
\subsection{Gap probability and Painlev\'e-type formulas}
 The  gap probability for the eigenvalues of
  the circular unitary ensemble  with a Fisher-Hartwig singularity can be expressed in terms of the Toeplitz determinant associated with the weight function  \eqref{def: weight-circle}.
Via deriving  the asymptotics of the Toeplitz determinant,  we give an explicit expression  for the gap probability as an integral of the Hamiltonian to  the coupled Painlev\'{e} V system in dimension four.
\begin{thm}\label{thm-gap-probability}  For pure imaginary parameter $\beta$ and $\alpha>-1/2$, as $n\to \infty$ and $t\to 0^{+}$ in a way such that $nt$ is bounded, we have the asymptotic approximation of the Toeplitz determinant associated with the weight function \eqref{def: weight-circle}
\begin{equation}\label{eq:D-n-approx}
\frac{D_n(t)}{D_n(0)}=\exp\left(\int_0^{-2int}H(\tau;\alpha,\beta)d\tau+O\left (\frac 1 n\right )\right),\end{equation}
where $H(s;\alpha,\beta)$ is the Hamiltonian for the coupled Painlev\'e V system with the properties specified in Theorem \ref{thm-asy} and the error term is uniform for $nt$  bounded.
\end{thm}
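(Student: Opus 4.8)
\medskip
\noindent\textbf{Proof proposal.} The plan is to combine a differential identity for $\frac{d}{dt}\log D_n(t)$ with a Deift--Zhou steepest descent analysis of the Riemann--Hilbert problem for the orthogonal polynomials on the arc $C_t$ with the weight \eqref{def: weight-circle}, in which the three coalescing points $e^{it},e^{-it}$ and $z=1$ are resolved by a single local parametrix built from the model function that generates the coupled Painlev\'e V system \eqref{int:cpv}. Let $Y=Y_n(z;t)$ be the standard $2\times2$ solution, analytic off $C_t$, with jump $Y_+=Y_-\left(\begin{smallmatrix}1 & z^{-n}w(z;t)\\ 0 & 1\end{smallmatrix}\right)$ on $C_t$, normalization $Y(z)=(I+O(1/z))z^{n\sigma_3}$ as $z\to\infty$ ($\sigma_3=\mathrm{diag}(1,-1)$), and the appropriate hard-edge behaviour at $e^{\pm it}$ and Fisher--Hartwig behaviour at $z=1$. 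Since the weight in \eqref{def: D-n} is $t$-independent apart from its support $C_t=\{e^{i\theta}:t\le\theta\le2\pi-t\}$, differentiating Heine's multiple integral and using the exchangeability of $\theta_1,\dots,\theta_n$ gives
\[
\frac{d}{dt}\log D_n(t)=-\frac{1}{2\pi}\Bigl(w(e^{it};t)\,\widehat K_n(e^{it})+w(e^{-it};t)\,\widehat K_n(e^{-it})\Bigr),
\]
with $\widehat K_n$ the Christoffel--Darboux kernel of the monic orthogonal polynomials on $C_t$ evaluated on the diagonal; expressed through $Y$ and $Y'$, the right-hand side becomes an explicit rational combination of the entries of $Y(e^{\pm it})$, $Y'(e^{\pm it})$ and $w$, and this is the identity I would carry through the asymptotic analysis.

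Next I would run the steepest descent: a $g$-function normalization $Y\mapsto T$ for the equilibrium measure of $C_t$ (whose density blows up like an inverse square root at $e^{\pm it}$), the lens opening $T\mapsto S$, a global parametrix $N$ built from the Szeg\H{o} function of $|z-1|^{2\alpha}z^\beta$ on the arc, and --- the essential new ingredient --- a single local parametrix $P$ on a fixed disc $U\ni1$. Because $nt$ stays bounded, $e^{\pm it}$ lie at distance $O(1/n)$ from the Fisher--Hartwig point $z=1$, so instead of three separate parametrices I would use one, built from a conformal map $\zeta(z)$ with $\zeta(1)=0$ and the solution $\Psi(\zeta;s)$ of the Riemann--Hilbert problem attached to the coupled Painlev\'e V system, with time $s=-2int$. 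The large-$s$ asymptotics \eqref{thm: u-1-asy-large s}--\eqref{thm: v-2-asy-large s} and the pole-freeness of $H$ on $-i(0,+\infty)$ from Theorem \ref{thm-asy} guarantee that $P$ matches $N$ on $\partial U$ up to $O(1/n)$, uniformly for $nt$ in a bounded set --- including the confluent limit $s\to0^+$, where $\Psi$ degenerates to the confluent hypergeometric parametrix --- so that a small-norm argument yields $R:=SP^{-1}=I+O(1/n)$ on $U$, uniformly.

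Substituting $Y=RP$ near $z=1$ into the differential identity, the factor $R$ contributes $O(1/n)$ relative to the leading term, while $P$ --- hence $\Psi$ --- contributes a quantity that, by the Jimbo--Miwa--Ueno mechanism for isomonodromic local parametrices (cf. \cite{XD,ilp}), equals the logarithmic $s$-derivative of the tau function of the model problem, i.e.\ $H(s;\alpha,\beta)$, the Hamiltonian of the coupled Painlev\'e V system. Since $s=-2int$ gives $\frac{d}{dt}=-2in\,\frac{\partial}{\partial s}$, this yields $\frac{d}{dt}\log\frac{D_n(t)}{D_n(0)}=-2in\,H(-2int;\alpha,\beta)+\mathcal E_n(t)$ with $\int_0^t|\mathcal E_n(t')|\,dt'=O(1/n)$ uniformly for $nt$ bounded; the error is integrable at $t=0$ because \eqref{thm: H-asy-small s} gives $H(s)=O(|s|^{2\alpha})$ with $2\alpha>-1$, and $D_n(t)\to D_n(0)$ as $t\to0^+$. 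Integrating from $0$ to $t$ and substituting $\tau=-2int'$ then gives $\log\frac{D_n(t)}{D_n(0)}=\int_0^{-2int}H(\tau;\alpha,\beta)\,d\tau+O(1/n)$; no constant problem appears, since we work with the ratio $D_n(t)/D_n(0)$.

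The step I expect to be the main obstacle is the construction of the merged local parametrix $P$ together with the uniform estimate $R=I+O(1/n)$ as $nt$ ranges over a bounded set --- in particular through the confluence $t\to0^+$ ($s\to0$), where the two inverse-square-root endpoints of the arc collide with the root-and-jump Fisher--Hartwig singularity at $z=1$. This forces the asymptotic and pole-free properties of $\Psi$ --- equivalently, of the coupled Painlev\'e V solutions of Theorem \ref{thm-asy} --- to hold uniformly down to $s=0$, and requires a careful matching of the singular structure of $\Psi$ at $\zeta=0$ with the global parametrix $N$ on $\partial U$. Everything else --- the differential identity, the error bookkeeping, and the change of variables --- is routine once this uniform Riemann--Hilbert analysis is available.
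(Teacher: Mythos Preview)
Your overall strategy matches the paper's: a differential identity for $\frac{d}{dt}\log D_n(t)$ in terms of $Y^{-1}Y'$ at the arc endpoints (Lemma~\ref{lem:differential identity}), Deift--Zhou analysis of the $Y$-problem with a single merged local parametrix near $z=1$ built from the model function $\Psi(\zeta;s)$ at $s=-2int$, identification of the leading contribution with the Hamiltonian, and integration in $t$. Two points, however, are mis-specified and would cause trouble if followed literally.

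First, the $g$-function. You propose to normalize with the equilibrium measure of the arc $C_t$, whose density has inverse-square-root singularities at $e^{\pm it}$. That is the $g$-function for the \emph{large}-gap regime ($t$ fixed, $n\to\infty$), and it introduces endpoint singularities that collide with $z=1$ as $t\to0$. In the present regime ($nt$ bounded) the arc is almost the full circle, and the paper uses the trivial normalization $T(z)=Y(z)z^{-n\sigma_3}$ for $|z|>1$; the global parametrix $N$ is then the explicit Szeg\H{o}-function parametrix \eqref{eq:N-solution} on the whole circle. Using the arc equilibrium measure here would entangle the construction of the merged local parametrix with spurious moving singularities.

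Second, and more important, the matching $P^{(1)}=(I+O(1/n))N$ on $\partial U(1,r)$ does \emph{not} come from the large-$s$ asymptotics \eqref{thm: u-1-asy-large s}--\eqref{thm: v-2-asy-large s}: since $s=-2int$ is bounded, those are irrelevant here. The matching comes from the large-$\zeta$ expansion \eqref{eq:Psi-infty} of $\Psi(\zeta;s)$, because on $\partial U(1,r)$ the conformal variable $\zeta=f(z)=-it^{-1}\ln z$ is of order $1/t\sim n$. What you actually need, uniformly for bounded $s$ down to $s=0$, is solvability of the model RH problem and validity of \eqref{eq:Psi-infty}; this is supplied by Proposition~\ref{pro: existence of solution} and the small-$s$ analysis of Section~\ref{sec:small-s}, not by Theorem~\ref{thm-asy}'s large-$s$ statements.

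Finally, the identification with $H$ is not a black-box ``Jimbo--Miwa--Ueno'' step. Substituting \eqref{eq:parametrix} into the differential identity produces the combination of $(\Psi^{(1)}_1)_{21}$ and $(\Psi^{(2)}_1)_{21}$ from the local expansions of $\Psi$ at $\zeta=\pm1$; one then needs the explicit relations \eqref{eq:Psi-1-diff} and \eqref{eq:H-der} to show this equals $2sH(s)+c$, and the small-$s$ asymptotics \eqref{thm: H-asy-small s} and \eqref{eq:Psi-1-1} to fix $c=0$. This concrete computation is what your appeal to the isomonodromic tau function stands in for.
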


It then follows from \eqref{def: gap pro-n} and  Theorem \ref{thm-gap-probability}  that we have the Painlev\'e-type representation of the gap probability for  the circular ensemble with a  Fisher-Hartwig singularity \eqref{def:CUE-with FH}  as $n\to\infty$
\begin{equation*}
\mbox{Pro}(\{\theta_j\in(t,2\pi-t): 1\leqslant j\leqslant n\})=\exp\left(\int_0^{-2int}H(\tau;\alpha,\beta)d\tau+O\left (\frac 1 n\right )\right). \end{equation*}
The gap probability can also be expressed
in terms of the Fredholm determinant of the confluent hypergeometric kernel, as given in
\eqref{def: determinant of CHF }, such that
\begin{equation*}
\lim_{n\to\infty}\frac{D_n(\frac{2s}{n})}{D_n(0)}= \det\left(I-K_s^{(\alpha,\beta)}\right ),
\end{equation*}
 where $K_s^{(\alpha,\beta)}$ is the operator with the confluent hypergeometric kernel acting on $L^2(-s,s)$; see \cite{dkv}.
As a direct application of the above theorem, we obtain an explicit representation of the   determinant of the confluent hypergeometric kernel.
 \begin{thm}\label{thm: gap pro}   For pure imaginary parameter $\beta$ and $\alpha>-1/2$, let $K_s^{(\alpha,\beta)}$ be
  the operator with the confluent hypergeometric kernel $K^{(\alpha,\beta)}(u,v)$ in \eqref{def: CFH-kernel} acting on $L^2(-s,s)$, $s>0$, we have
\begin{equation}\label{eq:gap pro}
\det\left(I-K_s^{(\alpha,\beta)}\right )=\exp\left(\int_0^{-4is}H(\tau;\alpha,\beta)d\tau\right),\end{equation}
where $H(s;\alpha,\beta)$ is the Hamiltonian for the coupled Painlev\'e V system with the properties specified in Theorem \ref{thm-asy}.
\end{thm}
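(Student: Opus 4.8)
The plan is to obtain Theorem \ref{thm: gap pro} as a direct consequence of Theorem \ref{thm-gap-probability} together with the scaling limit \eqref{def: determinant of CHF }, which is \cite[Lemma 6]{dkv}. First I would specialize the double scaling regime of Theorem \ref{thm-gap-probability} by choosing $t=t_n=\frac{2s}{n}$ with $s>0$ fixed; then $nt_n=2s$ is bounded, so the hypotheses of Theorem \ref{thm-gap-probability} are met, and since the upper limit $-2int$ in \eqref{eq:D-n-approx} becomes $-2in\cdot\frac{2s}{n}=-4is$, we get
\begin{equation*}
\frac{D_n\!\left(\tfrac{2s}{n}\right)}{D_n(0)}=\exp\left(\int_0^{-4is}H(\tau;\alpha,\beta)\,d\tau+O\!\left(\tfrac1n\right)\right),
\end{equation*}
the error being uniform in the stated range and in particular for each fixed $s$.

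Next I would let $n\to\infty$. By \eqref{def: determinant of CHF } the left-hand side converges to $\det\!\left(I-K_s^{(\alpha,\beta)}\right)$, while on the right-hand side the $O(1/n)$ term disappears, yielding \eqref{eq:gap pro}. Before passing to the limit one must check that the right-hand side is a well-defined, finite quantity: the contour is the segment of the negative imaginary axis from $0$ to $-4is$, and by Theorem \ref{thm-asy} the Hamiltonian $H(\tau;\alpha,\beta)$ is pole-free on $-i(0,+\infty)$, while the small-argument estimate \eqref{thm: H-asy-small s} gives $H(\tau;\alpha,\beta)=O(|\tau|^{2\alpha})$ with $2\alpha>-1$, so the integral converges absolutely at the endpoint $\tau=0$. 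One should also recall the elementary fact that the confluent hypergeometric kernel \eqref{def: CFH-kernel} is smooth on $[-s,s]\times[-s,s]$, hence $K_s^{(\alpha,\beta)}$ is trace class and its Fredholm determinant is well defined and continuous in $s$, so the limit on the left genuinely exists.

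Since essentially all the analytic work is already packaged in Theorems \ref{thm-asy} and \ref{thm-gap-probability}, there is no serious obstacle in this final step; the only point requiring care is that the error estimate in Theorem \ref{thm-gap-probability} be uniform for $nt$ bounded, which is exactly what permits taking $n\to\infty$ at fixed $nt_n=2s$ and discarding the $O(1/n)$ correction. An alternative, more self-contained route would be to perform a Riemann--Hilbert analysis of the integrable operator $K_s^{(\alpha,\beta)}$ directly and identify its resolvent with the coupled Painlev\'e V data of Theorem \ref{thm-asy}; but given that the preceding theorems are available, deducing \eqref{eq:gap pro} as a corollary is both shorter and cleaner, and I would present it that way.
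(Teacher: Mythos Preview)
Your proposal is correct and matches the paper's own argument: the paper presents Theorem~\ref{thm: gap pro} precisely as a ``direct application'' of Theorem~\ref{thm-gap-probability} combined with the scaling limit \eqref{def: determinant of CHF } from \cite{dkv}, with no further work. One small inaccuracy worth fixing: the kernel $K^{(\alpha,\beta)}(u,v)$ is \emph{not} smooth on $[-s,s]^2$ in general (the factors $\chi(x)^{1/2}$ and $|2x|^{\alpha}$ in \eqref{def: A,B} are singular at $x=0$), so your justification of the trace-class property should instead appeal to the integrable-kernel structure or simply to \cite[Lemma~6]{dkv}, which already guarantees that the Fredholm determinant exists and equals the limit.
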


For the special parameters $\alpha=\beta=0$,  the confluent hypergeometric kernel is reduced to the classical sine kernel  \eqref{def: sin-kernel}.  Applying Theorem \ref{thm-asy}, Theorem \ref{thm: gap pro}
and  Remark \ref{Rem:H-special par}, we recover the following  Painlev\'e V representation of the determinant of the sine kernel, which was obtained
earlier by Jimbo, Miwa, M\^ori and Sato \cite{JMMS}.
\begin{cor}\label{cor:gap-pro}   Let $K_s^{(0,0)}$ be
  the operator with sine kernel  \eqref{def: sin-kernel} acting on $L^2(-s,s)$, $s>0$, we have
\begin{equation}\label{eq:gap pro-sine-sigma}
\det\left(I-K_s^{(0,0)}\right )=\exp\left(\int_0^{s}\frac{\sigma_V(\tau)}{\tau}d\tau\right),\end{equation}
where $\sigma_V(s)$ is the solution of the equation
\begin{equation}\label{equation-sigma-V}
\left(s\sigma_V''\right)^2+4\left(4\sigma_V-4s\sigma_V'-{\sigma_V'}^2\right)\left(\sigma_V-s\sigma_V'\right)=0,
\end{equation}
characterized by the asymptotic behavior
\begin{equation}\label{asy: sigma-V}
\sigma_V(s)\sim -\frac{2}{\pi}s, \quad s\to 0^+ \quad  \mbox{and} \quad \sigma_V(s)\sim -s^2-\frac{1}{4}, \quad s\to +\infty.
\end{equation}

\end{cor}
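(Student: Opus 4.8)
The plan is to deduce the corollary directly from Theorem~\ref{thm: gap pro} and Remark~\ref{Rem:H-special par} by specializing the parameters and performing a linear change of variables that turns the contour along the negative imaginary axis into the positive real half-line. Setting $\alpha=\beta=0$ in \eqref{eq:gap pro} gives
\[
\det\left(I-K_s^{(0,0)}\right)=\exp\left(\int_0^{-4is}H(\tau;0,0)\,d\tau\right),
\]
and by the definition \eqref{eq: sigma-PV} the integrand equals $H(\tau;0,0)=\sigma(\tau)/\tau$, where $\sigma$ solves the Jimbo--Miwa--Okamoto $\sigma$-form \eqref{equation-sigma} of Painlev\'e~V by Remark~\ref{Rem:H-special par}. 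By Theorem~\ref{thm-asy}, $H(\cdot;0,0)$ is pole-free on $-i(0,+\infty)$, so $\sigma$ is holomorphic along the segment $[0,-4is]$ and the integral is well defined; moreover $\sigma(\tau)/\tau\to\frac1{2\pi i}$ as $i\tau\to0^+$ by \eqref{thm: H-asy-small s}, so the integrand is integrable at the origin.

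Next I would substitute $\tau=-4i\rho$, so that $d\tau=-4i\,d\rho$ and $\rho$ runs over $[0,s]$ on the positive real axis as $\tau$ runs over $[0,-4is]$; this yields
\[
\int_0^{-4is}\frac{\sigma(\tau)}{\tau}\,d\tau=\int_0^{s}\frac{\sigma(-4i\rho)}{\rho}\,d\rho=\int_0^{s}\frac{\sigma_V(\rho)}{\rho}\,d\rho,\qquad \sigma_V(\rho):=\sigma(-4i\rho).
\]
It then remains to check that $\sigma_V$ has exactly the properties claimed. For the ODE one applies the chain rule: with $s=-4i\rho$ one has $\sigma'(s)=\tfrac i4\sigma_V'(\rho)$ and $\sigma''(s)=-\tfrac1{16}\sigma_V''(\rho)$, whence $s\sigma'(s)=\rho\sigma_V'(\rho)$, $4\sigma'(s)^2=-\tfrac14\sigma_V'(\rho)^2$, and $(s\sigma''(s))^2=-\tfrac{\rho^2}{16}\sigma_V''(\rho)^2$; substituting these into \eqref{equation-sigma} and multiplying by $-16$ produces precisely \eqref{equation-sigma-V} with $\rho$ in place of $s$. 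For the boundary behaviour, \eqref{thm: H-asy-small s} with $\alpha=\beta=0$ gives $\sigma(\tau)=\frac{\tau}{2\pi i}+O(\tau^2)$ as $i\tau\to0^+$, hence $\sigma_V(\rho)=-\frac{2}{\pi}\rho+O(\rho^2)$ as $\rho\to0^+$; and \eqref{thm: H-asy-large s} with $\alpha=\beta=0$ gives $\sigma(\tau)=\frac{\tau^2}{16}-\frac14+O(1/\tau)$ as $i\tau\to+\infty$, hence $\sigma_V(\rho)=-\rho^2-\frac14+O(1/\rho)$ as $\rho\to+\infty$, which are exactly the two asymptotics in \eqref{asy: sigma-V}. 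Combining the two displayed identities gives \eqref{eq:gap pro-sine-sigma}; and since \eqref{equation-sigma-V} together with the asymptotics \eqref{asy: sigma-V} characterizes its solution uniquely, $\sigma_V$ coincides with the function of Jimbo, Miwa, M\^ori and Sato.

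The argument is essentially a bookkeeping exercise, so there is no serious obstacle; the two points that require a little care are the validity of the complex change of variables $\tau=-4i\rho$ — which rests on the pole-freeness of $H(\cdot;0,0)$ on $-i(0,+\infty)$ and the integrability of $\sigma(\tau)/\tau$ at the origin, both furnished by Theorem~\ref{thm-asy} — and keeping the powers of $i$ and the numerical constants straight when transforming \eqref{equation-sigma} into \eqref{equation-sigma-V} and when reading off the asymptotics of $\sigma_V$. No genuinely new analytic input is needed beyond Theorems~\ref{thm-asy} and~\ref{thm: gap pro} and Remark~\ref{Rem:H-special par}.
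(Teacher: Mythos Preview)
Your proposal is correct and follows exactly the route indicated by the paper, which simply states that the corollary is obtained by applying Theorem~\ref{thm-asy}, Theorem~\ref{thm: gap pro} and Remark~\ref{Rem:H-special par}; you have faithfully spelled out the change of variables $\tau=-4i\rho$ and the verification of \eqref{equation-sigma-V} and \eqref{asy: sigma-V} that the paper leaves implicit.
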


{\begin{rem}\label{Rem:evaluation}
In view of Remark \ref{Rem:special solution} and Theorem \ref{thm: gap pro},  for the special parameters $\alpha=1/2$ and $\beta=0$, the determinant $K^{(\alpha,\beta)}_s$ can be evaluated explicitly to give
\begin{equation}\label{initial-D}
\det\left(I-K^{(1/2,\;0)}_s\right )=e^{- {s^2}/{2}}I_0(s),
\end{equation}
where $I_0(s)$ is the modified Bessel function of order zero, with $I_0(0)=1$.  It is worth noting that for these special parameters, the confluent hypergeometric kernel is equivalent to the Bessel kernel  (see \cite{dkv}):
 \begin{equation*}
 K^{({1}/{2},\;0)}(u,v)=\frac{\sqrt{|uv|}}{2}\;\frac{J_0(u)J_0'(v)-J_0(v)J_0'(u)}
{u-v},
 \end{equation*}
where $J_0(x)$ is the Bessel function of order zero.  This explicit form
\eqref{initial-D}
of the determinant   furnishes   a   starting point in our asymptotic analysis of the integral of the Hamiltonian in  later sections.
\end{rem}}

\subsection{Total integral of the Hamiltonian and large gap aysmptotics}

As is noted in Remark \ref{Rem:evaluation} that for certain special parameters, the integral of the Hamiltonian associated can be  evaluated explicitly.
   For general parameters, the derivation is divided into two steps. First we evaluate the derivative of the Hamiltonian integral
   by deriving and applying certain differential identities, with respect  to the parameters,  for the Hamiltonian.  Then, after taking integration   we obtain the following  total integral of the Hamiltonian.

 \begin{thm}\label{thm: total integral}  For pure imaginary parameter $\beta$ and $\alpha>-1/2$, we have the total integral of the Hamiltonian after regularization  at infinity
 \begin{align}\label{eq: total integral-H-thm}
&\int_0^{c}H(\tau;\alpha,\beta)d\tau+\int_c^{-i\infty}\left [H(\tau;\alpha,\beta)
-\frac{\tau}{16}-\frac{i}{2}\alpha+\frac{\alpha^2-\beta^2+\frac{1}{4}}{\tau}\right ]d\tau \nonumber\\
&=\frac{c^2}{32}+\frac{i\alpha c}{2}-\left(\alpha^2-\beta^2+\frac{1}{4}\right)\ln \frac {|c|}4
+\ln\left(\frac{\sqrt{\pi}\; G( {1}/{2})^2\;G(1+2\alpha)}{2^{2\alpha^2}G(1+\alpha+\beta)\;G(1+\alpha-\beta)}\right),
\end{align}
where  $ic>0$ and $G$ is   the Barnes $G$-function.
\end{thm}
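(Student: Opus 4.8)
The plan is to carry out the Its--Lisovyy--Prokhorov scheme described in the introduction: differentiate the regularized Hamiltonian integral in the parameters $\alpha$ and $\beta$, use closedness of the Jimbo--Miwa--Ueno form to turn each derivative into a boundary contribution, and fix the resulting constant by anchoring at the explicitly solvable case $\alpha=1/2$, $\beta=0$. First, by Theorem~\ref{thm: gap pro}, the substitution $\tau=-4i\sigma$, and the large-$\tau$ expansion \eqref{thm: H-asy-large s}, the left-hand side of \eqref{eq: total integral-H-thm} equals
\[
\frac{c^{2}}{32}+\frac{i\alpha c}{2}-\Big(\alpha^{2}-\beta^{2}+\tfrac14\Big)\ln\frac{|c|}{4}+C(\alpha,\beta),
\]
where $C(\alpha,\beta)$ is the regularization-independent constant term in the large gap asymptotics
\[
\ln\det\big(I-K^{(\alpha,\beta)}_{s}\big)=-\frac{s^{2}}{2}+2\alpha s-\Big(\alpha^{2}-\beta^{2}+\tfrac14\Big)\ln s+C(\alpha,\beta)+o(1),\qquad s\to+\infty .
\]
Hence the theorem is equivalent to the identity $C(\alpha,\beta)=\ln\!\big(\sqrt{\pi}\,G(1/2)^{2}\,G(1+2\alpha)\big/\big(2^{2\alpha^{2}}G(1+\alpha+\beta)G(1+\alpha-\beta)\big)\big)$, i.e. to the Deift--Krasovsky--Vasilevska evaluation of the constant in the confluent hypergeometric-kernel large gap asymptotics.

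\textbf{Differential identities and boundary terms.} I would then work with the model Riemann--Hilbert problem $\Psi(\lambda;s)$ used to construct the solutions of Theorem~\ref{thm-asy}. From its Lax pair one shows that the Jimbo--Miwa--Ueno one-form $H\,ds+\Theta_{\alpha}\,d\alpha+\Theta_{\beta}\,d\beta$ is closed, with $\Theta_{\alpha},\Theta_{\beta}$ explicit in the expansion coefficients of $\Psi$ at $\lambda=0$ and $\lambda=\infty$; equivalently $\partial_{\alpha}H=\partial_{s}\Theta_{\alpha}$ and $\partial_{\beta}H=\partial_{s}\Theta_{\beta}$. Differentiating the regularized integral under the integral sign and using these identities collapses the $s$-integration to boundary evaluations of $\Theta_{\alpha}$ (respectively $\Theta_{\beta}$) at $s=0$ and $s=-i\infty$, together with the explicit $\alpha$- (respectively $\beta$-) derivatives of the regularizing terms. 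The contribution at $s=0$ is computed from the small-$s$ data in Theorem~\ref{thm-asy} and the confluent hypergeometric local parametrix, while the contribution at $s=-i\infty$ is, after regularization, a finite constant to be extracted from the large-$s$ asymptotics of $\Psi$ in the Deift--Zhou analysis of the large gap regime (outer parametrix with a Szeg\H{o}-type factor carrying $\alpha$ and $\beta$; local parametrices at $\lambda=\pm s$ and $\lambda=0$ in terms of confluent hypergeometric and Bessel functions). Equivalently, this amounts to reading off the constant terms in $\partial_{\alpha}\ln\det(I-K^{(\alpha,\beta)}_{s})=2s-2\alpha\ln s+\partial_{\alpha}C+o(1)$ and $\partial_{\beta}\ln\det(I-K^{(\alpha,\beta)}_{s})=2\beta\ln s+\partial_{\beta}C+o(1)$ as $s\to+\infty$; I expect $\partial_{\alpha}C$ and $\partial_{\beta}C$ to come out as explicit combinations of $\psi(1+2\alpha)$, $\psi(1+\alpha+\beta)$, $\psi(1+\alpha-\beta)$, $\ln2$ and elementary terms. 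Interchanging differentiation and integration is justified by the uniformity, for $(\alpha,\beta)$ in compact subsets of $(-1/2,\infty)\times i\mathbb{R}$, of the asymptotics of Theorem~\ref{thm-asy} and of those of $\Theta_{\alpha},\Theta_{\beta}$.

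\textbf{Integration and the constant.} It then remains to integrate $\partial_{\alpha}C$ and $\partial_{\beta}C$. One integration turns the $\psi$-functions into $\ln\Gamma$-functions, and a second integration produces the Barnes $G$-functions, so that $C(\alpha,\beta)=\ln\!\big(G(1+2\alpha)\big/\big(2^{2\alpha^{2}}G(1+\alpha+\beta)G(1+\alpha-\beta)\big)\big)$ up to an additive constant independent of $\alpha$ and $\beta$. This constant is pinned down by Remark~\ref{Rem:evaluation}: from $\det(I-K^{(1/2,0)}_{s})=e^{-s^{2}/2}I_{0}(s)$ and $I_{0}(s)=e^{s}(2\pi s)^{-1/2}(1+O(1/s))$ one reads off $C(1/2,0)=-\tfrac12\ln(2\pi)$; on the other hand, using $G(3/2)=\sqrt{\pi}\,G(1/2)$ and $G(2)=1$, the claimed right-hand side at $(\alpha,\beta)=(1/2,0)$ equals $\ln\!\big(1/\sqrt{2\pi}\big)$, which matches and fixes the additive constant as $\ln\!\big(\sqrt{\pi}\,G(1/2)^{2}\big)$. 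This yields the asserted formula for $C(\alpha,\beta)$ and hence \eqref{eq: total integral-H-thm}; specializing to $\alpha=\beta=0$ then reproduces, as a consistency check, Dyson's constant for the sine-kernel determinant.

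\textbf{Main obstacle.} The substantive difficulty lies in the second step: deriving the differential identities in closed form from the Lax pair, and -- far more laborious -- extracting the constant ($s^{0}$) terms in the large-$s$ asymptotics of $\Theta_{\alpha}$ and $\Theta_{\beta}$ (equivalently, of $\partial_{\alpha}\ln\det(I-K^{(\alpha,\beta)}_{s})$ and $\partial_{\beta}\ln\det(I-K^{(\alpha,\beta)}_{s})$). This requires carrying the Deift--Zhou steepest-descent analysis of $\Psi$ to sufficient precision and tracking every $O(1)$ contribution from the outer parametrix and from the matching of the endpoint parametrices, a bookkeeping task comparable in scope to that in \cite{XD}. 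Once those asymptotics are available, the reduction in the first step, the two one-dimensional parameter integrations in the third step, and the recognition of the Barnes $G$-functions are routine.
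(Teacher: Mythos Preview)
Your overall strategy---differentiate the regularized Hamiltonian integral in the isomonodromic parameters, collapse to boundary terms via a closed-form identity, integrate back and anchor at the explicitly solvable point $(\alpha,\beta)=(1/2,0)$---is precisely the Its--Lisovyy--Prokhorov scheme the paper uses, and your reduction to identifying the large-gap constant $C(\alpha,\beta)$ as well as your anchoring computation at $(1/2,0)$ are correct.

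Where the paper differs from your proposal is in the concrete machinery for the second step. Rather than invoke an abstract closed Jimbo--Miwa--Ueno form $H\,ds+\Theta_\alpha\,d\alpha+\Theta_\beta\,d\beta$ and then fight to extract the $s^0$ terms of $\Theta_\alpha,\Theta_\beta$ from a large-$s$ steepest-descent analysis, the paper first rewrites $H$ as the action differential $u_1v_1'+u_2v_2'-H$ plus an explicit total $s$-derivative involving $sH$, $\ln d(s)$ and $\ln y(s)$ (identity \eqref{eq:total diff}), where $d(s),y(s)$ are defined from the expansion of $\Psi$ at $\zeta=0$. It then shows (identities \eqref{eq:action-diff-a}--\eqref{eq:action-diff-b}) that the $\alpha$- and $\beta$-derivatives of the \emph{action} differential are themselves total $s$-derivatives built from $u_k\partial_\alpha v_k$, $u_k\partial_\beta v_k$, $\ln d$, $\ln y$. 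The upshot is that the boundary contributions are entirely determined by the leading small-$s$ and large-$s$ asymptotics of $u_k,v_k,d,y$ already provided by Theorem~\ref{thm-asy} and by \eqref{eq: y-asy-small s}--\eqref{eq: d-asy-large s}; no further refinement of the Deift--Zhou analysis is needed. In particular, the large-$s$ side uses only Bessel parametrices at $\zeta=\pm 1$ (not at $\lambda=\pm s$ or $\lambda=0$ as you write), and the outer parametrix at $\zeta=0$ already captures the $\alpha,\beta$ dependence.

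A small calibration: in the paper's route the parameter derivatives come out directly as $\ln\Gamma$-combinations (equations \eqref{def: C-2-dff-a}--\eqref{def: C-2-dff-b}), not digamma functions, so only a \emph{single} integration is needed, and the integral formula \eqref{def: Barnes-G} for $\int_0^z\ln\Gamma(1+t)\,dt$ produces the Barnes $G$-functions. Your anticipated ``two integrations through $\psi\to\ln\Gamma\to G$'' would arise if you differentiated $C_1$ directly; the paper's detour through the action-integral constant $C_2$ avoids this and makes the bookkeeping you flag as the main obstacle considerably lighter.
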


As an application of  the above total integral of the Hamiltonian, we derive the  large gap asymptotics of the   confluent hypergeometric-kernel determiant, including the explicitly given constant term. From the large-$s$ asymptotic formula  \eqref{thm: H-asy-large s}, we see that the second integral in \eqref{eq: total integral-H-thm} is of order $O(1/c)$
as $ic\to+\infty$. Therefore,  let $c=-4is$ and  substitute \eqref {eq: total integral-H-thm} into \eqref{eq:gap pro}, we obtain the following theorem as $s\to+\infty$.

 \begin{thm}\label{thm: large gap asy} (Deift, Krasovsky and  Vasilevska  \cite{dkv})  For pure imaginary parameter $\beta$ and $\alpha>-1/2$, let $K_s^{(\alpha,\beta)}$ be
  the operator with the confluent hypergeometric kernel $K^{(\alpha,\beta)}(u,v)$ in \eqref{def: CFH-kernel} acting on $L^2(-s,s)$, we have the asymptotics as $s\to +\infty$
\begin{equation}\label{eq:large gap pro}
\det(I-K_s^{(\alpha,\beta)})=
\frac{\sqrt{\pi} G(\frac{1}{2})^2G(1+2\alpha)}{2^{2\alpha^2}G(1+\alpha+\beta)G(1+\alpha-\beta)}s^{-(\alpha^2-\beta^2+\frac{1}{4})}e^{-\frac{s^2}{2}+2\alpha s}\left [1+O\left(\frac 1 s\right )\right ],
\end{equation}where $G$ is  the Barnes $G$-function.
\end{thm}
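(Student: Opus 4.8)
The plan is to read off the asymptotics directly from the exact Painlev\'e-type formula of Theorem \ref{thm: gap pro} and the regularized total integral of Theorem \ref{thm: total integral}, by specializing the regularization point $c$ in \eqref{eq: total integral-H-thm} to $c=-4is$ and letting $s\to+\infty$.

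First I would recall that, by Theorem \ref{thm: gap pro},
\begin{equation*}
\det\left(I-K_s^{(\alpha,\beta)}\right)=\exp\left(\int_0^{-4is}H(\tau;\alpha,\beta)\,d\tau\right),
\end{equation*}
the integral being taken along the segment of the ray $-i(0,+\infty)$, on which $H(\cdot;\alpha,\beta)$ is pole-free by Theorem \ref{thm-asy}. Taking $c=-4is$ in \eqref{eq: total integral-H-thm} is legitimate, since then $ic=4s>0$; solving that identity for $\int_0^{-4is}H(\tau;\alpha,\beta)\,d\tau$ gives
\begin{equation*}
\int_0^{-4is}H(\tau;\alpha,\beta)\,d\tau=R(s)-\int_{-4is}^{-i\infty}\left[H(\tau;\alpha,\beta)-\frac{\tau}{16}-\frac{i\alpha}{2}+\frac{\alpha^2-\beta^2+\frac14}{\tau}\right]d\tau,
\end{equation*}
where $R(s)=\frac{(-4is)^2}{32}+\frac{i\alpha(-4is)}{2}-\left(\alpha^2-\beta^2+\frac14\right)\ln s+\ln\Big(\frac{\sqrt{\pi}\,G(1/2)^2\,G(1+2\alpha)}{2^{2\alpha^2}G(1+\alpha+\beta)\,G(1+\alpha-\beta)}\Big)$. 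Using $\frac{(-4is)^2}{32}=-\frac{s^2}{2}$, $\frac{i\alpha(-4is)}{2}=2\alpha s$, and $\frac{|-4is|}{4}=s$ for $s>0$, one checks that $e^{R(s)}$ is precisely the claimed leading term in \eqref{eq:large gap pro}.

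It then remains only to estimate the tail integral. By the large-$s$ expansion \eqref{thm: H-asy-large s}, the integrand $H(\tau;\alpha,\beta)-\frac{\tau}{16}-\frac{i\alpha}{2}+\frac{\alpha^2-\beta^2+1/4}{\tau}$ is $O(1/\tau^2)$ as $i\tau\to+\infty$ along the ray, so the integral from $-4is$ to $-i\infty$ is $O(1/s)$. Exponentiating the displayed identity and using $e^{O(1/s)}=1+O(1/s)$ then produces the factor $[1+O(1/s)]$ in \eqref{eq:large gap pro}, which completes the proof.

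Since the substantive work has already been carried out in Theorems \ref{thm: gap pro} and \ref{thm: total integral}, this statement is essentially a corollary; the only point I expect to require genuine care is the $O(1/s)$ control of the regularized tail — the sole analytic input at this stage, itself immediate from \eqref{thm: H-asy-large s} — together with the bookkeeping of the substitution $c=-4is$, in particular the branches implicit in $(-4is)^2$ and in $\ln(|-4is|/4)=\ln s$.
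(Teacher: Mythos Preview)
Your proposal is correct and follows essentially the same approach as the paper: the paper likewise sets $c=-4is$ in \eqref{eq: total integral-H-thm}, observes from \eqref{thm: H-asy-large s} that the regularized tail integral is $O(1/c)$, and substitutes into the Painlev\'e-type representation \eqref{eq:gap pro}. Your bookkeeping of the substitution and the tail estimate is exactly what the paper intends, just written out in slightly more detail.
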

Therefore, we have reproduced the large gap asymptotics of the determinant with a confluent hypergeometric kernel, obtained previously by Deift, Krasovsky and Vasilevska \cite{dkv}. Particularly, we recover the large gap asymptotics of the sine-kernel determinant by taking  $\alpha=\beta=0$ in \eqref{eq:large gap pro} and using the relation between  the Barnes $G$-function and the  Riemann zeta-function that $\ln G(1/2)=\frac{3}{2}\zeta'(-1)-\frac{1}{4}\ln \pi+\frac{1}{24}\ln 2$.

 \begin{cor}\label{thm:large gap -sine}  Let $K_s^{(0,0)}$ be
  the operator with sine kernel \eqref{def: sin-kernel} acting on $L^2(-s,s)$, we have the large gap asymptotics, as $s\to+\infty$,
\begin{equation}\label{eq:gap pro-sine}
\ln \det\left(I-K_s^{(0,0)}\right )=-\frac{s^2}{2}-\frac{\ln s}{4}+\frac{\ln2}{12}+3\zeta'(-1)+O\left(\frac 1 s\right ),\end{equation}
where $\zeta'(s)$ is the derivative of the Riemann zeta-function.
\end{cor}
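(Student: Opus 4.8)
The plan is to obtain Corollary \ref{thm:large gap -sine} simply as the $\alpha=\beta=0$ specialization of Theorem \ref{thm: large gap asy}, converting the Barnes $G$-function prefactor into the explicit constant $\tfrac{\ln2}{12}+3\zeta'(-1)$ by means of the known special value of $\ln G(1/2)$. No new asymptotic analysis is needed; all the analytic work has already been done in Theorems \ref{thm: gap pro}, \ref{thm: total integral} and \ref{thm: large gap asy}.

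First I would put $\alpha=\beta=0$ in the formula \eqref{eq:large gap pro}. Since $G(1)=1$, while $2\alpha^2=0$, $2\alpha s=0$ and $\alpha^2-\beta^2+\tfrac14=\tfrac14$, the prefactor collapses and the asymptotic formula reduces to
\begin{equation*}
\det\left(I-K_s^{(0,0)}\right)=\sqrt{\pi}\,G(1/2)^2\,s^{-1/4}e^{-s^2/2}\left[1+O\left(\tfrac1s\right)\right],\qquad s\to+\infty .
\end{equation*}
Taking logarithms and expanding $\ln\left(1+O(1/s)\right)=O(1/s)$ then gives
\begin{equation*}
\ln\det\left(I-K_s^{(0,0)}\right)=-\frac{s^2}{2}-\frac{\ln s}{4}+\frac12\ln\pi+2\ln G(1/2)+O\left(\tfrac1s\right).
\end{equation*}

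Next I would identify the constant. Using the relation $\ln G(1/2)=\tfrac32\zeta'(-1)-\tfrac14\ln\pi+\tfrac1{24}\ln2$ recalled in the text, one has $2\ln G(1/2)=3\zeta'(-1)-\tfrac12\ln\pi+\tfrac1{12}\ln2$, whence
\begin{equation*}
\frac12\ln\pi+2\ln G(1/2)=3\zeta'(-1)+\frac{\ln2}{12},
\end{equation*}
which is precisely \eqref{eq:gap pro-sine}. Alternatively, one could bypass the intermediate appearance of $G(1/2)$ altogether by setting $\alpha=\beta=0$ directly in the total integral formula \eqref{eq: total integral-H-thm} of Theorem \ref{thm: total integral} before exponentiating, and carry the $\zeta'(-1)$ through from there.

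There is essentially no obstacle at this level: the integral representation of the determinant (Theorem \ref{thm: gap pro}), the evaluation of the total Hamiltonian integral (Theorem \ref{thm: total integral}), and the passage to the large gap asymptotics (Theorem \ref{thm: large gap asy}) have all been established earlier. The only point requiring any care is the correctness of the Barnes $G$-function special value $\ln G(1/2)$ and its classical derivation via the Kinkelin/Glaisher-type formula for $\zeta'(-1)$; this is standard and can, if desired, be cross-checked numerically to confirm the numerical value of the Widom--Dyson constant.
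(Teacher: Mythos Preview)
Your proposal is correct and follows essentially the same approach as the paper: specialize \eqref{eq:large gap pro} to $\alpha=\beta=0$ and then invoke the identity $\ln G(1/2)=\tfrac{3}{2}\zeta'(-1)-\tfrac{1}{4}\ln\pi+\tfrac{1}{24}\ln 2$ to rewrite the constant. The paper states this in one sentence right before the corollary, and your computation of the constant is exactly the intended one.
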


Using the series expansion of $\sigma_V(s)$ as $s\to +\infty$, of which the first few terms can be found in \eqref{asy: sigma-V}, Dyson was able to  obtain the asymptotics of the sine-kernel determinant of the form \eqref{eq:gap pro-sine} in \cite{Dyson}. While the constant factor can not be derived from the series expansion of $\sigma_V(s)$, it was conjectured  to be given by the  Riemann zeta-function $c_0=\frac{\ln2}{12}+3\zeta'(-1)$ by Dyson in the same paper \cite{Dyson}.  This conjecture has been proved rigorously by Krasovsky \cite{kr} and Ehrhardt \cite{E06} using different methods.  In the present paper, our result provides another proof of the large gap asymptotics \eqref{eq:gap pro-sine}
 including the  constant conjectured by Dyson.

 The rest of the paper is arranged as follows. In Section \ref{Analysis of the model RH problem},  we formulate the model Riemann-Hilbert problem (RH problem or RHP, for short) for $\Psi(\zeta; s)$. We derive a Lax pair corresponding to the model RH problem, of which the compatibility condition is expressed as a coupled Painlev\'e V system.   Several differential identities of the Hamiltonian  are also established, which are crucial in the evaluation of the total integral of the Hamiltonian.  We then justify the solvability of $\Psi(\zeta;s)$ for the parameters $\alpha>-1/2$, $\beta\in i\mathbb{R}$ and $s\in -i(0, \infty)$ by proving a vanishing lemma. The solvability  implies the pole-free of the Hamiltonian  associated with the coupled Painlev\'e V system for $ is>0$. The special function solutions to the coupled Painlev\'e system  are also derived for certain specific parameters in this section.
 In Section \ref{sec:small-s}, we derive asymptotic approximations for the coupled Painlev\'e V system as $is\to 0^+$
by using Deift-Zhou nonlinear steepest descent method for the RH problems. While the large-$s$ analysis will be carried out in Section \ref{sec:large-s}. These two sections together provide a proof of Theorem \ref{thm-asy}.
In Section \ref{sec:d-identity}, we formulate  a RH problem for $Y$, corresponding to the orthogonal polynomials associated with the weight function \eqref{def: weight-circle} on an arc, we also derive a differential identity connecting the Toeplitz determinant \eqref{def: D-n} with $Y$.
In Section \ref{sec:Y}, we carry out, in full details,  the Deift-Zhou nonlinear steepest descent analysis  of the RH problem for $Y$.
The asymptotics of the orthogonal polynomials and the differential identity  then lead to a proof of Theorem \ref{thm-gap-probability} in this section.
Section \ref{sec:large-gap}  will be devoted to the evaluation of the total integral of the Hamiltonian and the large gap asymptotic analysis of the model. Theorem \ref{thm: total integral} is also proved  in this last section.

\section{Model RH problem and the coupled Painlev\'e V system}\label{Analysis of the model RH problem}
In this section, we introduce a  RH problem for $\Psi(\zeta;s)$ for later use. We derive a Lax pair for the solution to the RH problem for $\Psi(\zeta;s)$, which turns out to be  a Garnier system  in the list of  \cite{k-3,kns-1}, with three regular singularities and one irregular singularity of order two. The compatibility condition of the Lax pair is described by the  coupled Painlev\'e V system in dimension four.
We then prove the solvability of this model RH problem and  the pole-free of the Hamiltonian associated with the
 coupled Painlev\'e V system for $is>0$. The Hamiltonian plays a   central role in the derivation of our main results on the asymptotics for the Toeplitz determinant.
In this section,  we  will also show that for special parameters, the   coupled Painlev\'e system  admits special function solutions.

\subsection{A Model RH problem }\label{sec:model-RHP}

We formulate a model RH problem, which will pave the road to the steepest descent analysis in later sections. The model  RH problem for $2\times 2$ matrix function $\Psi(\zeta)=\Psi(\zeta; s)$ is the following:
\begin{description}
  \item(a)  $\Psi(\zeta; s)$ is analytic in
  $\zeta\in \mathbb{C}\setminus \{\cup^7_{j=1}\Sigma_j\}$, where the oriented $\zeta$-contours
   \begin{equation*}
   \Sigma_1=1+e^{\frac{ \pi i}{4}}\mathbb{R}^{+},~\Sigma_2=-1+e^{\frac{3\pi i}{4}}\mathbb{R}^{+}, ~ \Sigma_3=-1+e^{-\frac{3\pi i}{4}}\mathbb{R}^{+},~
  \Sigma_4=e^{-\frac{ \pi i}{2}}\mathbb{R}^{+},~ \Sigma_5=1+e^{-\frac{\pi i }{4}}\mathbb{R}^{+},
   \end{equation*}
   $\Sigma_6=(0,1)$ and $\Sigma_7=(-1,0)$, as depicted  in Figure \ref{Model RH contour}; see also an illustration of the regions $\Omega_j$ for $j=1,2,\cdots,5$, each having $\Sigma_j$ and $\Sigma_{j+1}$ as portions of its boundary.

  \begin{figure}[th]
 \begin{center}
   \includegraphics[width=8 cm]{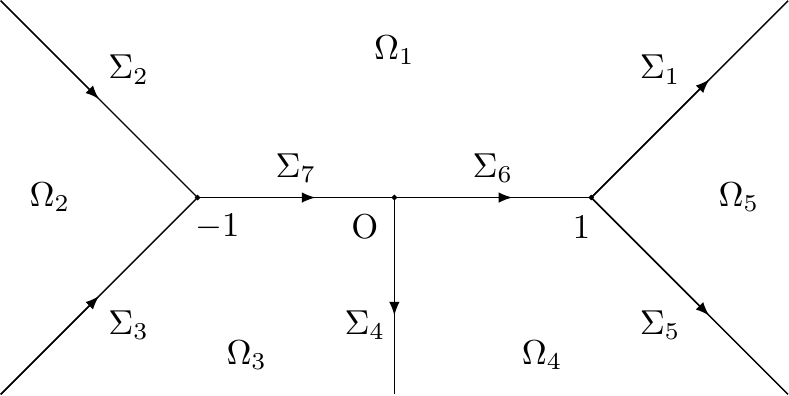} \end{center}
  \caption{\small{Contours and regions for the model RH problem}}
 \label{Model RH contour}
\end{figure}

  \item(b)  $\Psi(\zeta;s)$ satisfies the jump conditions
 \begin{equation}\label{eq:Psi-jump}
 \Psi_+(\zeta;s)=\Psi_-(\zeta;s)
 \left\{
 \begin{array}{ll}
    \begin{pmatrix}
                                 1 & 0 \\
                               e^{-\pi i(\alpha-\beta)}& 1
                                 \end{pmatrix}, &  \zeta \in \Sigma_1, \\[.5cm]
    \begin{pmatrix}
                                1 &0\\
                              e^{\pi i(\alpha-\beta)}&1
                                \end{pmatrix},  &  \zeta \in \Sigma_2, \\[.5cm]
    \begin{pmatrix}
                                 1 &-e^{-\pi i(\alpha-\beta)} \\
                                 0 &1
                                 \end{pmatrix}, &   \zeta \in \Sigma_3, \\[.1cm]

     e^{2\pi i\beta\sigma_3} , &   \zeta \in \Sigma_4, \\[.1cm]
                               \begin{pmatrix}
                                 1 &-e^{\pi i(\alpha-\beta)} \\
                                 0 &1
                                 \end{pmatrix}, &   \zeta \in \Sigma_5, \\[.5cm]
                                 \begin{pmatrix}
                                0 &-e^{\pi i(\alpha-\beta)} \\
                                e^{-\pi i(\alpha-\beta)} &0
                                 \end{pmatrix}, &   \zeta \in \Sigma_6, \\[.5cm]
                                 \begin{pmatrix}
                                0 &-e^{-\pi i(\alpha-\beta)} \\
                                e^{\pi i(\alpha-\beta)} &0
                                 \end{pmatrix}, &   \zeta \in\Sigma_7.
 \end{array}  \right .  \end{equation}
 For later use, we denote the jump  matrix on $\Sigma_k$ by $J_k$ for $k=1,2,...,7.$

\item(c) As $\zeta\to \infty$,  we have
  \begin{equation}\label{eq:Psi-infty}
 \Psi(\zeta;s)=\left( I  +\frac{\Psi_1(s)}{\zeta}+\frac{\Psi_2(s)}{\zeta^2}+
O\left(\frac{1}{\zeta^3}\right) \right) \zeta^{-\beta\sigma_3} e^{\frac{1}{4} s\zeta \sigma_3},
   \end{equation}
  where  the branch cut of the function  $\zeta^{\beta} $ is taken along $(0, -i\infty)$ such  that $\arg \zeta \in (-\pi/2, 3\pi/2)$,  and $\sigma_3$ is the Pauli matrix. This and the other Pauli matrices are
 \begin{equation}\label{Pauli-matrix}
 \sigma_1=\begin{pmatrix}
                    0 &1 \\
                   1 & 0
          \end{pmatrix},   \quad \sigma_2=
                 \begin{pmatrix}
                    0 & -i \\
                    i & 0 \end{pmatrix} \quad\mbox{and}\quad \sigma_3 =\begin{pmatrix}
                     1 & 0 \\
                    0 & -1
                    \end{pmatrix}.
                    \end{equation}

 \item(d) As $\zeta\to 0$, we have
  \begin{equation}\label{eq: Psi-origin}
  \Psi(\zeta;s)=\Psi^{(0)}(\zeta; s) \zeta^{\alpha \sigma_3} E^{(0)}_{j},  \quad \zeta\in \Omega_j, ~j=1,3,4, \end{equation}
  where $\Psi^{(0)}(\zeta; s) =\Psi^{(0)}_0(s)\left(I+\Psi^{(0)}_1(s)\zeta+O\left(\zeta^2\right )\right)$ is analytic at $\zeta=0$, and $\zeta^{\alpha } $ takes the principal branch.
  The connection matrices are given below
$$E^{(0)}_1=I, \quad E^{(0)}_4=J_6^{-1}, \quad E^{(0)}_3=J_6^{-1}J_4^{-1},$$
where   $J_k$ is the jump matrix for $\Psi$ on $\Sigma_k$ for $k=1,2,...,7$; cf. \eqref{eq:Psi-jump}.

 \item(e) As $\zeta\to 1$, we have
  \begin{equation}\label{eq: Psi-1}
  \Psi(\zeta;s)=\Psi^{(1)}(\zeta; s)    \begin{pmatrix}
                    1 & c_1\ln(\zeta-1)\\
                 0 & 1
                  \end{pmatrix}E^{(1)}_{j},  \quad \zeta \in \Omega_j, ~j=1,4,5, \end{equation}
  where  $\Psi^{(1)}(\zeta; s) =\Psi^{(1)}_0(s)\left(I+\Psi^{(1)}_1(s)(\zeta-1)+O\left ((\zeta-1)^2\right )\right)$ is analytic at  $\zeta=1$,
  the logarithmic  function takes the principal branch,
    the constant $c_1=-\frac{1}{2\pi i}e^{\pi i(\alpha-\beta)}$ and the connection matrices are given as
 \begin{equation*}
 E^{(1)}_1=I, \quad E^{(1)}_5=J_1^{-1}, \quad E^{(1)}_4=J_1^{-1}J_5^{-1},
 \end{equation*}
    $J_k$ being the jump matrix for $\Psi$ on $\Sigma_k$ for $k=1,2,...,7$; see \eqref{eq:Psi-jump}.

 \item(f) As $\zeta\to -1$, we have
  \begin{equation}\label{eq: Psi-1 minus}
  \Psi(\zeta;s)=\Psi^{(2)}(\zeta; s)    \begin{pmatrix}
                    1 & c_2\ln(\zeta+1)\\
                 0 & 1
                  \end{pmatrix}E^{(2)}_{j},  \quad \zeta\in \Omega_j, ~j=1,2,3,\end{equation}
  where $\Psi^{(2)}(\zeta; s) =\Psi^{(2)}_0(s)\left(I+\Psi^{(2)}_1(s)(\zeta+1)+O\left((\zeta+1)^2\right)\right)$ is analytic at $\zeta=-1$.
  The branch for the logarithm   is taken so that $\arg(\zeta+1)\in(0,2\pi)$.
    The constant $c_2=\frac{1}{2\pi i}e^{-\pi i(\alpha-\beta)}$ and the connection matrices are given as follow
$$E^{(2)}_1=I, \quad E^{(2)}_2=J_2^{-1}, \quad E^{(2)}_3=J_2^{-1}J_3^{-1},$$
where   $J_k$ is the jump matrix for $\Psi$ on $\Sigma_k$ for $k=1,2,...,7$; see \eqref{eq:Psi-jump}.
\end{description}

         In the above model RH problem, the connection matrices describing  the sector-wise behaviors of $\Psi$ near the node points $\zeta=0$ and $\pm1$,
are specified  in a way such that they are in compliance with  the jump conditions \eqref{eq:Psi-jump} in neighborhoods of  the origin and $\zeta=\pm1$.

\subsection{Lax pair and the coupled Painlev\'e V  system}
In this section, we derive a Lax pair from the RH problem for $\Psi$. The Lax pair is associated with a special Garnier system,  of which the compatibility condition is expressed as the coupled Painlev\'e  V system in dimension four.
\begin{prop} We have the following  Lax pair
\begin{equation}\label{def: Lax pair}
 \frac{d}{d\zeta}\Psi(\zeta;s)=L(\zeta;s) \Psi(\zeta;s),  \quad   \frac{d}{ds}\Psi(\zeta;s)=U(\zeta;s) \Psi(\zeta;s),
  \end{equation}
  where
  \begin{equation}\label{eq: L}
 L(\zeta;s)=\frac{s}{4}\sigma_3+\frac{A_0(s)}{\zeta}+\frac{A_1(s)}{\zeta-1}+ \frac{A_2(s)}{\zeta+1},  \end{equation}
  \begin{equation}\label{eq: U}
 U(\zeta;s)=\frac{1}{4}\zeta\sigma_3+B(s), \end{equation}
with the coefficients  given below
   \begin{equation}\label{eq:A-0}
 A_0(s)= \begin{pmatrix}
                    u_1(s)v_1(s)+u_2(s)v_2(s)-\beta &  -\left( u_1(s)v_1(s)+u_2(s)v_2(s)-\alpha-\beta\right)y(s)\\
                 \left(u_1(s)v_1(s)+u_2(s)v_2(s)+\alpha-\beta\right)/y(s) &  -u_1(s)v_1(s)-u_2(s)v_2(s)+\beta
                  \end{pmatrix}, \end{equation}
  \begin{equation}\label{eq:A-k}
 A_k(s)= \begin{pmatrix}
                   - u_k(s)v_k(s) &  u_k(s)y(s)\\
                -u_k(s)v_k^2(s)/y(s) &  u_k(s)v_k(s)                  \end{pmatrix},  \quad k=1,2, \end{equation}
and
  \begin{equation}\label{eq:B}
B(s)=\frac{1}{s} \begin{pmatrix}
                  0 &  b_1(s)y(s)\\
           b_2(s)/y(s) & 0                  \end{pmatrix},   \end{equation}
            where
 \begin{equation}\label{eq:b-k}
 \left\{ \begin{array}{ll}
               b_1(s)=-u_1(s)(v_1(s)-1)-u_2(s)(v_2(s)-1)+\alpha+\beta, \\
                     b_2(s)=-u_1(s)v_1(s)(v_1(s)-1)-u_2(s)v_2(s)(v_2(s)-1)+\alpha-\beta.
                                                                \end{array}
\right.
\end{equation}
The compatibility condition of the Lax pair is expressed as the coupled Painlev\'e V system
 \begin{equation}\label{eq:cpv}
 \left\{ \begin{array}{l}
                 s\frac{d u_1}{ds}=\frac{s}{2} u_1-u_1^2(v_1-1)(3v_1-1)-u_1u_2(v_2-1)(2v_1+v_2-1)+2(\alpha+\beta)u_1v_1-2\beta u_1,\\
                   s\frac{d u_2}{ds}=-\frac{s}{2} u_2-u_2^2(v_2-1)(3v_2-1)-u_1u_2(v_1-1)(v_1+2v_2-1)+2(\alpha+\beta)u_2v_2-2\beta u_2, \\
                   s\frac{d v_1}{ds}=-\frac{s}{2} v_1+2u_1v_1(v_1-1)^2+u_2(v_1+v_2)(v_1-1)(v_2-1)-\alpha(v_1^2-1)-\beta (v_1-1)^2,\\
                      s\frac{d v_2}{ds}=\frac{s}{2} v_2+2u_2v_2(v_2-1)^2+u_1(v_1+v_2)(v_1-1)(v_2-1)-\alpha(v_2^2-1)-\beta (v_2-1)^2.
                                           \end{array}
\right. \end{equation}
Let
\begin{equation}\label{eq:H-psi}
  sH=sH(u_1,v_1,u_2,v_2,s;\alpha,\beta)=-\frac{s}{2}\left(\Psi_1\right)_{11}-(\alpha^2-\beta^2),\end{equation}
  where $(\Psi_1)_{11}$ is the $11$-entry of the  coefficient $\Psi_1$ in the large-$\zeta$ asymptotic approximation \eqref{eq:Psi-infty} of $\Psi(\zeta;s)$, then we have
  \begin{equation}\label{eq:H}
sH=\frac{s}{2}H_{V}( u_1,v_1,s/2; \alpha,\beta) -\frac{s}{2}H_{V}(u_2,v_2,-s/2; \alpha,\beta)  +u_1u_2(v_1+v_2)(v_1-1)(v_2-1),
\end{equation}
as defined in \eqref{int:H}, where $H_V( u,v,s; \alpha, \beta)$ is the Hamiltonian for the Painlev\'e V equation
\begin{equation}\label{eq:H-v}
  sH_{V}(u,v,s; \alpha, \beta)=u^2v(v-1)^2-suv-\alpha u (v^2-1)-\beta u (v-1)^2.\end{equation}
The system \eqref{eq:cpv} is equivalent to the following Hamiltonian formulation
\begin{equation}\label{eq:H-equation}
  \frac{d v_k}{ds}=\frac{\partial H}{\partial u_k} ,\quad  \frac{d u_k}{ds}=-\frac{\partial H}{\partial v_k}, \quad k=1,2.\end{equation}
  \end{prop}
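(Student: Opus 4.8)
The plan is to recognize the Lax pair \eqref{def: Lax pair} as the isomonodromy deformation equations of the RH problem for $\Psi$, and then to read off the Hamiltonian from the large-$\zeta$ expansion \eqref{eq:Psi-infty}. The key starting point is that all seven jump matrices in \eqref{eq:Psi-jump} are constant, i.e.\ independent of $\zeta$ and of $s$. Consequently $L:=(\partial_\zeta\Psi)\Psi^{-1}$ and $U:=(\partial_s\Psi)\Psi^{-1}$ extend analytically across every contour $\Sigma_j$ and are single-valued meromorphic (in fact rational) functions of $\zeta$ whose only possible singularities are at $\zeta=0,\pm1,\infty$. Localising there with (c)--(f): near $\zeta=\infty$ the factor $\zeta^{-\beta\sigma_3}e^{\frac14 s\zeta\sigma_3}$ contributes $\frac s4\sigma_3$ to $L$ and $\frac\zeta4\sigma_3$ to $U$, and the tail $\Psi_1/\zeta+\cdots$ only lower-order terms; near $\zeta=0,1,-1$ the local factors $\zeta^{\alpha\sigma_3}E^{(0)}_j$ of \eqref{eq: Psi-origin} and the unipotent logarithmic blocks with $s$-independent coefficients $c_1,c_2$ of \eqref{eq: Psi-1}, \eqref{eq: Psi-1 minus} depend neither on $s$ nor holomorphically on $\zeta$, so a short computation (using that the logarithmic blocks have square zero) shows $L$ has exactly one simple pole at each node and $U$ has none. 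A Liouville-type argument then yields the rational forms \eqref{eq: L}, \eqref{eq: U}, with $A_0=\alpha\,\Psi^{(0)}_0\sigma_3(\Psi^{(0)}_0)^{-1}$ (so $\operatorname{tr}A_0=0$, $\det A_0=-\alpha^2$), $A_1,A_2$ conjugates of $c_kE_{12}$ (hence nilpotent), and, matching the $1/\zeta$ term of $L$ against \eqref{eq:Psi-infty}, $A_0+A_1+A_2=\frac s4[\Psi_1,\sigma_3]-\beta\sigma_3$ with diagonal part $-\beta\sigma_3$. These are precisely the constraints built into the parametrization \eqref{eq:A-0}--\eqref{eq:A-k} in the unknowns $u_1,v_1,u_2,v_2$ and the scalar gauge $y$, as one checks directly. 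Matching the $s$-equation at infinity gives $U=\frac\zeta4\sigma_3+\frac14[\Psi_1,\sigma_3]+O(1/\zeta)$, hence $B=\frac14[\Psi_1,\sigma_3]$; combined with the $1/\zeta$ relation this reads $sB=A_0+A_1+A_2+\beta\sigma_3$, whose off-diagonal part is exactly \eqref{eq:B} with $b_1,b_2$ as in \eqref{eq:b-k}.

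Next I would impose the zero-curvature condition $\partial_s L-\partial_\zeta U+[L,U]=0$ and split it into partial fractions in $\zeta$. The polynomial ($\zeta^0$) part is $[\frac s4\sigma_3,B]+\frac14[A_0+A_1+A_2,\sigma_3]=0$, the consistency relation already used to pin down $B$; the residue at $\zeta=a_k$ (with $a_0=0$, $a_1=1$, $a_2=-1$) is the Lax-type equation $\frac{dA_k}{ds}=[\frac{a_k}{4}\sigma_3+B,\,A_k]$, $k=0,1,2$, which conserves the spectra of the $A_k$, consistently with the normal forms above. Substituting \eqref{eq:A-0}--\eqref{eq:B}, separating the diagonal and off-diagonal parts of these three matrix ODEs, and eliminating $d(\log y)/ds$ (which decouples), reproduces the coupled Painlev\'e V system \eqref{eq:cpv}. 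I expect this coordinate reduction to be the main technical burden of the proof: it is mechanical but lengthy, and one must handle the gauge factor $y$ and keep careful track of which scalar components are independent.

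For the Hamiltonian identity I would carry the expansion at infinity one order further. Matching the $1/\zeta^2$ coefficient of $L$ against \eqref{eq:Psi-infty} gives $A_1-A_2=\frac s4 C_2-\beta[\Psi_1,\sigma_3]-\Psi_1$ with $C_2=[\Psi_2,\sigma_3]+\sigma_3\Psi_1^2-\Psi_1\sigma_3\Psi_1$; since $(C_2)_{11}=2(\Psi_1)_{12}(\Psi_1)_{21}$ and $(A_k)_{11}=-u_kv_k$, the $(1,1)$-entry reads $(\Psi_1)_{11}=\frac s2(\Psi_1)_{12}(\Psi_1)_{21}+u_1v_1-u_2v_2$. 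Then $B=\frac14[\Psi_1,\sigma_3]$ together with \eqref{eq:B}--\eqref{eq:b-k} gives $(\Psi_1)_{12}=-2b_1y/s$, $(\Psi_1)_{21}=2b_2/(sy)$, so $(\Psi_1)_{12}(\Psi_1)_{21}=-4b_1b_2/s^2$ and, by \eqref{eq:H-psi},
\[
sH=-\tfrac s2(\Psi_1)_{11}-(\alpha^2-\beta^2)=b_1b_2-(\alpha^2-\beta^2)-\tfrac s2(u_1v_1-u_2v_2).
\]
Expanding $b_1b_2$ from \eqref{eq:b-k} and regrouping, the quadratic terms $u_k^2v_k(v_k-1)^2$, the cross term $u_1u_2(v_1+v_2)(v_1-1)(v_2-1)$, and the $\alpha$- and $\beta$-linear terms assemble exactly into $\frac s2 H_V(u_1,v_1,s/2;\alpha,\beta)-\frac s2 H_V(u_2,v_2,-s/2;\alpha,\beta)+u_1u_2(v_1+v_2)(v_1-1)(v_2-1)$ with $H_V$ as in \eqref{eq:H-v}, which is \eqref{eq:H}. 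The equivalence of \eqref{eq:cpv} with the Hamiltonian system \eqref{eq:H-equation} is then a direct verification: differentiating $sH$ in the form \eqref{eq:H} with respect to $u_k$ and $v_k$ at fixed $s$ and comparing with \eqref{eq:cpv} term by term --- using, e.g., $2\alpha u_1v_1+2\beta u_1(v_1-1)=2(\alpha+\beta)u_1v_1-2\beta u_1$ --- yields $\frac{dv_k}{ds}=\partial H/\partial u_k$ and $\frac{du_k}{ds}=-\partial H/\partial v_k$. The two places needing genuine care are the rigorous exclusion of higher-order poles of $L$ at the nodes, which rests on the logarithmic local structure in (e)--(f), and the long but routine reduction of the three matrix Lax equations to \eqref{eq:cpv}.
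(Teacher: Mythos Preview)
Your proposal is correct and follows essentially the same route as the paper: constant jumps imply rationality of $L=(\partial_\zeta\Psi)\Psi^{-1}$ and $U=(\partial_s\Psi)\Psi^{-1}$; the local data at $0,\pm1,\infty$ pin down the pole structure and the spectral constraints ($\det A_0=-\alpha^2$, $A_1,A_2$ nilpotent, $(A_0+A_1+A_2)_{\mathrm{diag}}=-\beta\sigma_3$) that motivate the parametrization \eqref{eq:A-0}--\eqref{eq:A-k}; the relation $sB=A_0+A_1+A_2+\beta\sigma_3$ yields \eqref{eq:B}--\eqref{eq:b-k}; the zero-curvature residues $\dot A_k=[\tfrac{a_k}{4}\sigma_3+B,A_k]$ reduce to \eqref{eq:cpv} after eliminating the gauge scalar $y$; and the $1/\zeta^2$ match produces the Hamiltonian. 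Your intermediate identity $sH=b_1b_2-(\alpha^2-\beta^2)-\tfrac{s}{2}(u_1v_1-u_2v_2)$ is a clean way to package the last step, which the paper leaves more implicit; otherwise the arguments coincide.
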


 \begin{proof}
 Note that all the jump matrices in \eqref{eq:Phi-jump} of the RH problem for $\Psi(\zeta;s)$ are independent of the variables $\zeta$ and $s$.
 Then $\frac{d}{d\zeta}\Psi(\zeta;s)$, $\frac{d}{ds}\Psi(\zeta;s)$ and $\Psi(\zeta;s)$ satisfy the same jump conditions.
 Thus, the matrix-valued functions $L(\zeta;s)= \frac{d}{d\zeta}\Psi(\zeta;s)\Psi^{-1}(\zeta;s)$ and $U(\zeta;s)= \frac{d}{ds}\Psi(\zeta;s)\Psi^{-1}(\zeta;s)$
 are meromorphic for $\zeta$ in the complex plane with only possible isolated singularities  at $\zeta=0, \pm 1$.
 Then, it follows from the local behaviors of $\Psi(\zeta; s)$ as $\zeta\to\infty$, $\zeta\to 0$, and $\zeta\to \pm 1$; cf. \eqref{eq:Psi-infty}, \eqref{eq: Psi-origin}, \eqref{eq: Psi-1} and \eqref{eq: Psi-1 minus},  that $L$ and $U$ are rational functions in $\zeta$ and take  the form as given in \eqref{eq: L} and \eqref{eq: U}, respectively.

 Using the fact that $\det \Psi=1$, we have $\tr L=\tr U=0$ and thus all the coefficients $A_k$, $k=0,1,2$ and $B$ are trace-zero.
 Substituting the behavior of $\Psi $ at infinity \eqref{eq:Psi-infty} into the first equation of the Lax pair \eqref{def: Lax pair}, we find after comparing the
 coefficient of $\frac{1}{\zeta}$ that
\begin{equation}\label{eq:A-k-Psi}A_0+A_1+A_2=\begin{pmatrix}
                - \beta&  -\frac{s}{2}(\Psi_1)_{12}\\
              \frac{s}{2}(\Psi_1)_{21} & \beta                  \end{pmatrix},
              \end{equation}
              where $\Psi_1$ is the coefficient of
             $ {1}/{\zeta}$ in the large-$\zeta$  asymptotic expansion of $\Psi(\zeta;s)$ in \eqref{eq:Psi-infty}.
 Moreover, combining the master equation in \eqref{def: Lax pair} with the local behavior \eqref{eq: Psi-origin} of  $\Psi(\zeta;s)$ at $\zeta=0$, and \eqref{eq: Psi-1}-\eqref{eq: Psi-1 minus} at $\zeta=\pm 1$, we have
 \begin{equation*}
  \det A_0=-\alpha^2, \quad \det A_k=0, \quad k=1,2.
 \end{equation*}
 Thus,  the coefficients $A_k$, $k=0,1,2$, can be taken in the form appeared in \eqref{eq:A-0} and \eqref{eq:A-k}.

  Similarly, substituting the behavior of $\Psi $ at infinity \eqref{eq:Psi-infty} into the second equation of the Lax pair \eqref{def: Lax pair} and comparing    the leading-order terms
  in $\zeta$, we find
\begin{equation}\label{eq:B-Psi}B=\begin{pmatrix}
                0&  -\frac{1}{2}(\Psi_1)_{12}\\
              \frac{1}{2}(\Psi_1)_{21} & 0     \end{pmatrix},
              \end{equation}
              where $ \Psi_1 $ is  the coefficient  of the large-$\zeta$ asymptotic approximation of $\Psi(\zeta;s)$ in \eqref{eq:Psi-infty}.
 This,  together with \eqref{eq:A-k-Psi}, implies that $B$ is given by  \eqref{eq:B}.

 The compatibility condition of the Lax pair \eqref{def: Lax pair} is readily written   as
 \begin{equation}\label{eq:nonlinear equations-matrix}
 \left\{ \begin{array}{ll}
                  \frac{d}{ds} A_0=[B, A_0], \\
                     \frac{d}{ds} A_1=[B+\frac{1}{4}\sigma_3, A_1], \\
                       \frac{d}{ds} A_2=[B-\frac{1}{4}\sigma_3, A_2],
                    \end{array}
\right.
 \end{equation}where the commutator $[A,B]=AB-BA$.
 Denote  $a(s)$ by
 \begin{equation}\label{eq:a}
                   a(s)=(A_0)_{11}(s)=u_1(s)v_1(s)+u_2(s)v_2(s)-\beta. \end{equation}
Recalling the definition of $b_k$ in \eqref{eq:b-k} and substituting \eqref{eq:A-0}, \eqref{eq:A-k} and \eqref{eq:B} into \eqref{eq:nonlinear equations-matrix} yields
  \begin{equation}\label{eq:nonlinear equations}
 \left\{ \begin{array}{ll}
                  s\frac{d}{ds} ((a-\alpha)y)=2ab_1y,\\
                    s \frac{d}{ds}((a+\alpha)/y) =2ab_2/y,\\
                      s \frac{d}{ds} (u_kv_k)=b_1u_kv_k^2+b_2u_k, \quad k=1,2,\\
                       s  \frac{d}{ds} (u_ky)=\frac{(-1)^{k+1}s}{2}u_k y+2b_1u_k v_k y, \quad k=1,2.
                                           \end{array}
\right. \end{equation}
Then, the first two equations imply that the gauge parameter $y(s)$ of the Lax pair satisfies the equation
 \begin{equation}\label{eq:y}
 sy'(s)/y(s)=b_1(s)-b_2(s)=u_1(s)(v_1(s)-1)^2+u_2(s)(v_2(s)-1)^2+2\beta.
 \end{equation}
 This equation, together with the last two equations of \eqref{eq:nonlinear equations}, gives us the system   \eqref{eq:cpv}.

Substituting the large-$\zeta$ expansion   \eqref{eq:Psi-infty} into the first equation of the Lax pair \eqref{def: Lax pair}, and comparing the coefficients of $1/\zeta^2$, we find
\begin{equation}\label{eq:psi-1}
\frac{s}{4}[\sigma_3\Psi_1, \Psi_1]+\frac{s}{4}[\Psi_2, \sigma_3]-\beta[\Psi_1, \sigma_3]-\Psi_1=A_1-A_2.
 \end{equation}
 Hence, combining the definition \eqref{eq:H-psi}  with \eqref{eq:A-k} and   \eqref{eq:psi-1}, we obtain  \eqref{eq:H}. By \eqref{eq:H},   it is readily verified  that the Hamiltonian system  \eqref{eq:H-equation} is equivalent to the system of equations \eqref{eq:cpv}.
This completes the proof of the proposition.  \end{proof}

We then derive several differential identities for later use.
\begin{prop}\label{Pro: diff-identities-psi} Let $\Psi^{(1)}$ and $\Psi^{(2)}$ be the matrix functions
in the asymptotic behaviors of $\Psi(\zeta;s)$ near $\zeta=1$ and $\zeta=-1$, defined respectively in \eqref{eq: Psi-1} and  \eqref{eq: Psi-1 minus}, we have
\begin{equation}\label{eq:Psi-1-diff}
u_k(s)v_k(s)=\frac{(-1)^{k}i}{ \pi} e^{(-1)^{k+1}\pi i(\alpha-\beta)}\frac{d}{ds}   \left(\Psi^{(k)}_1(s) \right  )_{21}  \quad k=1,2.\end{equation}
 Moreover,  we have
 \begin{equation}\label{eq:y-def}y(s)=\frac{  \left (\Psi^{(0)}_0 (s)\right  )_{11}}{ \left (\Psi^{(0)}_0 (s)\right )_{21}}
 \end{equation} and
 \begin{equation}\label{eq:y-diff}
 s\frac{d}{ds}\ln y=u_1(s)\left(v_1(s)-1\right )^2+u_2(s)\left(v_2(s)-1\right )^2+2\beta.
 \end{equation}
 Denoting
\begin{equation}\label{eq:d}
d(s)=2\alpha \left(\Psi^{(0)}_0(s)\right)_{11}\left(\Psi^{(0)}_0(s)\right )_{21},\end{equation}
we have
\begin{equation}\label{eq:d-diff}
s\frac{d}{ds}\ln d(s)
=b_1(s)+b_2(s)=- u_1(s)\left(v_1^2(s)-1\right )-u_2(s)\left(v_2^2(s)-1\right ) +2\alpha.\end{equation}
\end{prop}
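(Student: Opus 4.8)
The plan is to extract all four identities from the Lax pair of the preceding proposition by working locally near the three finite singular points $\zeta=0,1,-1$. The structural remark that makes everything work is that in the local representations (d)--(f) the factors $\zeta^{\alpha\sigma_3}$, the triangular logarithmic blocks, and the connection matrices $E^{(0)}_j,E^{(1)}_j,E^{(2)}_j$ are all independent of $s$. Consequently each analytic prefactor $\Psi^{(0)},\Psi^{(1)},\Psi^{(2)}$ satisfies the same $s$-equation $\partial_s\Phi=U(\zeta;s)\Phi$ as $\Psi$, and at each of $\zeta=0,\pm1$ the coefficient $U(\zeta;s)=\tfrac14\zeta\sigma_3+B(s)$ is analytic. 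At each node I will pair two computations: (i) the residue of $L(\zeta;s)$ at that node, which expresses $A_j$ through $\Psi^{(j)}_0$ and the local data; and (ii) the expansion of $\partial_s\Phi=U\Phi$ in the local variable, which produces the $s$-derivatives.

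For \eqref{eq:Psi-1-diff} near $\zeta=1$: write $\Psi=\Psi^{(1)}N\,E^{(1)}_j$ with $N(\zeta)=\left(\begin{smallmatrix}1 & c_1\ln(\zeta-1)\\ 0 & 1\end{smallmatrix}\right)$. Since $N'N^{-1}=(\zeta-1)^{-1}\left(\begin{smallmatrix}0&c_1\\0&0\end{smallmatrix}\right)$ while $\bigl(\tfrac{d}{d\zeta}\Psi^{(1)}\bigr)(\Psi^{(1)})^{-1}$ and the remaining terms of $L$ are analytic at $\zeta=1$, one gets $A_1=\Res_{\zeta=1}L=\Psi^{(1)}_0\left(\begin{smallmatrix}0&c_1\\0&0\end{smallmatrix}\right)(\Psi^{(1)}_0)^{-1}$. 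Comparing with \eqref{eq:A-k} and using $\det\Psi^{(1)}_0=1$ gives $u_1v_1=c_1\,(\Psi^{(1)}_0)_{11}(\Psi^{(1)}_0)_{21}$. For the $s$-derivative, set $\widehat\Psi^{(1)}=(\Psi^{(1)}_0)^{-1}\Psi^{(1)}=I+\Psi^{(1)}_1(\zeta-1)+\cdots$; then $\partial_s\widehat\Psi^{(1)}=\widetilde U\,\widehat\Psi^{(1)}$ with $\widetilde U=(\Psi^{(1)}_0)^{-1}U\Psi^{(1)}_0-(\Psi^{(1)}_0)^{-1}\partial_s\Psi^{(1)}_0$. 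Evaluating at $\zeta=1$, where $\widehat\Psi^{(1)}\equiv I$, forces $\widetilde U(1)=0$; and since $U(\zeta;s)-U(1;s)=\tfrac14(\zeta-1)\sigma_3$, the matrix $\widetilde U$ is exactly linear in $\zeta-1$, so $\widetilde U=\tfrac14(\zeta-1)(\Psi^{(1)}_0)^{-1}\sigma_3\Psi^{(1)}_0$. Matching the coefficient of $\zeta-1$ yields $\frac{d}{ds}\Psi^{(1)}_1=\tfrac14(\Psi^{(1)}_0)^{-1}\sigma_3\Psi^{(1)}_0$, whose $(2,1)$-entry equals $-\tfrac12(\Psi^{(1)}_0)_{11}(\Psi^{(1)}_0)_{21}$. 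Eliminating the product $(\Psi^{(1)}_0)_{11}(\Psi^{(1)}_0)_{21}$ between the two displays and inserting $c_1=-\tfrac{1}{2\pi i}e^{\pi i(\alpha-\beta)}$ gives the case $k=1$; the case $k=2$ is identical near $\zeta=-1$ with $c_2=\tfrac{1}{2\pi i}e^{-\pi i(\alpha-\beta)}$ and $U(\zeta;s)-U(-1;s)=\tfrac14(\zeta+1)\sigma_3$, which again produces $\frac{d}{ds}\Psi^{(2)}_1=\tfrac14(\Psi^{(2)}_0)^{-1}\sigma_3\Psi^{(2)}_0$.

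For \eqref{eq:y-def}, \eqref{eq:y-diff} and \eqref{eq:d-diff} near $\zeta=0$: the residue of $L$ gives $A_0=\alpha\,\Psi^{(0)}_0\sigma_3(\Psi^{(0)}_0)^{-1}$; writing $\Psi^{(0)}_0=\left(\begin{smallmatrix}P&Q\\R&T\end{smallmatrix}\right)$ with $PT-QR=1$, comparison of the entries with \eqref{eq:A-0} forces $y=P/R$, i.e. \eqref{eq:y-def}. The $s$-equation at $\zeta=0$ reads $\frac{d}{ds}\Psi^{(0)}_0=U(0;s)\Psi^{(0)}_0=B(s)\Psi^{(0)}_0$, that is $\dot P=s^{-1}b_1yR$ and $\dot R=s^{-1}b_2P/y$; since $P=yR$, forming logarithmic derivatives gives $s\frac{d}{ds}\ln y=s\frac{d}{ds}\ln(P/R)=b_1-b_2$ and $s\frac{d}{ds}\ln d=s\frac{d}{ds}\ln(2\alpha PR)=b_1+b_2$. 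Expanding $b_1\mp b_2$ from \eqref{eq:b-k} yields the right-hand sides of \eqref{eq:y-diff} and \eqref{eq:d-diff} (this also reproves \eqref{eq:y}).

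The remaining work is routine algebra. The one step that needs genuine care is the residue extraction at $\zeta=\pm1$: one must confirm that the logarithmic block contributes exactly the rank-one polar term $(\zeta-1)^{-1}\Psi^{(1)}_0\left(\begin{smallmatrix}0&c_1\\0&0\end{smallmatrix}\right)(\Psi^{(1)}_0)^{-1}$ and that no stray logarithm or higher-order polar terms survive; this is precisely why one conjugates to $\widehat\Psi^{(1)}$ normalized to the identity at the node, so that $\widetilde U$ vanishes there and is automatically affine in the local coordinate. Consistency of the several overdetermined entrywise identifications (three independent ones from $A_1$, four from $A_0$, and likewise for $A_2$) is guaranteed throughout by $\det\Psi^{(j)}_0=1$, inherited from $\det\Psi\equiv1$ and the unimodularity of the connection and logarithmic factors.
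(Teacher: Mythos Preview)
Your proposal is correct and follows essentially the same approach as the paper: substituting the local representations \eqref{eq: Psi-origin}--\eqref{eq: Psi-1 minus} into the Lax pair \eqref{def: Lax pair} to obtain $A_0=\alpha\Psi^{(0)}_0\sigma_3(\Psi^{(0)}_0)^{-1}$, $A_k=c_k\Psi^{(k)}_0\sigma_+(\Psi^{(k)}_0)^{-1}$, $\frac{d}{ds}\Psi^{(0)}_0=B\Psi^{(0)}_0$, and $\frac{d}{ds}\Psi^{(k)}_1=\tfrac14(\Psi^{(k)}_0)^{-1}\sigma_3\Psi^{(k)}_0$, and then reading off entries. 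Your conjugation to $\widehat\Psi^{(1)}$ normalized at the node is a neat packaging of the same computation---it simultaneously yields both the equation $\partial_s\Psi^{(k)}_0=(\tfrac{(-1)^{k+1}}4\sigma_3+B)\Psi^{(k)}_0$ (encoded as $\widetilde U(\pm1)=0$) and the formula for $\partial_s\Psi^{(k)}_1$---but the substance is identical to the paper's derivation of \eqref{eq:Psi-A-0}--\eqref{eq:Psi-1-d}.
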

\begin{proof}
Substituting  \eqref{eq: Psi-origin},  \eqref{eq: Psi-1} and  \eqref{eq: Psi-1 minus} into the first equation of  \eqref{def: Lax pair}, we find
\begin{equation}\label{eq:Psi-A-0}
A_0(s)= \alpha\Psi^{(0)}_0(s)\sigma_3\left\{\Psi^{(0)}_0(s)\right\}^{-1},\end{equation}
\begin{equation}\label{eq:Psi-A-k}
A_k(s)= c_k \Psi^{(k)}_0(s)\sigma_{+}\left\{\Psi^{(k)}_0(s)\right\}^{-1},\end{equation}
where $\sigma_+=\begin{pmatrix}
                0&  1\\
             0 & 0     \end{pmatrix}$ and the constant $c_k=(-1)^k\frac{1}{2\pi i}e^{(-1)^{k+1}\pi i(\alpha-\beta)} $ for $k=1,2$.
Similarly, upon substitution of \eqref{eq: Psi-origin},  \eqref{eq: Psi-1} and  \eqref{eq: Psi-1 minus} into the second equation of  \eqref{def: Lax pair},
it follows that
\begin{equation}\label{eq:Psi-0-d}
\frac{d}{ds} \Psi^{(0)}_0(s)=B\Psi^{(0)}_0(s), \end{equation}
\begin{equation}\label{eq:Psi-k-d}
\frac{d}{ds} \Psi^{(k)}_0(s)=\left (\frac{(-1)^{k+1} }{4}\sigma_3+B\right ) \Psi^{(k)}_0(s), \end{equation}
and
\begin{equation}\label{eq:Psi-1-d}
\frac{d}{ds} \Psi^{(k)}_1(s)= \frac{1}{4} \left\{\Psi^{(k)}_0(s)\right\}^{-1}\sigma_3 \Psi^{(k)}_0(s) \end{equation}
 for $k=1,2$.
 We then get from \eqref{eq:Psi-A-k} and \eqref{eq:Psi-1-d} that
   \begin{equation}\label{eq:Psi-1-A}
\frac{d}{ds} \left (\Psi^{(k)}_1(s)\right )_{21}=\frac{1}{2c_k} \left(A_k(s)\right )_{11} \end{equation}
for $k=1,2$.
This, together with  \eqref{eq:A-k}, implies \eqref{eq:Psi-1-diff}.

Equation \eqref{eq:y-def} follows from a combination of \eqref{eq:A-0} and \eqref{eq:Psi-A-0}. Indeed, we have
\begin{equation*}
y(s)=\frac{\left(A_0(s)\right )_{11}+\alpha}{\left(A_0(s)\right )_{21} }=\frac{  \left (\Psi^{(0)}_0 (s)\right  )_{11}}{ \left (\Psi^{(0)}_0 (s)\right )_{21}}.
\end{equation*}Here use has been made of the fact that $\det\Psi^{(0)}_0 (s)=1$.
In view of \eqref{eq:b-k} and \eqref{eq:y-def}, and splitting
  \eqref{eq:Psi-0-d} to entries, we derive the logarithmic derivative \eqref{eq:d-diff}
for the gauge parameter $d(s)$. The equation \eqref{eq:y-diff} has already been
derived before; see \eqref{eq:y}.  This completes the proof.
\end{proof}

The following differential identities of the Hamiltonian with respect  to the parameters $\alpha$ and $\beta$ are crucial to the evaluation of the total integral of the Hamiltonian.
\begin{prop}
For the Hamiltonian $H=H(u_1,v_1,u_2,v_2,s;\alpha,\beta)$ it holds
\begin{equation}\label{eq:total diff}
H=\left(u_1\frac{dv_1}{ds}+u_2\frac{dv_2}{ds}-H\right)+\frac{d}{ds}\left(sH+\alpha \ln d(s)-\beta\ln y(s)-2(\alpha^2-\beta^2)\ln s\right),
\end{equation}
where $y(s)$ and $d(s)$ satisfy the properties given in Proposition \ref{Pro: diff-identities-psi}.
The Hamiltonian system implies the following useful differential formulas
\begin{equation}\label{eq:action-diff-a}
\frac{d}{d\alpha}\left(u_1\frac{dv_1}{ds}+u_2\frac{dv_2}{ds}-H\right)=\frac{d}{ds}\left(u_1\frac{d}{d\alpha}v_1+u_2\frac{d}{d\alpha}v_2-\ln d+2\alpha \ln s\right ),
\end{equation}
\begin{equation}\label{eq:action-diff-b}
\frac{d}{d\beta}\left(u_1\frac{dv_1}{ds}+u_2\frac{dv_2}{ds}-H\right)=\frac{d}{ds}\left (u_1\frac{d}{d\beta}v_1+u_2\frac{d}{d\beta}v_2+\ln y-2\beta \ln s\right ).
\end{equation}
\end{prop}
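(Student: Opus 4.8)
All three identities will be obtained by direct computation, starting from the explicit form \eqref{eq:H} of the Hamiltonian, the Hamilton equations \eqref{eq:H-equation}, the coupled Painlev\'e V system \eqref{eq:cpv}, and the logarithmic-derivative formulas \eqref{eq:y-diff}--\eqref{eq:d-diff} for the gauge functions $y(s)$ and $d(s)$ established in Proposition \ref{Pro: diff-identities-psi}. Throughout I write $\,'=d/ds$. The only input beyond bookkeeping is, for \eqref{eq:action-diff-a}--\eqref{eq:action-diff-b}, that the solutions $u_k=u_k(s;\alpha,\beta)$, $v_k=v_k(s;\alpha,\beta)$ of Theorem \ref{thm-asy} depend smoothly on $(\alpha,\beta)$, so that $\partial_s$ commutes with $\partial_\alpha$ and $\partial_\beta$ acting on $v_k$; this is inherited from the analytic dependence of the jump and connection data of the model RH problem on $(\alpha,\beta)$ together with its unique solvability, already available in this section.

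For \eqref{eq:total diff} I would first record the clean formula
\[
\bigl(sH\bigr)'=-\tfrac12 u_1v_1+\tfrac12 u_2v_2 ,
\]
which holds because the only explicit $s$-dependence in the polynomial \eqref{eq:H} is carried by the two monomials $-\tfrac s2 u_1v_1$ and $+\tfrac s2 u_2v_2$, while the contribution $\sum_k\bigl(\partial_{u_k}(sH)\,u_k'+\partial_{v_k}(sH)\,v_k'\bigr)$ vanishes identically by \eqref{eq:H-equation}. A convenient way to organize the rest is to note that $H$ is homogeneous of degree one under the scaling $u_k\mapsto\lambda u_k$, $s\mapsto\lambda s$, $\alpha\mapsto\lambda\alpha$, $\beta\mapsto\lambda\beta$ (immediate from \eqref{eq:H}--\eqref{eq:H-v}, the $v_k$ being left fixed), whence the Euler relation
\[
H=\sum_k u_k\,\partial_{u_k}H+s\,\partial_s H+\alpha\,\partial_\alpha H+\beta\,\partial_\beta H .
\]
Using $\partial_{u_k}H=v_k'$, the formula above for $(sH)'$, and the values $s\,\partial_\alpha H=-u_1(v_1^2-1)-u_2(v_2^2-1)$, $s\,\partial_\beta H=-u_1(v_1-1)^2-u_2(v_2-1)^2$ read off from \eqref{eq:H}, this becomes $2H=u_1v_1'+u_2v_2'-\tfrac12 u_1v_1+\tfrac12 u_2v_2+\alpha\,\partial_\alpha H+\beta\,\partial_\beta H$. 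On the other hand, substituting \eqref{eq:y-diff}--\eqref{eq:d-diff} into $\bigl(\alpha\ln d-\beta\ln y-2(\alpha^2-\beta^2)\ln s\bigr)'$ shows that it equals $\alpha\,\partial_\alpha H+\beta\,\partial_\beta H$ exactly, the $2\alpha^2$ and $-2\beta^2$ terms produced by \eqref{eq:d-diff} and \eqref{eq:y-diff} being cancelled by the $-2(\alpha^2-\beta^2)/s$ coming from the $\ln s$ term. Adding $(sH)'$ then gives $2H=u_1v_1'+u_2v_2'+\bigl(sH+\alpha\ln d-\beta\ln y-2(\alpha^2-\beta^2)\ln s\bigr)'$, which is \eqref{eq:total diff}.

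For \eqref{eq:action-diff-a}--\eqref{eq:action-diff-b} I would use the mechanism underlying the Its--Lisovyy--Prokhorov scheme: the parameter derivative of the ``Lagrangian'' $L=u_1v_1'+u_2v_2'-H$ is a total $s$-derivative. Differentiating $L$ in $\alpha$ along the solution family and applying the chain rule to $H$,
\[
\partial_\alpha L=\sum_k\bigl[(\partial_\alpha u_k)v_k'+u_k\,\partial_\alpha(v_k')\bigr]-\sum_k\bigl[\partial_{u_k}H\,\partial_\alpha u_k+\partial_{v_k}H\,\partial_\alpha v_k\bigr]-\partial_\alpha H ,
\]
and then invoking $\partial_\alpha(v_k')=(\partial_\alpha v_k)'$ together with the Hamilton equations $\partial_{u_k}H=v_k'$, $\partial_{v_k}H=-u_k'$, the terms in $\partial_\alpha u_k$ cancel and the rest collapses to $\sum_k\bigl(u_k\,\partial_\alpha v_k\bigr)'-\partial_\alpha H$. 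Finally $\partial_\alpha H=\tfrac1s\bigl[-u_1(v_1^2-1)-u_2(v_2^2-1)\bigr]$, and rewriting the bracket by \eqref{eq:d-diff} as $s(\ln d)'-2\alpha$ yields $\partial_\alpha H=\bigl(\ln d-2\alpha\ln s\bigr)'$, which is precisely \eqref{eq:action-diff-a}. The derivation of \eqref{eq:action-diff-b} is identical word for word with $\alpha$ replaced by $\beta$: now $\partial_\beta H=\tfrac1s\bigl[-u_1(v_1-1)^2-u_2(v_2-1)^2\bigr]$ and \eqref{eq:y-diff} is used in the form $-u_1(v_1-1)^2-u_2(v_2-1)^2=-s(\ln y)'+2\beta$, giving $\partial_\beta H=-\bigl(\ln y-2\beta\ln s\bigr)'$.

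The computations are elementary and I anticipate no genuine difficulty. The two places needing care are the sign bookkeeping in the term $-\tfrac s2 H_V(u_2,v_2,-s/2;\alpha,\beta)$ of \eqref{eq:H} (it is helpful that the prefactor equals the third argument, so each half of \eqref{eq:H} has the form $(\pm s/2)\,H_V(\,\cdot\,,\,\cdot\,,\pm s/2)$ with matching signs, which keeps the $-suv$ pieces of \eqref{eq:H-v} straight), and the commutation $\partial_s\partial_\alpha v_k=\partial_\alpha\partial_s v_k$ used in \eqref{eq:action-diff-a}--\eqref{eq:action-diff-b}, which is where the analytic dependence of the Theorem \ref{thm-asy} solutions on the parameters---hence the solvability of the model RH problem---is needed.
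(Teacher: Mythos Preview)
Your proof is correct and follows essentially the same route as the paper: both derive $(sH)'=-\tfrac12(u_1v_1-u_2v_2)$ from the Hamilton equations and then verify the three identities by direct substitution using \eqref{eq:H}, \eqref{eq:y-diff} and \eqref{eq:d-diff}. Your use of the Euler homogeneity relation for $H$ under $(u_k,s,\alpha,\beta)\mapsto(\lambda u_k,\lambda s,\lambda\alpha,\lambda\beta)$ is a tidy way to organize the verification of \eqref{eq:total diff}, but it packages the same algebra the paper carries out by expanding; likewise your Lagrangian computation $\partial_\alpha L=\bigl(\sum_k u_k\partial_\alpha v_k\bigr)'-\partial_\alpha H$ is exactly what ``substituting \eqref{eq:H-equation}, \eqref{eq:H} and \eqref{eq:d-diff}'' amounts to in the paper's proof.
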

\begin{proof}
It follows from the Hamiltonian system \eqref{eq:H-equation}  and \eqref{eq:H}  that
\begin{equation}\label{eq:H-der}\frac{d}{ds}(sH(s))=-\frac{1}{2}\left(u_1(s)v_1(s)-u_2(s)v_2(s)\right).
\end{equation}
Using this equation, together with  \eqref{eq:H}, \eqref{eq:y-diff} and \eqref{eq:d-diff},  we readily verify \eqref{eq:total diff}.
The verification of \eqref{eq:action-diff-a} is straightforward, by substituting
\eqref {eq:H-equation},  \eqref{eq:H} and \eqref{eq:d-diff} into it.
While that of \eqref{eq:action-diff-b}   follows from \eqref {eq:H-equation},  \eqref{eq:H}  and \eqref{eq:y-diff}. Here we regard $u_k$ and $v_k$, $k=1,2$, as functions of $\alpha$ and $\beta$, and $\alpha$, $\beta$ and $s$ are independent.
 \end{proof}


 \subsection{Vanishing lemma and existence of solution to the model RH problem  }
We will prove the existence of solution to the model RH problem for $\Psi(\zeta;s)$ if the parameters  $\alpha>-1/2$, $\beta\in i \mathbb{R}$ and $s\in -i(0, +\infty)$.  We start with the vanishing lemma which shows that the homogeneous RH problem has only zero solution.
 \begin{lem}\label{pro: vanishing lemma}
 For $\alpha>-1/2$, $\beta\in i \mathbb{R}$ and $s\in -i(0, +\infty)$, we suppose that $\hat{\Psi}(\zeta)$ satisfies the same jump conditions \eqref{eq:Psi-jump} and the same  behaviors \eqref{eq: Psi-origin}, \eqref{eq: Psi-1} and \eqref{eq: Psi-1 minus} as $\Psi(\zeta)$, respectively as $\zeta$ tends to the origin and $\pm1$. Further assume the behavior of $\hat{\Psi}(\zeta)$ to be
  \begin{equation}\label{eq:hat-Psi-infty}
 \hat{\Psi}(\zeta)=
O\left(\frac{1}{\zeta}\right)\zeta^{-\beta\sigma_3} e^{\frac{1}{4} s\zeta \sigma_3},~~\mbox{as}~ \zeta\to \infty.
   \end{equation}
   Then, we have $\hat{\Psi}(\zeta)=0$ for $\zeta\in \mathbb{C}$.
 \end{lem}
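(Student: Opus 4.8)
The strategy is the standard Deift-Zhou argument for vanishing lemmas: form an auxiliary matrix $H(\zeta) = \hat\Psi(\zeta)\hat\Psi^*(\bar\zeta)$ (where $*$ denotes conjugate transpose), show it is entire by checking that it has no jumps across the contour and only removable singularities at the node points $\zeta = 0, \pm1$, deduce from the decay at infinity that $H\equiv 0$, and then bootstrap this back to $\hat\Psi\equiv 0$. The first thing I would do is reflect $\hat\Psi$ appropriately: since the jump contours $\Sigma_j$ are symmetric under $\zeta\mapsto\bar\zeta$ only in part (the rays come in conjugate pairs, $\Sigma_4$ lies on the imaginary axis, and $\Sigma_6,\Sigma_7$ are real), I would instead introduce $\widetilde\Psi(\zeta) := \hat\Psi(\bar\zeta)^*$, which is analytic off the reflected contour, and assemble $H(\zeta)=\hat\Psi(\zeta)\widetilde\Psi(\zeta)$ on, say, the upper half-plane and its mirror on the lower half-plane, then verify the two pieces glue across $\mathbb R$ and across the remaining contour segments. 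The key point in this step is that for $s\in -i(0,\infty)$ the factor $e^{\frac14 s\zeta\sigma_3}$ has the property that $e^{\frac14 s\zeta}\overline{e^{\frac14 s\bar\zeta}} = e^{\frac14(s\zeta + \bar s\zeta)} $, and with $s$ purely negative-imaginary and the $\zeta^{-\beta\sigma_3}$ factor unimodular on the relevant rays (as $\beta\in i\mathbb R$), the essential singularity and the branch factors from the $\zeta\to\infty$ behavior cancel in the product $H$, leaving $H(\zeta)=O(1/\zeta^2)$ at infinity.

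Next I would check the jumps cancel. On each ray $\Sigma_j$ ($j=1,2,3,5$) the jump matrix $J_j$ is unipotent (a single off-diagonal entry), and with $\alpha-\beta$ having real part $\alpha$ and $\beta$ imaginary, the reflected jump $J_j^*$ combines with $J_j$ so that $H_+ = H_-$; on $\Sigma_4$ the jump $e^{2\pi i\beta\sigma_3}$ is unitary (again since $\beta\in i\mathbb R$, so $2\pi i\beta\in\mathbb R$), which is exactly what makes $\hat\Psi_+\hat\Psi_+^* = \hat\Psi_- e^{2\pi i\beta\sigma_3}e^{-2\pi i\beta\sigma_3}\hat\Psi_-^* = \hat\Psi_-\hat\Psi_-^*$; and on $\Sigma_6,\Sigma_7$ the off-diagonal jump matrices have the form $\begin{pmatrix}0 & -e^{\mp\pi i(\alpha-\beta)}\\ e^{\pm\pi i(\alpha-\beta)} & 0\end{pmatrix}$, which one checks is (a constant times) a unitary/anti-involution so that again the product is continuous. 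Then at the node points: near $\zeta=0$ the local form $\hat\Psi = \hat\Psi^{(0)}\zeta^{\alpha\sigma_3}E^{(0)}_j$ gives $H = O(\zeta^{2\alpha})\cdot(\text{bounded})$, so the condition $\alpha>-1/2$ (in fact $\alpha > -1$ would do for integrability, but $\alpha>-1/2$ is what is assumed) ensures the isolated singularity at $0$ is at worst $L^1_{\text{loc}}$ and hence, being also a point singularity of an analytic function, removable; near $\zeta = \pm1$ the logarithmic factors $\ln(\zeta\mp1)$ grow slower than any power, so those singularities are removable as well. Consequently $H$ extends to an entire function that is $O(1/\zeta^2)$ at infinity, whence $H\equiv 0$ by Liouville.

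Finally, from $H(\zeta)=\hat\Psi(\zeta)\widetilde\Psi(\zeta)=0$ on an open set one concludes (since $\hat\Psi$ and $\widetilde\Psi$ are analytic and not identically singular) that $\hat\Psi$ is rank-deficient everywhere; writing $\hat\Psi = \begin{pmatrix}\hat\Psi_{11} & \hat\Psi_{12}\\ \hat\Psi_{21} & \hat\Psi_{22}\end{pmatrix}$, the rows are proportional, and combining this with the explicit jump conditions and the decay $\hat\Psi = O(1/\zeta)$ at infinity forces each entry to be an analytic function vanishing at infinity with removable singularities at the nodes and prescribed multiplicative jumps that are incompatible with a nonzero such function — a short scalar Liouville-type argument then kills $\hat\Psi$ entirely. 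I expect the main obstacle to be the bookkeeping at the node points and along $\Sigma_6,\Sigma_7$: one must be careful that the sector-dependent connection matrices $E^{(0)}_j, E^{(1)}_j, E^{(2)}_j$ are consistent with the reflection, and that the logarithmic (rather than power-law) local behavior at $\zeta=\pm1$ — which is unusual compared with the classical confluent-hypergeometric parametrix — still yields removable singularities in the product $H$; verifying positivity/unitarity of the relevant jump matrices under the hypotheses $\alpha>-1/2$, $\beta\in i\mathbb R$, $s\in -i(0,\infty)$ is where the precise assumptions are really used.
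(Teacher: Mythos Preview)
Your plan has a genuine gap at the very first step: the matrix $H(\zeta)=\hat\Psi(\zeta)\hat\Psi(\bar\zeta)^*$ is \emph{not} entire, because the jumps do not cancel in the way you assert.

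Take $\Sigma_4$ first. With $\beta=ib$ ($b\in\mathbb R$) the jump $e^{2\pi i\beta\sigma_3}=\mathrm{diag}(e^{-2\pi b},e^{2\pi b})$ is a positive real diagonal matrix, hence Hermitian but \emph{not} unitary; your computation $\hat\Psi_-\,e^{2\pi i\beta\sigma_3}e^{-2\pi i\beta\sigma_3}\,\hat\Psi_-^*$ implicitly assumed $(e^{2\pi i\beta\sigma_3})^*=e^{-2\pi i\beta\sigma_3}$, which is false here. The same failure occurs on $\Sigma_6$: for $x\in(0,1)$, continuity of $H$ across the real axis would require $J_6=J_6^*$, but with $\overline{\alpha-\beta}=\alpha+\beta$ one gets $J_6^*=\begin{pmatrix}0&e^{\pi i(\alpha+\beta)}\\ -e^{-\pi i(\alpha+\beta)}&0\end{pmatrix}\neq J_6$. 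On the oblique rays $\Sigma_1,\ldots,\Sigma_5$ the triangular jumps are unipotent, not unitary, and again the product does not glue. So the Liouville step never gets off the ground.

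The paper's argument is structurally different and this difference is essential. One first normalises at infinity and opens the lenses so that \emph{all} jumps lie on the real axis (transformations $\hat\Psi\to\hat\Psi^{(1)}\to\hat\Psi^{(2)}$). Then $Q(\zeta)=\hat\Psi^{(2)}(\zeta)\,\hat\Psi^{(2)}(\bar\zeta)^*$ is analytic in the upper half-plane and decays like $1/\zeta^2$, so Cauchy's theorem gives $\int_{\mathbb R}Q_+(x)\,dx=0$. One does \emph{not} claim $Q$ is entire; instead one adds the Hermitian conjugate of this identity. On $(-1,1)$ the jump $\hat J$ is purely off-diagonal and its contribution to $\hat J^{-1}+(\hat J^{-1})^*$ vanishes; on $(-\infty,-1)\cup(1,\infty)$ the $(1,1)$ entry of $\hat J$ survives and produces a positive-semidefinite integrand, forcing the first column of $\hat\Psi^{(2)}_+$ to vanish there. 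Analytic continuation then kills the first column in the upper half-plane and the second in the lower; the remaining two entries are disposed of by a Carlson-type argument (exponential decay along the imaginary axis). Your final ``rank-deficiency plus scalar Liouville'' step is too vague to replace this last part --- the scalar problems still carry nontrivial jumps, and it is precisely Carlson's theorem that closes the argument.
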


 \begin{proof}
 To normalize the behavior as $\zeta\to\infty$, we introduce the first transformation
 \begin{equation}\label{def:hat-Psi-1}
  \hat{\Psi}^{(1)}(\zeta)=\hat{\Psi}(\zeta)e^{-\frac{1}{2}\pi i\beta\sigma_3}
\varphi(\zeta)^{\beta\sigma_3}e^{-\frac{1}{4} s\zeta \sigma_3},~~\zeta\not\in \Sigma_4\cup\Sigma_6\cup\Sigma_7,
   \end{equation}
   where the contours are illustrated in  Figure \ref{Model RH contour}, and
   \begin{equation}\label{def:varphi}
 \varphi(\zeta)=\zeta+\sqrt{\zeta^2-1}.
 \end{equation}
 The branch for $\sqrt{\zeta^2-1}$ is taken such that  $\arg (\zeta\pm 1)\in(-\pi, \pi) $ and the branch for $\zeta^{\beta}$ is  taken as in \eqref{eq:hat-Psi-2-infty}, such that $\arg \zeta\in(-\pi/2, 3\pi/2) $.
 Then, we have the normalized  behavior at infinity
  \begin{equation}\label{eq:hat-Psi-1-infty}
 \hat{\Psi}^{(1)}(\zeta)=
O\left(\frac{1}{\zeta}\right),
   \end{equation}
and the modified jump conditions
 \begin{equation}\label{eq:hat-Psi-1-jump}
 \hat{\Psi}^{(1)}_+(\zeta)=\hat{\Psi}^{(1)}_-(\zeta)
 \left\{
 \begin{array}{ll}
    \begin{pmatrix}
                                 1 & 0 \\
                               e^{-\pi i\alpha}\varphi(\zeta)^{2\beta}e^{-\frac{1}{2}s\zeta}& 1
                                 \end{pmatrix}, &  \zeta \in \Sigma_1, \\[.5cm]
    \begin{pmatrix}
                                1 &0\\
                              e^{\pi i(\alpha-2\beta)}\varphi(\zeta)^{2\beta}e^{-\frac{1}{2}s\zeta}&1
                                \end{pmatrix},  &  \zeta \in \Sigma_2, \\[.5cm]
    \begin{pmatrix}
                                 1 &-e^{-\pi i(\alpha-2\beta)} \varphi(\zeta)^{-2\beta}e^{\frac{1}{2}s\zeta}\\
                                 0 &1
                                 \end{pmatrix}, &   \zeta \in \Sigma_3, \\[.5cm]
   \begin{pmatrix}
                                 1 &-e^{\pi i\alpha} \varphi(\zeta)^{-2\beta}e^{\frac{1}{2}s\zeta}\\
                                 0 &1
                                 \end{pmatrix}, &   \zeta \in \Sigma_5, \\[.5cm]
                                 \begin{pmatrix}
                                0 &-e^{\pi i\alpha}e^{\frac{1}{2}s\zeta} \\
                                e^{-\pi i\alpha} e^{-\frac{1}{2}s\zeta}&0
    \end{pmatrix}, &   \zeta \in \Sigma_6, \\[.5cm]
                                 \begin{pmatrix}
                                0 &-e^{-\pi i\alpha} e^{\frac{1}{2}s\zeta}\\
                                e^{\pi i\alpha}e^{-\frac{1}{2}s\zeta} &0
                                 \end{pmatrix}, &   \zeta \in\Sigma_7.
 \end{array}  \right .  \end{equation}

 To deform the jump contours to the real axis, we take the second transformation
 \begin{equation}\label{def:hat-Psi-2}
 \hat{\Psi}^{(2)}(\zeta)=\hat{\Psi}^{(1)}(\zeta)
 \left\{
 \begin{array}{ll}
    \begin{pmatrix}
                                 1 & 0 \\
                               e^{-\pi i\alpha}\varphi(\zeta)^{2\beta}e^{-\frac{1}{2}s\zeta}& 1
                                 \end{pmatrix}, & \Im\zeta>0,~\zeta \in \Omega_5, \\[.5cm]
      \begin{pmatrix}
                                 1 &e^{\pi i\alpha} \varphi(\zeta)^{-2\beta}e^{\frac{1}{2}s\zeta}\\
                                 0 &1
                                 \end{pmatrix}, &  \Im\zeta<0,~\zeta \in \Omega_5, \\[.5cm]

    \begin{pmatrix}
                                1 &0\\
                              e^{\pi i(\alpha-2\beta)}\varphi(\zeta)^{2\beta}e^{-\frac{1}{2}s\zeta}&1
                                \end{pmatrix},  & \Im\zeta>0,~\zeta \in \Omega_2, \\[.5cm]
    \begin{pmatrix}
                                 1 &e^{-\pi i(\alpha-2\beta)} \varphi(\zeta)^{-2\beta}e^{\frac{1}{2}s\zeta}\\
                                 0 &1
                                 \end{pmatrix}, &   \Im\zeta<0,~\zeta \in \Omega_2, \\[.5cm]

                               I, &  \mbox{otherwise}.
 \end{array}  \right .  \end{equation}
Then, $\hat{\Psi}^{(2)}(\zeta)$ is analytic in $\mathbb{C}\setminus\mathbb{R}$, with jump condition   on the
real axis as follows:
 \begin{equation}\label{eq:hat-Psi-2-jump}
\hat{ \Psi}^{(2)}_+(\zeta)=\hat{\Psi}^{(2)}_-(\zeta)\hat{J}(\zeta),   ~~ \zeta\in\mathbb{R},  \end{equation}
where the orientation is from left to right, and the jump matrices are
\begin{equation*}
\hat{J}(\zeta)= \left\{
 \begin{array}{ll}
    \begin{pmatrix}
                                 0 & -e^{\pi i\alpha} \varphi(\zeta)^{-2\beta}e^{\frac{1}{2}s\zeta} \\
                               e^{-\pi i\alpha}\varphi(\zeta)^{2\beta}e^{-\frac{1}{2}s\zeta}& 1
                                 \end{pmatrix}, &  \zeta >1, \\[.5cm]
       \begin{pmatrix}
                                0 &-e^{\pi i\alpha}e^{\frac{1}{2}s\zeta}\\
                                e^{-\pi i\alpha}e^{-\frac{1}{2}s\zeta} &0
                                 \end{pmatrix}, &  0< \zeta <1, \\[.5cm]
                                 \begin{pmatrix}
                                0 &-e^{-\pi i\alpha}e^{\frac{1}{2}s\zeta} \\
                               e^{\pi i\alpha}e^{-\frac{1}{2}s\zeta}&0
                                 \end{pmatrix}, & -1<\zeta <0,\\[.5cm]
          \begin{pmatrix}
                                0 &-e^{-\pi i(\alpha-2\beta)} \varphi(\zeta)^{-2\beta}e^{\frac{1}{2}s\zeta}\\
                              e^{\pi i(\alpha-2\beta)}\varphi(\zeta)^{2\beta}e^{-\frac{1}{2}s\zeta}&1
                                \end{pmatrix},  & \zeta<-1. \\[.4cm]

 \end{array}  \right .
\end{equation*}
Moreover, $\hat{\Psi}^{(2)}(\zeta)$ has at most logarithmic singularities at $\pm 1$ and fulfills  the asymptotic estimates
 \begin{equation}\label{eq:hat-Psi-2-0}
 \hat{\Psi}^{(2)}(\zeta)=O(1)\zeta^{\alpha\sigma_3} \quad\mbox{as}~~\zeta\in \Omega_1,~\zeta\to 0,
\end{equation}
and
 \begin{equation}\label{eq:hat-Psi-2-infty}
 \hat{\Psi}^{(2)}(\zeta)=
O\left(\frac{1}{\zeta}\right) \quad\mbox{as}~~\zeta\to\infty.
   \end{equation}

 Next, we consider
   \begin{equation}\label{eq:Q}
Q(\zeta)= \hat{\Psi}^{(2)}(\zeta)\left( \hat{\Psi}^{(2)}(\bar{\zeta})\right)^*, \zeta\in \mathbb{C}\setminus\mathbb{R},
   \end{equation}
where $\left( \hat{\Psi}^{(2)}(\zeta)\right)^*$ denotes the Hermitian conjugate of $\hat{\Psi}^{(2)}(\zeta)$.
We see that $Q(\zeta)$ is analytic for $\zeta\in \mathbb{C}\setminus\mathbb{R}$. For $\alpha>-1/2$, we have
\begin{equation}\label{eq:integral Q}
\int_{\mathbb{R}}Q_{+}(x)dx= \int_{\mathbb{R}} \hat{\Psi}^{(2)}_{+}(x)\left(\hat{J}^{-1}(x)\right)^*\left( \hat{\Psi}^{(2)}_+(x)\right)^* dx=0,\end{equation}
where the jump $\hat{J}(x)$ is  introduced in \eqref{eq:hat-Psi-2-jump}.
 Adding to \eqref{eq:integral Q} its Hermitian conjugate, for purely imaginary $s$ and $\beta$, we have
 \begin{equation}\label{eq:integral Q-Re part}
\int_{-\infty}^{-1} \hat{\Psi}^{(2)}_{+}(x) \begin{pmatrix}
                                1 &0 \\
                               0&0
                                 \end{pmatrix}
                                 \left( \hat{\Psi}^{(2)}_+(x)\right)^* dx+\int^{\infty}_{1} \hat{\Psi}^{(2)}_{+}(x) \begin{pmatrix}
                                1 &0 \\
                               0&0
                                 \end{pmatrix}
                                 \left( \hat{\Psi}^{(2)}_+(x)\right)^* dx=0. \end{equation}
  Thus, the first column of $\hat{\Psi}^{(2)}_{+}(x)$  vanishes for $x>1$ and $x<-1$.  From \eqref{eq:hat-Psi-2-jump} one sees that  $\hat{\Psi}^{(2)}(\zeta)$ can
  be analytically extended  from the upper-half   complex plane to the lower-half, acrossing $(1,\infty)$. Thus,  the first column of $\hat{\Psi}^{(2)}(\zeta)$ vanishes in the upper-half $\zeta$-plane. This and the jump condition \eqref{eq:hat-Psi-2-jump} then imply the vanishing of the second column of $\hat{\Psi}^{(2)}(\zeta)$ in the lower-half   plane.

To show the vanishing of the remaining column of   $\hat{\Psi}^{(2)}(\zeta)$ in the whole plane, we consider two
scalar functions defined as
\begin{equation}\label{def: g-k}
g_k(\zeta) =\left\{\begin{array}{ll}
-\left(\hat{\Psi}^{(2)}(\zeta)\right)_{k2}e^{-\frac{s\zeta}{4}}\varphi(\zeta)^{2\beta}e^{-\pi i\alpha}, & \Im \zeta>0,\\
\left(\hat{\Psi}^{(2)}(\zeta)\right)_{k1}e^{\frac{s\zeta}{4}},& \Im \zeta<0,
\end{array}\right.\end{equation}
for $k=1,2$.
Then,  it follows from \eqref{eq:hat-Psi-2-jump} and \eqref{eq:hat-Psi-2-infty}  that $g_k(\zeta)$ is analytic and bounded
for $\zeta\not\in(-\infty,1]$ and $g_k(\zeta)=O(e^{-|s\zeta|/4})$ for purely imaginary $\zeta$
 and $is>0$. Applying Carlson's theorem,
we have $g_k(\zeta)=0$ for $k=1,2$. This then implies that the second column of   $\hat{\Psi}^{(2)}(\zeta)$ vanishes in the upper-half $\zeta$-plane, and the first column of   $\hat{\Psi}^{(2)}(\zeta)$ also vanishes in the lower-half  plane. This completes the proof the lemma.
   \end{proof}

By a standard analysis \cite{dkmv1,fmz,fz,z}, the vanishing lemma implies the existence of  unique solution to the RH problem for $\Psi(\zeta;s)$ for
the parameters $\alpha>-1/2$, $\beta\in i \mathbb{R}$ and $s\in -i(0, +\infty)$.

\begin{prop}\label{pro: existence of solution}
 For $\alpha>-1/2$, $\beta\in i \mathbb{R}$ and $s\in -i(0, +\infty)$, there exists unique solution to the RH problem for $\Psi(\zeta;s)$.
Then, $\Psi(\zeta;s)$ is a global function of $s$ on $-i(0,+\infty)$, that is, it is free of poles  for $s\in -i(0, +\infty)$. Particularly, the Hamiltonian $H(s)$  \eqref{eq:H-psi} is free of poles for $s\in -i(0, +\infty)$.
 \end{prop}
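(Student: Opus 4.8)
The plan is to recast the model RH problem for $\Psi(\zeta;s)$ as a singular integral equation and to use the vanishing lemma, Lemma~\ref{pro: vanishing lemma}, to establish the invertibility of the associated operator. First I would apply the same normalizations as in the proof of Lemma~\ref{pro: vanishing lemma}: conjugate away the factor $\zeta^{-\beta\sigma_3}e^{\frac{1}{4}s\zeta\sigma_3}$ in the behavior at infinity and open the lenses, reducing the problem to an equivalent RH problem on the real line whose jump matrix is bounded, decays exponentially to $I$ at $\pm\infty$ (since $is>0$), and whose only singularities, at $0$ and $\pm1$, are of the admissible root/logarithmic type dictated by \eqref{eq: Psi-origin}, \eqref{eq: Psi-1} and \eqref{eq: Psi-1 minus}. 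On the appropriate weighted $L^2$ space over this contour --- well defined precisely because $\alpha>-1/2$ --- this RH problem is equivalent to a singular integral equation $(I-C_s)\mu=\mu_0$ with $C_s$ the associated Cauchy-type operator, and $I-C_s$ is bounded and Fredholm of index zero by the standard analysis of \cite{dkmv1,fmz,fz,z}. Its kernel is exactly the space of solutions of the homogeneous RH problem treated in Lemma~\ref{pro: vanishing lemma}, which that lemma shows to be $\{0\}$ for $\alpha>-1/2$, $\beta\in i\mathbb{R}$ and $s\in -i(0,+\infty)$. Hence $I-C_s$ is invertible, and the RH problem for $\Psi(\zeta;s)$ has a unique solution.

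For the dependence on $s$, I would note that the jump matrices $J_k$ in \eqref{eq:Psi-jump} are independent of $s$, so the $s$-dependence of $C_s$ enters only through the entire factor $e^{\frac{1}{4}s\zeta\sigma_3}$ introduced by the normalization; hence $s\mapsto C_s$ is analytic in $s$ near the ray $-i(0,+\infty)$. By analytic Fredholm theory, $(I-C_s)^{-1}$ --- and with it $\Psi(\zeta;s)$ together with all the coefficients $\Psi_1(s),\Psi_2(s),\Psi^{(0)}_0(s),\Psi^{(1)}_0(s),\Psi^{(2)}_0(s),\Psi^{(k)}_1(s)$ read off from \eqref{eq:Psi-infty}--\eqref{eq: Psi-1 minus} --- is meromorphic in $s$, with poles confined to the discrete set where $I-C_s$ fails to be invertible, i.e.\ where the homogeneous RH problem admits a nonzero solution. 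By Lemma~\ref{pro: vanishing lemma} this set is empty on $-i(0,+\infty)$, so $\Psi(\zeta;s)$ is pole-free there; in particular $\Psi$ is a genuine global solution of the Lax pair \eqref{def: Lax pair} along the whole ray, and the coupled Painlev\'e V system \eqref{eq:cpv} acquires the solution $(u_1,v_1,u_2,v_2)$ determined by $\Psi$, pole-free on $-i(0,+\infty)$.

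The statement about the Hamiltonian is then immediate: by \eqref{eq:H-psi} one has $H(s)=-\frac{1}{2}(\Psi_1)_{11}(s)-\frac{1}{s}(\alpha^2-\beta^2)$, and since $(\Psi_1)_{11}(s)$ is analytic on $-i(0,+\infty)$ by the previous step while $s\neq0$ there, $H(s)$ is pole-free on $-i(0,+\infty)$. I expect the only delicate point to be the first step: arranging the functional-analytic framework --- in particular the weighted space accommodating the root-type singularity $\zeta^{\alpha\sigma_3}$ at $0$ for $\alpha>-1/2$ and the logarithmic singularities at $\pm1$, together with the decay at infinity --- so that $I-C_s$ is genuinely Fredholm of index zero with kernel equal to the homogeneous solutions. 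Once that framework is in place, Lemma~\ref{pro: vanishing lemma} provides the essential input, and the parameter-dependence argument together with the deduction for $H$ are routine.
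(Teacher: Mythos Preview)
Your proposal is correct and follows essentially the same route as the paper: the paper does not give a detailed proof of this proposition but simply states that, by a standard analysis \cite{dkmv1,fmz,fz,z}, the vanishing lemma implies existence and uniqueness of the solution to the RH problem for $\Psi(\zeta;s)$, and hence the pole-freeness of $\Psi$ and of $H(s)$ on $-i(0,+\infty)$. What you have written is precisely a careful unpacking of that ``standard analysis'' --- the reduction to a singular integral equation, the Fredholm index-zero argument, the identification of the kernel with the homogeneous problem handled by Lemma~\ref{pro: vanishing lemma}, and the analytic Fredholm continuation in $s$ --- so there is no substantive difference in approach.
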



 \subsection{Special function solutions}\label{subsec:special function-sol}
Now we   show that the coupled Painlev\'e V system \eqref{int:cpv} admits special function solutions for some particular parameters.
The special solutions are useful in our derivation of the total integral of the Hamiltonian.

For $\beta=0$, the jump on $\Sigma_4$ vanishes; cf. \eqref{eq:Psi-jump}. It is readily verified  that $\sigma_1\Psi(-\zeta)\sigma_1$ also solves  the RH problem for $\Psi(\zeta)$, formulated in Section \ref{sec:model-RHP}.
 Then, it follows from the uniqueness of solution to the RH problem for $\Psi(\zeta)$ that
\begin{equation}\label{eq: symmetry}
\sigma_1\Psi(-\zeta)\sigma_1=\Psi(\zeta),
\end{equation}where $\sigma_1$ is the Pauli matrix; cf. \eqref{Pauli-matrix}.
Substituting   \eqref{eq: symmetry} into \eqref{eq: Psi-origin},  we have the constraint
$\Psi^{(0)}_0(s)=\sigma_1 \Psi^{(0)}_0(s)\sigma_3$. Noting that $\det\Psi^{(0)}_0(s)=1$, we can write
\begin{equation}\label{eq: Psi-0}
\Psi^{(0)}_0(s)=\frac{ I-i\sigma_2 }{\sqrt{2}}l(s)^{\sigma_3}
\end{equation}
for a certain scalar  function $l(s)$.
We define
\begin{equation}\label{def: P}
P(\zeta)=e^{-\frac{1}{4}\pi i\sigma_3}\zeta^{-\frac{1}{4}\sigma_3}\frac{(I+i\sigma_2)}{\sqrt{2}}\Psi(\sqrt{\zeta})e^{\frac{1}{4}\pi i\sigma_3}\left\{ \begin{array}{ll}
                      I, & \hbox{$0<\arg \zeta<\pi$,} \\
                    \left(\begin{array}{cc}
                                                             0 & -1 \\
                                                         1 & 0
                                                           \end{array}\right), & \hbox{$-\pi<\arg \zeta<0$.}
                    \end{array}
\right.
\end{equation}
Then $P(\zeta)$  satisfies the following model RH problem.

\begin{figure}[ht]
 \begin{center}
 \includegraphics[width=8 cm]{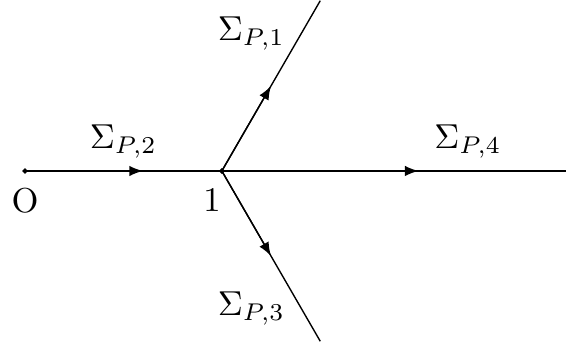} \end{center}
  \caption{\small{Contours  for the RH problem of $P(\zeta)$ }}
 \label{figure: Bessel model RHP}
 \end{figure}

\begin{description}
  \item (a)   $P(\zeta)$ is analytic in
  $\mathbb{C}\backslash \left\{\bigcup_{k=1}^4\Sigma_{P,k}\right\}$, where
  $ \Sigma_{P,1}=1+e^{\frac{ \pi i}{3}}\mathbb{R}^{+}$, $ \Sigma_{P,2}=(0,1)$,
  $ \Sigma_{P,3}=1+e^{-\frac{\pi i }{3}}\mathbb{R}^{+}$ and  $ \Sigma_{P,4}=(1,+\infty)$, as illustrated in Figure \ref{figure: Bessel model RHP}.
  \item (b)   $P(\zeta)$  satisfies the jump condition
  \begin{equation}\label{eq:P-jump}
P_+(\zeta)=P_-(\zeta) J_{P,k}(\zeta), \quad  \zeta\in \Sigma_{P,k}, \quad k=1,2,3,4,
  \end{equation}
where
\begin{equation*}
 J_{P,1}=\left(\begin{array}{cc}
                                                             1 & 0 \\
                                                         e^{-\pi i\left(\alpha-\frac 1 2\right)} & 1
                                                           \end{array}\right), ~ J_{P,2}=e^{\pi i\left(\frac 1 2-\alpha\right )\sigma_3},~ J_{P,3}=\left(\begin{array}{cc}
                                                             1 & 0 \\
                                                            e^{\pi i\left(\alpha-\frac 1 2\right)} & 1
                                                           \end{array}\right),~ J_{P,4}=i\sigma_2.
\end{equation*}

  \item (c)  $P(\zeta)$ has the asymptotic behavior as $\zeta\to\infty$
  \begin{equation}\label{eq: P-infinity-1}
P(\zeta)=\zeta^{-\frac{1}{4}\sigma_3}\frac{I-i\sigma_1}{\sqrt{2}}
   \left (I+O\left (\frac 1{\sqrt{\zeta}}\right )\right)e^{\frac{1}{4}s\sqrt{\zeta}\sigma_3}, \quad 0<\arg \zeta<\pi , \end{equation}
and
 \begin{align}\label{eq: P-infinity-2}
P(\zeta)&=\zeta^{-\frac{1}{4}\sigma_3}\frac{I-i\sigma_1}{\sqrt{2}}
   \left (I+O\left (\frac 1{\sqrt{\zeta}}\right )\right)e^{\frac{1}{4}s\sqrt{\zeta}\sigma_3}(-i\sigma_2) \nonumber\\
   &=\zeta^{-\frac{1}{4}\sigma_3}e^{-\frac{\pi i}{2}\sigma_3}\frac{I-i\sigma_1}{\sqrt{2}}
   \left (I+O\left (\frac 1{\sqrt{\zeta}}\right )\right)e^{-\frac{1}{4}s\sqrt{\zeta}\sigma_3},
   \quad -\pi<\arg \zeta<0. \end{align}

  \item (d)  $P(\zeta)$ satisfies the asymptotic behavior  near the origin
  \begin{equation}\label{eq: P-zero}
P(\zeta)=(I+c(s)\sigma_+)\left(I+O(\sqrt\zeta)\right)l(s)^{\sigma_3}
\zeta^{(\frac{\alpha}{2}-\frac{1}{4})\sigma_3},\end{equation}where $c(s)$ is a scalar function depending only on $s$, $l(s)$ is introduced in \eqref{eq: Psi-0}, and the branch of the power function is chosen such that $\arg\zeta\in (0, 2\pi)$.
  \item (e)  $P(\zeta)$ satisfies the asymptotic behavior  as $\zeta\to 1$
  \begin{equation}\label{eq: P-1}
P(\zeta)=O(\ln(\zeta-1)).\end{equation}
\end{description}

The RH problem is equivalent to the model RH problem considered by the authors in \cite{xz-2015}. We can write \begin{equation}\label{eq: P-Psi-xz}
P(\zeta;s)=ie^{-\frac{\pi i}4\sigma_3}2^{-\frac 1 2\sigma_3}\sigma_3\sigma_1\Psi_0^{XZ}\left (\frac{1-\zeta}4,is\right )\sigma_3,\end{equation}
where $\Psi_0^{XZ}(\zeta,s)$ is the solution to the model problem for $\Psi_0(\zeta,s)$, formulated  in \cite[Section\;1.2]{xz-2015}, with parameters $\gamma=-1/2$ and $\Theta=1/2-\alpha$  therein.

If we further assume that the parameter $\alpha= {1}/{2}$, the RH problem for $P(\zeta)$ can be solved explicitly in terms of the modified Bessel functions
\begin{equation}\label{eq: P-solution-special}
P(\zeta)=\left(I-\frac{I_0'(|s|/4)}{I_0(|s|/4)}\sigma_-\right)e^{-\frac{\pi i}4\sigma_3}\left(\frac{|s|}{4}\right)^{\frac 12 \sigma_3}\sigma_3\Phi_{B}\left(\frac{s^2}{16}(\zeta-1)\right)\sigma_3,\end{equation}
where $\Phi_B(\zeta)$ is defined in \eqref{eq:phi-B-solution}, and $\sigma_-=\begin{pmatrix}
                0& 0 \\
            1& 0     \end{pmatrix}$.  The  lower triangular matrix is chosen to ensure the fulfilment of the asymptotic behavior \eqref{eq: P-zero}.   From \eqref{eq: phi-B-infinity}, we have
the large-$\zeta$ behavior
\begin{equation}\label{eq: P-infinity-refined}
\begin{array}{rl}
 P(\zeta)= & \zeta^{-\frac{1}{4}\sigma_3}\frac{I-i\sigma_1}{\sqrt{2}}
   \left \{I+\frac {1
   }{\sqrt{\zeta}}\left(\frac{1}{2s}\left(
                     \begin{array}{cc}
                       -1 & 2i \\
                       2i & 1 \\
                     \end{array}
                   \right)-\frac{s}{8}\sigma_3-\frac{I_0'(|s|/4)}{2I_0(|s|/4)}\left(
                     \begin{array}{cc}
                       i & 1\\
                       1 &- i \\
                     \end{array}
                   \right)\right) +O\left (\frac 1{\zeta}\right ) \right\} \\
  &\times e^{\frac{1}{4}s\sqrt{\zeta}\sigma_3}
                                              \end{array}
\end{equation}
for $0<\arg \zeta<\pi$.
Moreover, it follows from \eqref{eq:Bessel-ODE}  that $P(\zeta)$ satisfies the differential equation
\begin{equation}\label{eq:P-ODE-1}
\frac{dP(\zeta)}{d\zeta}=\left(\left(\begin{array}{cc}
                                 0 & 0 \\
                                 \frac{s}{8i} & 0 \\
                                 \end{array}\right)+\frac{is}{8(\zeta-1)}
                             \left(  \begin{array}{cc}
                                 \frac{I_0'(|s|/4)}{I_0(|s|/4)} & 1\\
                             -\frac{I_0'(|s|/4)^2}{I_0(|s|/4)^2} &-\frac{I_0'(|s|/4)}{I_0(|s|/4)} \\
                                 \end{array}
                             \right)\right)P(\zeta).
\end{equation}
From the symmetry of $\Psi(\zeta)$  in \eqref{eq: symmetry} for $\beta=0$, we have
 \begin{equation}\label{eq: symmetry -beta-0}
u_1v_1=-u_2v_2, \quad u_2=-u_1v_1^2, \quad v_2=1/v_1, \quad y(s)=1. \end{equation}
Substituting \eqref{def: P} into \eqref{def: Lax pair}, we see that $P(\zeta)$ solves  the differential equation with coefficients involving  $u_1(s)$ and $v_1(s)$
\begin{equation}\label{eq:P-ODE-2}
\frac{dP(\zeta)}{d\zeta}=\left(\left(\begin{array}{cc}
                                 0 & 0 \\
                                 \frac{s}{8i} &0\\
                                 \end{array}\right)+\frac{u_1}{2(\zeta-1)}
                             \left(  \begin{array}{cc}
                                 1-v_1^2 &  -i(v_1+1)^2 \\
                       - i(v_1-1)^2  &  v_1^2-1\\
                                 \end{array}
                             \right)\right)P(\zeta).
\end{equation}

Now we are in a position   to derive the special function solutions to the  coupled Painlev\'e system of equations \eqref{int:cpv} for $\alpha=1/2$ and $\beta=0$.
The integral of the Hamiltonian associated with the special function solutions is explicitly calculated, and furnishes as  boundary condition in  our evaluation of the total integral of the Hamiltonian for general parameters later.

\begin{prop}\label{pro: special function solution}
For $\alpha=1/2$ and $\beta=0$, the coupled Painlev\'e system of equations \eqref{int:cpv} admits the following special function solutions
\begin{equation}\label{eq:u-special}u_1(s)=\frac{s}{16}\left(i+\frac{I_0'(|s|/4)}{I_0(|s|/4)}\right)^2, \quad u_2(s)=-\frac{s}{16}\left(i-\frac{I_0'(|s|/4)}{I_0(|s|/4)}\right)^2,
\end{equation}
\begin{equation}\label{eq:v-special}v_1(s)=\frac{i-\frac{I_0'(|s|/4)}{I_0(|s|/4)}}{i+\frac{I_0'(|s|/4)}{I_0(|s|/4)}}, \quad v_2(s)=\frac{i+\frac{I_0'(|s|/4)}{I_0(|s|/4)}}{i-\frac{I_0'(|s|/4)}{I_0(|s|/4)}},
\end{equation}
and the Hamiltonian associated with $u_k(s)$ and $v_k(s)$ takes the form
\begin{equation}\label{eq: H-special solution}
H\left(u_1,v_1,u_2,v_2,s;  {1}/{2},0\right)=\frac{s}{16}+\frac{i}{4}\frac{I_0'(|s|/4)}{I_0(|s|/4)},\end{equation}
where
\begin{equation*}
 I_0(s)=\sum_{n=0}^{\infty}\frac 1 {n!^2}\left(\frac{s^2 }{4}\right)^n
\end{equation*}
is the modified Bessel function of order zero.
\end{prop}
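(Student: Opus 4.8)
The strategy is to compare the two descriptions of the $\zeta$-derivative of $P(\zeta)$ available at $\alpha=1/2$, $\beta=0$: the closed form \eqref{eq:P-ODE-1}, whose coefficient matrix involves only $r(s):=I_0'(|s|/4)/I_0(|s|/4)$ and comes from the explicit Bessel-function representation \eqref{eq: P-solution-special}, and the form \eqref{eq:P-ODE-2}, whose coefficient matrix involves $u_1(s)$ and $v_1(s)$ and is obtained by inserting the definition \eqref{def: P} of $P$ into the Lax pair \eqref{def: Lax pair} (the $\beta=0$ symmetry \eqref{eq: symmetry -beta-0} having already been used to eliminate $u_2,v_2$). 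By the unique solvability of the model RH problem for $\Psi$, hence for $P$ (Proposition \ref{pro: existence of solution}), the two rational-in-$\zeta$ coefficient matrices must coincide.

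First I would match them. The $\zeta$-independent parts of \eqref{eq:P-ODE-1} and \eqref{eq:P-ODE-2} (a single entry $s/(8i)$ in the $(2,1)$ slot) already agree, so the content lies in the residues at the simple pole $\zeta=1$. Equating the entries of the residue matrices, and discarding the $(2,2)$ entry which is redundant since both matrices are trace free, gives the three scalar relations
\[
u_1(v_1+1)^2=-\tfrac{s}{4},\qquad u_1(1-v_1^2)=\tfrac{is}{4}\,r,\qquad u_1(v_1-1)^2=\tfrac{s}{4}\,r^2 .
\]
Dividing the second by the first yields $(1-v_1)/(1+v_1)=-ir$, whence $v_1=(1+ir)/(1-ir)=(i-r)/(i+r)$; the third relation is then automatically satisfied, its role being merely to fix the sign ambiguity. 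Substituting $v_1$ (so that $(v_1+1)^2=-4/(i+r)^2$) into the first relation gives $u_1=\tfrac{s}{16}(i+r)^2$. These are precisely \eqref{eq:u-special} and \eqref{eq:v-special} for $k=1$, and the symmetry relations $u_2=-u_1v_1^2$, $v_2=1/v_1$ from \eqref{eq: symmetry -beta-0} then give the $k=2$ formulas.

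For the Hamiltonian \eqref{eq: H-special solution} I would substitute the formulas \eqref{eq:u-special}--\eqref{eq:v-special} directly into the expression \eqref{eq:H} for $sH$. Using only the algebraic identities already at hand, namely $u_1v_1=-u_2v_2=-\tfrac{s}{16}(1+r^2)$, $u_1(v_1-1)^2=\tfrac{s}{4}r^2$, $u_1(v_1^2-1)=-\tfrac{is}{4}r$, $u_1(v_1^2+1)=\tfrac{s}{8}(r^2-1)$, and $u_2^2v_2(v_2-1)^2=u_1^2v_1(v_1-1)^2$, the three pieces of \eqref{eq:H} collapse to $sH=\tfrac{s^2}{16}+\tfrac{is}{4}r$, i.e. $H=\tfrac{s}{16}+\tfrac{i}{4}r$; note that no use of the modified Bessel equation is required for this step. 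As a cross-check, one verifies $\tfrac{d}{ds}(sH)=\tfrac{s}{16}(1+r^2)=-u_1v_1$, consistent with \eqref{eq:H-der} under the $\beta=0$ symmetry.

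The whole argument is elementary once \eqref{eq:P-ODE-1} and \eqref{eq:P-ODE-2} are in hand; the only point that calls for a word of care is the residue matching, which produces an overdetermined system. One must check that it is consistent and pins down a unique pair $(u_1,v_1)$ --- this is guaranteed a priori by the unique solvability in Proposition \ref{pro: existence of solution}, and is seen concretely from the trace-freeness of the residues together with the sign-fixing role of the third relation above.
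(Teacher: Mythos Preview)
Your argument is correct and follows the paper's approach for $u_k,v_k$: compare the two forms \eqref{eq:P-ODE-1} and \eqref{eq:P-ODE-2} of the $\zeta$-equation for $P$, match residues at $\zeta=1$ to extract $u_1,v_1$, then invoke the $\beta=0$ symmetry \eqref{eq: symmetry -beta-0} for $u_2,v_2$. The only difference is in the Hamiltonian step: the paper reads $H$ off from the large-$\zeta$ expansion of $P$ via \eqref{eq:H-psi}, \eqref{eq:Psi-infty}, \eqref{eq: P-infinity-1} and \eqref{eq: P-infinity-refined}, whereas you substitute the explicit $u_k,v_k$ directly into \eqref{eq:H} and simplify algebraically. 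Your route is slightly more self-contained (no need to unwind the $P\leftrightarrow\Psi$ relation at infinity), while the paper's route avoids the somewhat longer algebraic bookkeeping; both are short and either is acceptable here.
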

\begin{proof}
In this specific case when
  $\alpha=1/2$ and $\beta=0$,  the special function solutions  $u_1(s)$ and $v_1(s)$  in \eqref{eq:u-special} and \eqref{eq:v-special} are readily derived by comparing \eqref{eq:P-ODE-1} with \eqref{eq:P-ODE-2}. We further obtain   $u_2(s)$ and $v_2(s)$  in \eqref{eq:u-special} and \eqref{eq:v-special}
  by using the relation \eqref{eq: symmetry -beta-0}. The special Hamiltonian of the form \eqref{eq: H-special solution} follows from \eqref{eq:H}, \eqref{eq:Psi-infty}, \eqref{eq: P-infinity-1} and \eqref{eq: P-infinity-refined}.    This completes the proof of the proposition.
  \end{proof}


  \section{Asymptotics of the coupled Painlev\'e V system as $is\to 0^{+}$}\label{sec:small-s}
In this section, we  derive  the  asymptotic approximations  of the solutions to the  coupled Painlev\'e V system as $is\to 0^{+}$ by performing Deift-Zhou nonlinear steepest descent analysis  \cite{deift, dkmv2,dz} of  the RH problem  for $\Psi(\zeta;s)$.

\subsection{Outer parametrix}

Let
 \begin{equation}\label{eq:X}
X(\xi; s)=\left(\frac{|s|}{2}\right)^{-\beta\sigma_3}\Psi(2\xi /|s|; s).
  \end{equation}
  Then, $X(\xi)$ satisfies a re-scaled version of the RH problem for $\Psi(\zeta)$ in the complex $\xi$-plane with $\xi=|s|\zeta/2$. Note that,
  the jump contour $(-1,1)$ in the $\zeta$-plane is mapped onto  $(-|s|/2, |s|/2)$ in the $\xi$-plane. As $is \to 0^{+}$,  we consider the
  following approximate RH problem $\Phi(\xi)$ by ignoring the jump condition  for $X(\xi)$ along the shrinking line segment $(-|s|/2, |s|/2)$.

\subsubsection*{RH problem  for $\Phi$}
\begin{description}
  \item(a)  $\Phi(\xi)$ is analytic in
  $\mathbb{C}\setminus \left\{\cup^5_{j=1}\Sigma_{\Phi,j}\right\}$, where the oriented contours
         \begin{equation*}
     \Sigma_{\Phi,1}=e^{\frac{\pi   i}{4}}\mathbb{R}^{+}, \quad \Sigma_{\Phi,2}=e^{\frac{3\pi i}{4}}\mathbb{R}^{+}, \quad \Sigma_{\Phi,3}=e^{-\frac{3\pi i}{4}}\mathbb{R}^{+},\quad
  \Sigma_{\Phi,4}=e^{-\frac{ \pi i}{2}}\mathbb{R}^{+},\quad \Sigma_{\Phi,5}=e^{-\frac{\pi i  }{4}}\mathbb{R}^{+},\end{equation*}
as illustrated in Figure \ref{CFH-contour}.
  \begin{figure}[th]
 \begin{center}
   \includegraphics[width=6cm]{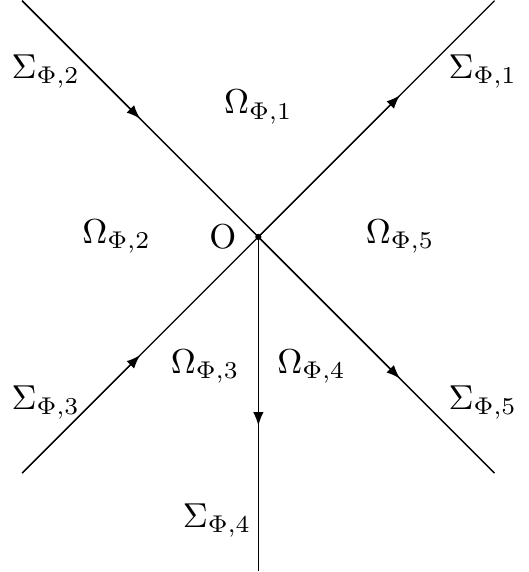} \end{center}
 \caption{\small{The jump contours and regions of the RH problem for $\Phi(\xi)$ }}
 \label{CFH-contour}
\end{figure}

  \item(b)  $\Phi(\xi)$ satisfies the jump conditions
 \begin{equation}\label{eq:Phi-jump}
 \Phi_+(\xi)=\Phi_-(\xi)
 \left\{
 \begin{array}{ll}
    \begin{pmatrix}
                                 1 & 0 \\
                               e^{-\pi i(\alpha-\beta)}& 1
                                 \end{pmatrix}, &  \xi \in \Sigma_{\Phi,1}, \\[.5cm]
    \begin{pmatrix}
                                1 &0\\
                              e^{\pi i(\alpha-\beta)}&1
                                \end{pmatrix},  &  \xi \in \Sigma_{\Phi,2}, \\[.5cm]
    \begin{pmatrix}
                                 1 &-e^{-\pi i(\alpha-\beta)} \\
                                 0 &1
                                 \end{pmatrix}, &   \xi \in \Sigma_{\Phi,3}, \\[.5cm]

     e^{2\pi i\beta\sigma_3} , &   \xi \in \Sigma_{\Phi,4}, \\ [.2cm]
                               \begin{pmatrix}
                                 1 &-e^{\pi i(\alpha-\beta)} \\
                                 0 &1
                                 \end{pmatrix}, &   \xi \in \Sigma_{\Phi,5}.

 \end{array}  \right .  \end{equation}

\item(c) As $\xi\to \infty$,  we have
  \begin{equation}\label{eq:Phi-infty}
 \Phi(\xi)=\left( I  +\frac{\Phi_1}{\xi}+
O\left(\frac{1}{\xi^2}\right) \right) \xi^{-\beta\sigma_3} e^{-\frac{i}{2}\xi \sigma_3},
   \end{equation}
 where $\Phi_1=  \begin{pmatrix}
                            - ( \alpha^2-\beta^2)i    &-e^{-\pi i\beta}  \frac{\Gamma(1+\alpha-\beta)}{i\Gamma(\alpha+\beta)}
\\
      e^{\pi i\beta}  \frac{\Gamma(1+\alpha+\beta)}{i\Gamma(\alpha-\beta)}                            &  (\alpha^2-\beta^2 )i
                                 \end{pmatrix}$ and
 $\arg \xi \in (-\pi/2, 3\pi/2)$.

 \item(d) The description of the asymptotic behavior  of $\Phi(\xi)$ as $\xi\to 0$  is divided into two cases, namely
 $ 2\alpha \not\in \mathbb{N}$ and   $ 2\alpha  \in \mathbb{N}$.

  In the first case, if $ 2\alpha \not\in \mathbb{N}$,  there exists a  function $\Phi^{(0)}(\xi)$, analytic at $\xi=0$,  such that
  \begin{equation}\label{eq: Phi-origin}
  \Phi(\xi)= \Phi^{(0)}(\xi)\xi^{\alpha \sigma_3} C_j,~~~\xi\to 0,~\xi\in \Omega_{\Phi, j}, ~j=1,2,3,4,5, \end{equation}
  where $\Omega_{\Phi, j}$ are the sectors illustrated in Figure \ref{CFH-contour}, the branch of $\xi^{\alpha \sigma_3}$ is chosen such that
  $\arg \xi \in (-\pi/2, 3\pi/2)$,
  the constant matrix
  $$C_1= \begin{pmatrix}
                                 1 &\frac{\sin(\pi(\alpha+\beta))}{\sin (2\pi \alpha)} \\
                                 0 &1
                                 \end{pmatrix}, \quad  2\alpha \not\in \mathbb{N},\quad \alpha>\frac 1 2, $$
and the other constant matrices are determined by $C_1$ and the jump conditions.

In the second case, if $ 2\alpha \in \mathbb{N}$,  there exists a  function $\widehat{\Phi}^{(0)}(\xi)$, analytic at $\xi=0$,  such that
  \begin{equation}\label{eq: Phi-origin-2}
  \Phi(\xi)= \widehat{\Phi}^{(0)}(\xi)\xi^{\alpha \sigma_3} \begin{pmatrix}
                                 1 &(-1)^{2\alpha}\frac{\sin(\pi(\alpha+\beta))}{\pi} \ln  \xi  \\
                                 0 &1
                                 \end{pmatrix} \hat{C}_j,~~~\xi\to 0,~\xi\in \Omega_{\Phi, j}, ~j=1,2,3,4,5, \end{equation}
  where $\Omega_{\Phi, j}$ are the sectors illustrated in Figure \ref{CFH-contour}, the branches of $\xi^{\alpha \sigma_3}$ and $\ln \xi $  are chosen such that
  $\arg \xi \in (-\pi/2, 3\pi/2)$,
  the constant matrix $\hat{C}_1$ is the identity matrix and the other constant matrices are determined by $\hat{C}_1$ and the jump conditions.
\end{description}

The connection matrices $C_j$ are specified so that the asymptotic behavior of $\Phi$ near the origin be in accordance with  the jump conditions, that is,
for $\xi\in \Sigma_{\Phi,4}$, we have the constraint
\begin{equation}\label{eq:Phi-jump-origin}
\Phi_{-}(\xi)^{-1}\Phi_{+}(\xi)=J_{3}J_{2} C_1^{-1}e^{-2\pi i\alpha\sigma_3}C_1J_{1}^{-1}J_{5}^{-1}=e^{2\pi i\beta\sigma_3},
\end{equation}where $J_{j}$ denote  the jumps on $\Sigma_{\Phi,j}$, given in \eqref{eq:Phi-jump}. These are the same as the jumps on $\Sigma_{j}$; see \eqref{eq:Psi-jump}. Similarly, we readily verify that the jump conditions are also fulfilled by \eqref{eq: Phi-origin-2} in the case $2\alpha\in \mathbb{N}$.

The solution to the above RH problem can be constructed explicitly in terms of the confluent hypergeometric function; see \cite{DIK1} and \cite{cik}.  The parametrix is equivalent to the one, namely $M(\xi)$, used in  \cite[Section\;4.2.1]{cik} after some elementary transfromation
 \begin{equation*}
 \Phi(\xi)=e^{\frac{\pi i \beta}{2}\sigma_3}\sigma_3M(e^{\frac{\pi i}{2}}\xi)\sigma_3, \quad \arg \xi\in (-\pi/2,3\pi/2).
 \end{equation*}
The analytic function $\Phi^{(0)}(\xi)$ brought in \eqref{eq: Phi-origin} now takes
\begin{equation}\label{eq: Phi-origin-0}{\small{
\Phi^{(0)}(\xi)= e^{-\frac{i\xi}{2}}  \begin{pmatrix}
                                 e^{-\frac{\pi i(\alpha+\beta)}{2}} \frac{\Gamma(1+\alpha-\beta)}{\Gamma(1+2\alpha)}
                                 \psi(\alpha+\beta,1+2\alpha,i\xi)
                                 &- e^{\frac{\pi i(\alpha-\beta)}{2}}\frac{\Gamma(2\alpha)}{\Gamma(\alpha+\beta)}\psi(-\alpha+\beta,1-2\alpha,i\xi) \\
                               e^{-\frac{\pi i(\alpha-\beta)}{2}}   \frac{\Gamma(1+\alpha+\beta)}{\Gamma(1+2\alpha)}\psi(1+\alpha+\beta,1+2\alpha,i\xi)
                                &e^{\frac{\pi i(\alpha+\beta)}{2}} \frac{\Gamma(2\alpha)}{\Gamma(\alpha-\beta)}\psi(1-\alpha+\beta,1-2\alpha,i\xi)
                                 \end{pmatrix},}}\end{equation}
where $\psi$ is the confluent hypergeometric function given in \eqref{def: phi-CHF}.
  Also,  the first column of $\hat{\Phi}^{(0)}(\xi)$ in \eqref{eq: Phi-origin-2} is the same as that of $\Phi^{(0)}(\xi)$; see \cite[Section\;4.2.1]{cik}.

It is straightforward to see that $\Phi(\xi)$ shares  the same jump conditions with $X(\xi)$ for $|\xi|>|s|/2$. For some small constant radius $\epsilon>0$,  we intend to  construct a local parametrix $L(\xi)$ in a neighborhood $U(0;\epsilon)$ of the origin,  which shares the same jump conditions as $X(\xi)$ and
  matches with $\Phi(\xi)$ on the boundary of  $U(0;\epsilon)$, namely, on $|\xi|=\epsilon$.

\subsection{Local parametrix}

      \subsubsection*{RH problem for $L$}
\begin{description}
  \item (a)  $L(\xi)$ is analytic in $U(0;\epsilon ) \setminus \Sigma^{X}$, where $ \Sigma^{X}$ denotes the jump contours for $X(\xi)$; see Figure
      \ref{Model RH contour} for a re-scaled version.

  \item (b) $L(\xi)$  satisfies the same jump conditions  on $U(0;\epsilon ) \cap \Sigma^{X}$  as $X(\xi)$.
   \item (c)  As $is\to 0^+$, we have the matching condition
     \begin{equation}\label{eq:matching condition-H}
L(\xi)=\left(I+O(s^{2\alpha+1})\right)\Phi(\xi), \end{equation}
for $|\xi|=\epsilon$ with  a constant $\epsilon>0$.

\end{description}

From the asymptotic behavior
\eqref{eq: Phi-origin} and \eqref{eq: Phi-origin-2} of $\Phi(\xi)$ near the origin, we may  seek a solution to the above RH problem of the form
\begin{equation}\label{eq:L}
L(\xi)= \Phi^{(0)}(\xi)  \begin{pmatrix}
                                 1 &|s|^{2\alpha}k(\xi/|s|) \\
                                 0 &1
                                 \end{pmatrix}\xi^{\alpha \sigma_3} C_j \quad \mbox{if}  \quad  2\alpha\not\in \mathbb{N},\quad \alpha>-\frac 12, \end{equation}
or
  \begin{equation}\label{eq:L-integer}
L(\xi)= \widehat{\Phi}^{(0)}(\xi)  \begin{pmatrix}
                                 1 &|s|^{2\alpha}k(\xi/|s|) \\
                                 0 &1
                                 \end{pmatrix}\xi^{\alpha \sigma_3} \begin{pmatrix}
                                 1 &\frac{(-1)^{2\alpha}\sin(\pi(\alpha+\beta))}{\pi} \ln \xi   \\
                                 0 &1
                                 \end{pmatrix}\hat{C}_j  \quad \mbox{if} \quad 2\alpha\in\mathbb{N}, \end{equation}
                                 for $ \xi/|s| \in \Omega_j$,  $j=1,2,3,4,5$; cf.\;Figure
                                 \ref{Model RH contour},
   where  $\Phi^{(0)}(\xi)$ and $C_j$ are given in \eqref{eq: Phi-origin-0} and \eqref{eq: Phi-origin},
   $ \widehat{\Phi}^{(0)}(\xi) $ and $\hat{C}_j$ are given in \eqref{eq: Phi-origin-2}.
   The branches of $\xi^{\alpha}$ and $\ln \xi $ are chosen so that $\arg \xi\in(-\pi/2, 3\pi/2)$. The scalar function $k(\xi)$, to be determined, is analytic  in $\mathbb{C}\setminus [-1/2,1/2]$.

   In view of  \eqref{eq:L} and \eqref{eq:Phi-jump-origin}, we see that $L(\xi)$ has the  same jumps as $X(\xi)$ on  $\Sigma^{X} \setminus [-|s|/2,|s|/2]$.
To  ensure the fulfilment by $L(\xi)$  of the jump condition  on  the interval $(-|s|/2,|s|/2)$ as $X(\xi)$, and the matching condition \eqref{eq:matching condition-H}, we look for the function $k(\xi)$, as  a solution of the following scalar RH problem for $\alpha>-1/2$.

   \subsubsection*{RH problem for $k$}
\begin{description}
  \item (a)  $k(\xi)$ is analytic in $\mathbb{C}\setminus [-1/2,1/2]$.

  \item (b) $k(\xi)$  satisfies the same jump relations
   \begin{equation}\label{eq:k-infty}
k_{+}(x)-k_{-}(x)=\left\{\begin{array}{ll}
                            -e^{\pi i(\alpha-\beta)}x^{2\alpha},  & x\in (0, 1/2), \\
                            -e^{\pi i(\alpha+\beta)}|x|^{2\alpha}, &  x\in (-1/2, 0).
                         \end{array}\right. \end{equation}
\item (c)  As $\xi\to \infty$,  $k(\xi)=O(1/\xi)$.
 \end{description}
 Applying the Sokhotski-Plemelj formula, we obtain an explicit representation
 \begin{equation}\label{eq:k-solution}
k(\xi)=-\frac{e^{\pi i(\alpha-\beta)}}{2\pi i}\int_0^{1/2} \frac{x^{2\alpha} dx}{x-\xi}
-\frac{e^{\pi i(\alpha+\beta)}}{2\pi i}\int_{-1/2}^0 \frac{ |x|^{2\alpha} dx}{x-\xi}\quad\mbox{for}\quad
\xi \in\mathbb{C}\setminus [-1/2,1/2].  \end{equation}
Having had $k(\xi)$ as in \eqref{eq:k-solution}, we readily verify that $L(\xi)$ in \eqref{eq:L}
shares the same jumps as   $X(\xi)$  on real intervals. Indeed,
for $\xi\in(0,|s|/2)$, we have
\begin{equation}\label{eq:jump-L-1}
L_{-}(\xi)^{-1}L_{+}(\xi)=J_5J_1(I-e^{\pi i(\alpha-\beta)}\sigma_+)=J_6, \end{equation}
where $J_k$ denotes the constant jump of $\Psi(\zeta)$ on the contour $\Sigma_k$; cf.\;\eqref{eq:Psi-jump} and \eqref{eq:Phi-jump}.
Now we turn to $\xi\in(-|s|/2,0)$.
First, with the branch  chosen such that $\arg \xi\in(-\pi/2,3\pi/2)$, we have
\begin{equation*}
 \xi_+^{\alpha}=\xi_{-}^{\alpha}=|\xi|^\alpha e^{\pi i\alpha} \quad \mbox{and} \quad \left(\ln \xi\right)_{+} =\left(\ln\xi \right )_{-}
\end{equation*}for $\xi$ on the negative real axis.
Therefore,  we have
\begin{equation}\label{eq:jump-L-2}
L_{-}(\xi)^{-1}L_{+}(\xi)=J_3J_2(I-e^{-\pi i(\alpha-\beta)}\sigma_+)=J_7. \end{equation}
Thus,  $L(\xi)$ has the same jump   as $X(\xi)$ for $\xi\in(-|s|/2,0)\cup(0, |s|/2)$, and hence for   $\xi\in U(0;\epsilon ) \cap \Sigma^{X}$.
Accordingly,  we  see that $X(\xi)L(\xi)^{-1}$  is analytic in $0< |\xi|<\epsilon$.

To show the  analyticity of  $X(\xi)L(\xi)^{-1}$ at $\xi=0$,  we study the behavior of $L(\xi)$ near the origin. From \eqref{eq:k-solution}, we have
for $2\alpha\not\in \mathbb{N}$
\begin{equation}\label{eq:k-asy-zero}
k(\xi)=k_r(\xi)+\left\{
\begin{array}{ll}
\displaystyle{  -\frac{\sin ( \pi(\alpha+\beta))}{\sin (2\pi \alpha)}\xi^{2\alpha}, }& \arg \xi \in(0,\pi), \\[.3cm]
\displaystyle{  \frac{\sin ( \pi(\alpha-\beta))}{\sin (2\pi \alpha)}\left(\xi e^{\pi i}\right)^{2\alpha},} & \arg \xi \in(-\pi, 0),
\end{array}\right .
\end{equation}
where $k_r(\xi)$ is an analytic function in   $|\xi|<1/2$.
Similarly, for $2\alpha\in\mathbb{N}$, we obtain from  \eqref{eq:k-solution} that
\begin{equation}\label{eq:k-asy-zero-N}
k(\xi)=
\hat{k}_r(\xi)+\left\{
\begin{array}{ll}
\displaystyle{\frac{(-1)^{2\alpha+1}\sin(\pi(\alpha+\beta))}{\pi}  \xi^{2\alpha}\ln \xi },& \arg \xi \in(0,\pi), \\[.3cm]
\displaystyle{  \frac{(-1)^{2\alpha+1}\sin(\pi(\alpha+\beta))}{\pi}  \xi^{2\alpha}\ln \xi }+e^{\pi i(\alpha-\beta)}\xi^{2\alpha}, & \arg \xi \in(-\pi, 0),
\end{array}\right .
\end{equation}
where $\hat{k}_r(\xi)$ is an analytic function in   $|\xi|<1/2$.

For $2\alpha\not\in \mathbb{N}$ and $\alpha>-1/2$, a combination of \eqref{eq: Psi-origin}, \eqref{eq:X}, \eqref{eq: Phi-origin}, \eqref{eq:L} and
\eqref{eq:k-asy-zero} shows that $X(\xi)L(\xi)^{-1}$ is  bounded in a neighborhood of   $ \xi=0$.  While for $2\alpha\in\mathbb{N}$, it follows from \eqref{eq: Psi-origin}, \eqref{eq:X},
\eqref{eq: Phi-origin-2}, \eqref{eq:L-integer} and
\eqref{eq:k-asy-zero-N}
that  $X(\xi)L(\xi)^{-1}$ is also bounded
in a neighborhood of   $ \xi=0$.
 Thus, the singularity at the   origin is  removable. Therefore, $X(\xi)L(\xi)^{-1}$ is  analytic for $ |\xi|<\epsilon$.

For later use, we  also derive from \eqref{eq:k-solution} that
\begin{equation}\label{eq:k-asy--1}
k(\xi)\sim  \frac{e^{\pi i(\alpha+\beta)}}{2\pi i}2^{-2\alpha}\ln\left(\xi+\frac{1}{2}\right)   \quad \mbox{as}\quad\xi\to -\frac 1 2,  \end{equation}
and
\begin{equation}\label{eq:k-asy-1}
k(\xi)\sim- \frac{e^{\pi i(\alpha-\beta)}}{2\pi i}2^{-2\alpha}\ln\left(\xi-\frac{1}{2}\right)   \quad \mbox{as}\quad \xi\to \frac 1 2.  \end{equation}
\subsection{Final transformation}
Next, we define
\begin{equation}\label{eq:Z}
  Z(\xi)=\left\{ \begin{array}{ll}
                      X(\xi)\Phi(\xi)^{-1}, & \hbox{$|\xi|>\epsilon$,} \\
                     X(\xi)L(\xi)^{-1}, & \hbox{$|\xi|<\epsilon$.}
                    \end{array}
\right.
 \end{equation}
As shown earlier in this section,   $Z(\xi)$ is analytic for $ |\xi|\neq\epsilon$.  From \eqref{eq: Phi-origin}, \eqref{eq:L}, \eqref{eq:k-solution}  and \eqref{eq:Z}, we have  the       estimate for the jump of $Z(\xi)$
\begin{equation}\label{eq:Z-jump}
J_Z(\xi)=I+c_{\alpha,\beta}\frac{|s|^{2\alpha+1}}{\xi}\Phi^{(0)}(\xi)\sigma_{+}\Phi^{(0)}(\xi)^{-1}+
O\left(s^{2\alpha+2}\right) \quad\mbox{as}\quad is\to 0^{+}, \end{equation}
where the constant $c_{\alpha,\beta}= \frac{e^{\pi i\alpha}\cos(\beta\pi) }{i\pi 2^{2\alpha+1}(2\alpha+1) }  $, and  the error term is uniform for $\xi$ on the   clockwise jump contour  $ |\xi|=\epsilon$. Therefore, the matching condition \eqref{eq:matching condition-H} is fulfilled. Hence, the RH problem for $Z(\xi)$ is a small-norm RH problem.
By a standard analysis  \cite{dkmv1}, we have the piecewise   approximation as $is\to 0^{+}$
\begin{equation}\label{eq:Z-solution}
  Z(\xi)=\left\{ \begin{array}{ll}
                    I+c_{\alpha,\beta}\frac{|s|^{2\alpha+1}}{\xi}\Phi^{(0)}(0)\sigma_{+}\left\{\Phi^{(0)}(0)\right\}^{-1}+
                    \varepsilon_Z(\xi), &  |\xi|>\epsilon , \\
                     I+c_{\alpha,\beta}\frac{|s|^{2\alpha+1}}{\xi}
                     \left[\Phi^{(0)}(0)\sigma_{+}\left\{\Phi^{(0)}(0)\right\}^{-1}-
                     \Phi^{(0)}(\xi)\sigma_{+}\left\{\Phi^{(0)}(\xi)\right\}^{-1}\right ]+\varepsilon_Z(\xi), & |\xi|<\epsilon  ,
                    \end{array}
\right.
 \end{equation}
 where the error term $\varepsilon_Z(\xi)= O\left(s^{2\alpha+2}\right)$ as  $is\to 0^{+}$, uniformly for all $\xi$ in the complex plane.


\subsection{Proof of Theorem \ref{thm-asy}: Asymptotics of $sH$ as $is\to 0^+$}\label{sec:small-s behavior}
Now we have the following asymptotic approximation of $X(\xi)=X(\xi;s)$, introduced in \eqref{eq:X}:
\begin{equation}\label{eq: X-solution}
 X(\xi)=\left\{ \begin{array}{ll}
                      Z(\xi)\Phi(\xi), & \hbox{$|\xi|>\epsilon$,} \\
                     Z(\xi)L(\xi), & \hbox{$|\xi|<\epsilon$;}
                    \end{array}
\right.
 \end{equation}cf. \eqref{eq:Z}.
Denoting
 \begin{equation}\label{eq: X-expand}
 X(\xi)=I+\frac{X_1(s)}{\xi}+O\left(\frac 1{\xi^2}\right) \quad\mbox{as}\quad \xi\to \infty,
 \end{equation}
 and taking  \eqref{eq:Psi-infty} and  \eqref{eq:X} into account, we have
 $$(\Psi_1)_{11}=2(X_1)_{11}/|s|.$$
 Using this together with \eqref{eq:H}, \eqref{eq:Z-solution}, \eqref{eq: X-solution} and \eqref{eq:Phi-infty}, we find
 \begin{align*}
 H&=-(\Psi_1)_{11}/2-\frac{(\alpha^2-\beta^2)}{s}\nonumber\\
& =-\frac{1}{is}\left((Z_1)_{11}+(\Phi_1)_{11}\right)-\frac{(\alpha^2-\beta^2)}{s}\nonumber\\
&=\frac{\Gamma(1+\alpha+\beta)\Gamma(1+\alpha-\beta)\cos(\pi \beta)}{i\pi 2^{2\alpha+1}(2\alpha+1)\Gamma(1+2\alpha)^2}|s|^{2\alpha}+O\left(s^{2\alpha+1}\right).
 \end{align*}This gives \eqref{thm: H-asy-small s}.

Next, we derive the small-$s$ behavior of $y(s)$, $d(s)$, and other functions. It follows from \eqref{eq:X},  \eqref{eq:Z-solution} and \eqref{eq: X-solution} that
 \begin{equation}\label{eq: X-in-appr}
\Psi(2\xi/|s|)=\left(\frac{|s|}{2}\right)^{\beta\sigma_3}X(\xi)=\left(\frac{|s|}{2}\right)^{\beta\sigma_3}
\left [I+O\left (s^{2\alpha+1}\right )\right ] L(\xi) \quad\mbox{as}\quad  is\to 0^+ \end{equation}
 for $|\xi|<\epsilon$.
Substituting  \eqref{eq:L} and  \eqref{eq:k-asy-zero} into \eqref{eq: X-in-appr}, and using the relations
\eqref{eq:y-def} and \eqref{eq:d},  we obtain
\begin{equation}\label{eq: y-asy-small s}
y(s)=\frac{\Gamma(1+\alpha-\beta)}{\Gamma(1+\alpha+\beta)}e^{-\pi i\beta}\left(\frac {|s|}2\right )^{2\beta}\left[1+O\left(s^{2\alpha+1}\right )\right], \quad is\to0^+,
\end{equation}
and
\begin{equation}\label{eq: d-asy-small s}
d(s)=2\alpha\frac{\Gamma(1+\alpha-\beta)\Gamma(1+\alpha+\beta)}{\Gamma(1+2\alpha)^2}e^{-\pi i\alpha}\left(\frac{|s|}2\right )^{2\alpha}\left[1+O\left(s^{2\alpha+1}\right)\right],  \quad is\to 0^+.
\end{equation}
Similarly, the relation \eqref{eq:Psi-A-k}, together with \eqref{eq:L},   \eqref{eq:k-asy--1},  \eqref{eq:k-asy-1} and  \eqref{eq: X-in-appr}, gives us
\begin{equation*}
u_1(s)=-\frac{\Gamma(1+\alpha+\beta)\Gamma(1+\alpha-\beta)}{2\pi i\Gamma(1+2\alpha)^2}e^{-\pi i\beta}\left(\frac{|s|}2\right )^{2\alpha}\left[1+O\left(s^{2\alpha+1}\right)+O(s)\right], \quad is\to0^+,
\end{equation*}
\begin{equation*}
v_1(s)=1+O\left(s^{2\alpha+1}\right)+O(s),  \quad is\to0^+,
\end{equation*}
\begin{equation*}
u_2(s)=\frac{\Gamma(1+\alpha+\beta)\Gamma(1+\alpha-\beta)}{2\pi i\Gamma(1+2\alpha)^2}e^{\pi i\beta}\left(\frac{|s|}2\right)^{2\alpha}\left[1+O\left(s^{2\alpha+1}\right)+O(s)\right], \quad is\to0^+,
\end{equation*}
and
\begin{equation*}
v_2(s)=1+O\left(s^{2\alpha+1}\right)+O(s),  \quad is\to0^+.
\end{equation*}
We mention that the error term $O(s)$ in the above formulas  has its roots back  in the approximation of $\Phi^{(0)}(\xi)$ in \eqref{eq: Phi-origin-0} as
$is\to 0^{+}$
$$\Phi^{(0)}(\pm |s|/2)=\Phi^{(0)}(0)+O(s).$$
These are formulas \eqref{thm: u-1-asy-small s}, \eqref{thm: v-1-asy-small s}, \eqref{thm: u-2-asy-small s} and \eqref{thm: v-2-asy-small s}, respectively. Thus, we have proved the small-$s$ part of Theorem \ref{thm-asy}.

For later use, comparing \eqref{eq: Psi-1} and \eqref{eq: Psi-1 minus} with \eqref{eq:L},   \eqref{eq:k-asy--1},  \eqref{eq:k-asy-1} and  \eqref{eq: X-in-appr},
we  also derive the error order estimates
\begin{equation}\label{eq:Psi-1-1}
\left(\Psi^{(1)}_1(s)\right)_{21}=O\left(s^{2\alpha+1}\right)+O(s),  \quad \left(\Psi^{(2)}_1(s)\right)_{21}=O\left(s^{2\alpha+1}\right)+O(s),  \quad is\to0^+.
\end{equation}
 \section{Asymptotics of the coupled Painlev\'e V system as $is\to +\infty$}\label{sec:large-s}
In this section, we  derive  the  asymptotics of the    coupled Painlev\'e V system as $is\to +\infty$ by using Deift-Zhou nonlinear steepest descent analysis of the  RH problem for $\Psi(\zeta;s)$.
\subsection{Normalization}
We introduce a normalization of $\Psi(\zeta;s)$ at infinity of the form
\begin{equation}\label{def: A}
 A(\zeta)=\Psi(\zeta;s)\exp(-sg(\zeta)\sigma_3),
 \end{equation}
 where
 \begin{equation}\label{def: g}
 g(\zeta)=\frac{1}{4}\sqrt{\zeta^2-1}
 \end{equation}is analytic in $\zeta\in \mathbb{C}\setminus[-1,1]$, such that
  $\arg(\zeta\pm 1)\in(-\pi,\pi)$.
\subsubsection*{ RH problem  for $A$}
\begin{description}
  \item(a)  $A(\zeta)$ is analytic in
  $\mathbb{C}\setminus \{\cup^7_{j=1}\Sigma_j\}$, where the contours  are shown  in Figure \ref{Model RH contour}.
 \item(b)  $A(\zeta)$ satisfies the jump conditions
  \begin{equation}\label{eq:A-jump}
 A_+(\zeta)=A_-(\zeta)
 \left\{
 \begin{array}{ll}
    \begin{pmatrix}
                                 1 & 0 \\
                               e^{-\pi i(\alpha-\beta)}e^{-2sg(\zeta)}& 1
                                 \end{pmatrix}, &  \zeta \in \Sigma_1, \\[.5cm]
    \begin{pmatrix}
                                1 &0\\
                              e^{\pi i(\alpha-\beta)}e^{-2sg(\zeta)}&1
                                \end{pmatrix},  &  \zeta \in \Sigma_2, \\[.5cm]
    \begin{pmatrix}
                                 1 &-e^{-\pi i(\alpha-\beta)} e^{2sg(\zeta)}\\
                                 0 &1
                                 \end{pmatrix}, &   \zeta \in \Sigma_3, \\[.5cm]

     e^{2\pi i\beta\sigma_3} , &   \zeta \in \Sigma_4, \\[.3cm]
                               \begin{pmatrix}
                                 1 &-e^{\pi i(\alpha-\beta)} e^{2sg(\zeta)}\\
                                 0 &1
                                 \end{pmatrix}, &   \zeta \in \Sigma_5, \\[.5cm]
                                 \begin{pmatrix}
                                0 &-e^{\pi i(\alpha-\beta)} \\
                                e^{-\pi i(\alpha-\beta)} &0
                                 \end{pmatrix}, &   \zeta \in \Sigma_6, \\[.5cm]
                                 \begin{pmatrix}
                                0 &-e^{-\pi i(\alpha-\beta)} \\
                                e^{\pi i(\alpha-\beta)} &0
                                 \end{pmatrix}, &   \zeta \in\Sigma_7.
 \end{array}  \right .  \end{equation}

\item(c) As $\zeta\to \infty$,  we have
  \begin{equation}\label{eq:A-infty}
 A(\zeta)=\left( I +
O\left(\frac{1}{\zeta}\right) \right) \zeta^{-\beta\sigma_3},
   \end{equation}
 where
 $\arg \zeta \in (-\pi/2, 3\pi/2)$.

 \item(d) The behaviors of $A(\zeta)$ near $\pm1$ and the origin are the same as $\Psi(\zeta;s)$; cf. \eqref{eq: Psi-origin}, \eqref{eq: Psi-1} and \eqref{eq: Psi-1 minus}.
  \end{description}

\subsection{Outer parametrix}

As $is\to+\infty$, the jump matrices \eqref{eq:A-jump} on the contours $\Sigma_k$, $k=1,2,3,5$ tend to the identity matrix exponentially fast.
Thus, we consider the following approximate RH problem with jumps on the remaining contours.

\subsubsection*{RH problem for $A^{(\infty)}$}
\begin{description}
  \item (a) $A^{(\infty)}(\zeta)$
 is analytic in
  $\mathbb{C}\setminus\left\{\Sigma_4\cup\Sigma_6\cup\Sigma_7\right\}$; see Figure \ref{Model RH contour} for the contours.
  \item(b)  $A^{(\infty)}(\zeta)$ satisfies the jump conditions
  \begin{equation}\label{eq:A-out-jump}
 A^{(\infty)}_+(\zeta)=A^{(\infty)}_-(\zeta)
 \left\{
 \begin{array}{ll}
         e^{2\pi i\beta\sigma_3} , &   \zeta \in \Sigma_4, \\[.2cm]
                                                               \begin{pmatrix}
                                0 &-e^{\pi i(\alpha-\beta)} \\
                                e^{-\pi i(\alpha-\beta)} &0
                                 \end{pmatrix}, &   \zeta \in \Sigma_6, \\[.5cm]
                                 \begin{pmatrix}
                                0 &-e^{-\pi i(\alpha-\beta)} \\
                                e^{\pi i(\alpha-\beta)} &0
                                 \end{pmatrix}, &   \zeta \in\Sigma_7.
 \end{array}  \right .  \end{equation}

\item(c) As $\zeta\to \infty$,  we have
  \begin{equation}\label{eq:A-out-infty}
 A^{(\infty)}(\zeta)=\left( I +
O\left(\frac{1}{\zeta}\right) \right) \zeta^{-\beta\sigma_3},
   \end{equation}
 where
 $\arg \zeta \in (-\pi/2, 3\pi/2)$.
  \end{description}

A solution can be constructed explicitly, that is,
\begin{equation}\label{eq:A-out}
 A^{(\infty)}(\zeta)=2^{\beta\sigma_3}e^{-\frac{\beta}{2}\pi i\sigma_3}\left( \frac{I-i\sigma_1 }{\sqrt{2}}\right)\left( \frac{\zeta-1 }{\zeta+1}\right)^{\frac{1}{4}\sigma_3}\left( \frac{I+i\sigma_1 }{\sqrt{2}}\right) \varphi(\zeta)^{-\beta\sigma_3}e^{\frac{\beta}{2}\pi i\sigma_3}D(\zeta)^{\sigma_3},
 \end{equation}
   where $\varphi(\zeta)=\zeta+\sqrt{\zeta^2-1}$ is defined as before  in \eqref{def:varphi}.
 The branches for $\sqrt{\zeta^2-1}$ and $\left( \frac{\zeta-1 }{\zeta+1}\right)^{ {1}/{4}}$ are taken such that  $\arg (\zeta\pm 1)\in(-\pi, \pi) $ and the branch for $\zeta^{\beta}$ is  taken such that $\arg \zeta\in(-\pi/2, 3\pi/2) $.
The function   $D(\zeta)$ is the Szeg\H{o} function which solves  the following scalar RH problem:
\begin{description}
  \item (a) $D(\zeta)$ is analytic in $\mathbb{C}\setminus [-1,1]$.
\item(b) $D(\zeta)$ satisfies the relation
\begin{equation}\label{eq:D-jump}
D_+(x)D_{-}(x)= \left\{  \begin{array}{ll}
 e^{\pi i\alpha},& -1<x<0,\\
 e^{-\pi i \alpha},& 0<x<1.
\end{array}  \right.
\end{equation}
\end{description}
Applying the Sokhotski-Plemelj formula, we have
\begin{equation}\label{eq:D-rep}
D(\zeta)=\exp\left\{\frac{\alpha\sqrt{\zeta^2-1}}{2i  }\left(-\int_{-1}^0 \frac{1}{\sqrt{1-x^2}}\frac{dx}{\zeta-x}
+\int_{0}^{1}  \frac{1}{\sqrt{1-x^2}}\frac{dx}{\zeta-x}\right)\right\}=\left(\frac{-i+\sqrt{\zeta^2-1}}{\zeta}\right)^{\alpha},\end{equation}
where the   branches for $\sqrt{\zeta^2-1}$ and $\zeta^{\alpha}$ are taken such that $\arg(\zeta\pm1), \arg \zeta \in  (-\pi,\pi) $; see \cite{ik}.
Straightforward calculation from \eqref{eq:D-rep} gives
\begin{equation}\label{eq:D-expan}
D(\zeta)=1-i\alpha/{\zeta}+O(1/\zeta^2),  \quad \zeta\to \infty,\end{equation}
\begin{equation}\label{eq:D-expan-zero}
D(\zeta)=e^{-i\alpha \pi/2} (\zeta/2)^{\alpha}\left(1+O(\zeta)\right ), \quad \zeta\to 0, \quad \Im \zeta>0,\end{equation}
and
\begin{equation}\label{eq:D-expan-1}
D(\zeta)=e^{\mp i\alpha \pi/2} \left(1+O(\sqrt{\zeta^2-1})\right), \quad \zeta\to \pm 1, \quad \Im \zeta>0.\end{equation}
Substituting \eqref{eq:D-expan-zero} into \eqref{eq:A-out}, we have
\begin{equation}\label{eq:A-out-near zero}
 A^{(\infty)}(\zeta)=2^{\beta\sigma_3}e^{-\frac{\beta}{2}\pi i\sigma_3}\left(\frac{  I-i\sigma_2}{\sqrt{2}}\right ) 2^{-\alpha\sigma_3} e^{-\frac{1}{2}\pi i\alpha\sigma_3} \left(I+O(\zeta)\right)\zeta^{\alpha\sigma_3}, \quad\zeta\to 0, \quad \Im \zeta>0,
 \end{equation}
where the Pauli matrix $\sigma_2$ is defined in \eqref{Pauli-matrix}. The behavior of $A^{(\infty)}(\zeta)$ as $\zeta\to 0$ from the lower-half plane can be obtained by combining \eqref{eq:A-out-jump} with \eqref{eq:A-out-near zero}.
Similarly, we get by substituting \eqref{eq:D-expan} into \eqref{eq:A-out}
\begin{equation}\label{eq:A-out-near infty }
 A^{(\infty)}(\zeta)=2^{\beta\sigma_3}e^{-\frac{\beta}{2}\pi i\sigma_3} \left(I+\frac{-i\alpha\sigma_3+\frac{1}{2}\sigma_2}{\zeta}+O\left(\frac 1{\zeta^2}\right)\right) 2^{-\beta\sigma_3} e^{\frac{1}{2}\pi i\beta\sigma_3}\zeta^{-\beta\sigma_3}, \quad\zeta\to \infty,
 \end{equation}
for $\arg \zeta\in(-\pi/2,3\pi/2)$.

\subsection{Local parametrices}

We need the local parametrices $A^{(+1)}(\zeta)$ and  $A^{(-1)}(\zeta)$ near $\zeta= 1$ and $\zeta=- 1$, respectively.
\begin{description}
  \item (a) $A^{(\pm1)}(\zeta)$ is analytic respectively in $U(\pm1)\setminus   \{\cup^7_{j=1}\Sigma_j\}$, where $\Sigma_j$ are illustrated in Figure \ref{Model RH contour}, and $U(\pm1)$ are small disks centered at $\pm1$, respectively.
\item(b) $A^{(\pm1)}(\zeta)$ satisfy the same jump conditions as $A(\zeta)$ in $U(\pm1)\cap   \{\cup^7_{j=1}\Sigma_j\}$.
\item(c) $A^{(\pm1)}(\zeta)$ satisfy the matching condition
\begin{equation}\label{eq:A-matching}
A^{(\pm1)}(\zeta)=(I+O(1/s))A^{(\infty)}(\zeta)  \end{equation}
for $\zeta$ on the boundaries of $U(\pm1) $, as $is\to +\infty$.
\end{description}

We focus on the construction of the local parametrix near $\zeta=-1$. The  local parametrix near $\zeta=1$ can be
constructed similarly. In   the neighborhood $U(-1)$, we seek a local parametrix of the   form
\begin{equation}\label{eq:A-local}
A^{(-1)}(\zeta)=E_{-1}(\zeta) \Phi_B\left(-|s|^2g^2(\zeta)\right )\left\{  \begin{array}{ll}
e^{\frac{1}{2}\pi i(\alpha-\beta)\sigma_3}e^{-sg(\zeta)\sigma_3},&  \Im  \zeta>0,\\
i\sigma_2 e^{\frac{1}{2}\pi i(\alpha-\beta)\sigma_3} e^{-sg(\zeta)\sigma_3},& \Im \zeta<0,\end{array} \right.
\end{equation}
where $E_{-1}(\zeta)$ is defined and analytic in   $U(-1)$. From the definition \eqref{def: g} it is readily seen that
$-g^2(\zeta)$ serves as a conformal mapping in $U(-1)$ and $U(+1)$. For later use, we   assign
\begin{equation}\label{eq:conformal mapping -g}
\sqrt{-g^2(\zeta)}= \left\{  \begin{array}{ll}
 -ig(\zeta),& \Im \zeta>0,\\
 ig(\zeta),& \Im \zeta<0,
\end{array} \right.
\end{equation}
with the $\zeta$-plane being cut   along $(-\infty, -1]\cup [1, +\infty)$, and the square root   being real positive for $-1<\zeta<1$; cf. \eqref{def: g}.
 The function  $\Phi_B(\zeta)$ in  \eqref{eq:A-local} solves the following model RH problem.

\begin{figure}[ht]
 \begin{center}
   \includegraphics[width=6 cm]{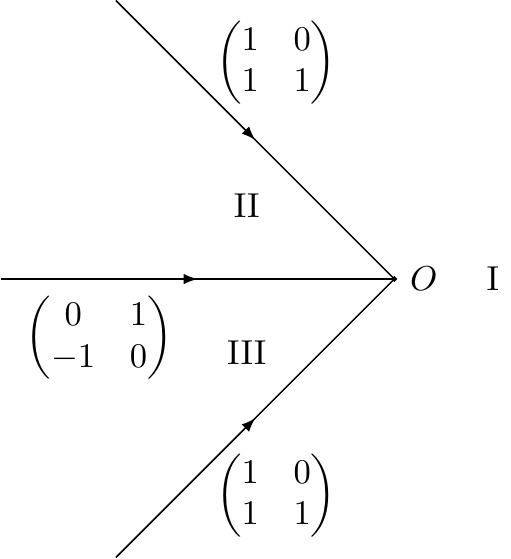} \end{center}
 \caption{\small{Contours $\Sigma_B$, sectors, and jumps  for the Bessel model RH problem  $\Phi_B$}}
 \label{figure: Bessel-model-B}
 \end{figure}

\begin{description}
  \item (a)   $\Phi_B(\zeta)$ is analytic in
  $\mathbb{C}\backslash \Sigma_B$, where the contour  $\Sigma_B$ is illustrated in Figure  \ref{figure: Bessel-model-B}.

  \item (b)   $\Phi_B(\zeta)$  satisfies the jump condition
  \begin{equation}\label{eq:phi-B-jump}
 \Phi_{B+}(\zeta)=\Phi_{B-}(\zeta) J_B(\zeta), \quad  \zeta\in \Sigma_B,
  \end{equation}
where the jump  $J_B(\zeta)$ is also indicated in Figure  \ref{figure: Bessel-model-B}.
  \item (c)  The asymptotic behavior of $\Phi_B(\zeta)$  at infinity is
  \begin{equation}\label{eq: phi-B-infinity}
\Phi_B(\zeta)=\zeta^{-\frac{1}{4}\sigma_3}\frac{I+i\sigma_1}{\sqrt{2}}
   \left (I+\frac {1
   }{8\sqrt{\zeta}} \begin{pmatrix}
                 -1 & -2i \\
                       -2i & 1
           \end{pmatrix}+O\left (\frac 1{\zeta}\right )\right)e^{\sqrt{\zeta}\sigma_3} \quad \textrm{as}~\zeta \rightarrow \infty .
 \end{equation}
\end{description}

The solution to the above RH problem can be constructed in terms of Bessel functions
\begin{equation}\label{eq:phi-B-solution}
\Phi_B(\zeta)=\pi^{\frac 12 \sigma_3} \left\{
 \begin{array}{ll}
   \left(
                               \begin{array}{cc}
                                 I_{0}(\sqrt{\zeta}) &\frac{i}{\pi} K_{0}(\sqrt{\zeta})  \\
                                 \pi i\sqrt{\zeta}    I'_{0}(\sqrt{\zeta})& -\sqrt{\zeta}    K'_{0}(\sqrt{\zeta}) \\
                                 \end{array}
                             \right),  &   \mbox{for} ~  \zeta \in \mathrm{I}, \\[0.5cm]
  \left(
                               \begin{array}{cc}
                                 I_{0}(\sqrt{\zeta}) &\frac{i}{\pi} K_{0}(\sqrt{\zeta})  \\
                                 \pi i\sqrt{\zeta}    I'_{0}(\sqrt{\zeta})& -\sqrt{\zeta}    K'_{0}(\sqrt{\zeta}) \\
                                 \end{array}
                             \right)  \begin{pmatrix}
                1 & 0 \\
                                 -1 & 1
           \end{pmatrix},    & \mbox{for}~  \zeta\in \mathrm{II},  \\[0.5cm]
   \left(
                               \begin{array}{cc}
                                 I_{0}(\sqrt{\zeta}) &\frac{i}{\pi} K_{0}(\sqrt{\zeta})  \\
                                 \pi i\sqrt{\zeta}    I'_{0}(\sqrt{\zeta})& -\sqrt{\zeta}    K'_{0}(\sqrt{\zeta}) \\
                                 \end{array}
                             \right)
                             \left(
                               \begin{array}{cc}
                                 1 & 0 \\
                                1 & 1 \\
                                 \end{array}
                             \right), & \mbox{for}~\zeta\in \mathrm{III},
 \end{array}
 \right .
\end{equation}
where $\arg \zeta\in (-\pi, \pi)$; see \cite{kmvv}.  For later use, we mention that $\Phi_B$ satisfies the differential equation
\begin{equation}\label{eq:Bessel-ODE}
\frac{d\Phi_B(\zeta)}{d\zeta}=\left(
                               \begin{array}{cc}
                                 0 & \frac{1}{2i\zeta} \\
                                 \frac{i}{2} & 0 \\
                                 \end{array}
                             \right)\Phi_B(\zeta).
\end{equation}

To satisfy the matching condition, for $\zeta\in U(-1)$, we define  $E_{-1}(\zeta)$ as
\begin{equation}\label{def:E}
E_{-1}(\zeta)=A^{(\infty)}(\zeta)
\left\{  \begin{array}{ll}
e^{-\frac{1}{2}\pi i(\alpha-\beta)\sigma_3}\frac{I-i\sigma_1}{\sqrt{2}} \left (-|s|^2g^2(\zeta)\right )^{\frac{1}{4}\sigma_3},&  \Im  \zeta>0,\\
 e^{-\frac{1}{2}\pi i(\alpha-\beta)\sigma_3}(-i\sigma_2)\frac{I-i\sigma_1}{\sqrt{2}} \left (-|s|^2g^2(\zeta)\right )^{\frac{1}{4}\sigma_3},& \Im  \zeta<0,\end{array}\right.
\end{equation}
where $g(\zeta)$ and $A^{(\infty)}(\zeta)$ are defined in \eqref{def: g} and \eqref{eq:A-out}, respectively.
From \eqref{eq:A-out-jump} and \eqref{eq:conformal mapping -g}, we see that $E_{-1}(\zeta)$ is analytic for $\zeta\in U(-1)$. Particularly, in view of \eqref{eq:A-out} and \eqref{eq:D-expan-1}, we have
\begin{equation}\label{eq:E-at -1}
E_{-1}(-1)=2^{\beta\sigma_3}e^{-\frac{1}{2}\beta\pi i\sigma_3}\frac{I-i\sigma_1}{\sqrt{2}}|s|^{\frac{1}{2}\sigma_3}2^{-\frac{1}{2}\sigma_3}e^{\frac{1}{4}\pi i\sigma_3}.\end{equation}

The other parametrix $A^{(+1)}(\zeta)$ at $\zeta=1$,   in a $\zeta$-neighborhood  $U(+1)$, can be constructed similarly, also in terms of the Bessel-type model RH problem $\Phi_B$. More precisely, we define
\begin{equation}\label{eq:A-local-1}
A^{(+1)}(\zeta)=E_1(\zeta)\sigma_3 \Phi_B(-|s|^2g^2(\zeta))\sigma_3\left\{  \begin{array}{ll}
e^{-\frac{1}{2}\pi i(\alpha-\beta)\sigma_3}e^{-sg(\zeta)\sigma_3},&  \Im \zeta>0\\
i\sigma_2 e^{-\frac{1}{2}\pi i(\alpha-\beta)\sigma_3} e^{-sg(\zeta)\sigma_3}, &\Im  \zeta<0,\end{array}  \right.
\end{equation}
where
\begin{equation}\label{def:E-1}
E_1(\zeta)=A^{(\infty)}(\zeta)
\left\{  \begin{array}{ll}
e^{\frac{1}{2}\pi i(\alpha-\beta)\sigma_3}\frac{I+i\sigma_1}{\sqrt{2}} (-|s|^2g^2(\zeta))^{\frac{1}{4}\sigma_3},&  \Im  \zeta>0,\\[.2cm]
 e^{\frac{1}{2}\pi i(\alpha-\beta)\sigma_3}(-i\sigma_2)\frac{I+i\sigma_1}{\sqrt{2}} (-|s|^2g^2(\zeta))^{\frac{1}{4}\sigma_3},& \Im  \zeta<0,\end{array}\right.
\end{equation}
and $A^{(\infty)}(\zeta)$ is given in \eqref{eq:A-out}.
Using \eqref{eq:A-out-jump}, \eqref{eq:A-out}, \eqref{eq:D-expan-1} and \eqref{eq:conformal mapping -g}, we see that $E_1(\zeta)$ is analytic for $\zeta\in U(+1)$. Moreover, we have
\begin{equation}\label{eq:E1-at 1}
E_1(1)=2^{\beta\sigma_3}e^{-\frac{1}{2}\beta\pi i\sigma_3}\frac{I+i\sigma_1}{\sqrt{2}}|s|^{\frac{1}{2}\sigma_3}2^{-\frac{1}{2}\sigma_3}e^{-\frac{1}{4}\pi i\sigma_3}.\end{equation}

\subsection{Final transformation}
Next, we define
\begin{equation}\label{eq:B-}
\hat B(\zeta)=\left\{ \begin{array}{ll}
                      A(\zeta)A^{(\infty)}(\zeta)^{-1}, & \hbox{$\zeta\setminus\{U(+1)\cup U(-1)\}$,} \\
                     A(\zeta)A^{(\pm 1)}(\zeta)^{-1}, & \hbox{$\zeta\in U(\pm 1)$.}
                    \end{array}
\right.
 \end{equation}
 For $\zeta\in \partial U(\pm1)$, clockwise-oriented, and $\Im \zeta>0$, we have jumps
\begin{equation}\label{eq:B-jump}
 J_{\hat B}(\zeta)=A^{(\pm1)}(\zeta)A^{(\infty)}(\zeta)^{-1} .\end{equation}
As $is\to+\infty$, by a standard analysis of small-norm RH problems \cite{dkmv1}, we have
\begin{equation}\label{eq:B-est}
{\hat B}(\zeta)=I+\frac{{\hat B}_1(\zeta)}{s}+O\left(\frac 1{s^2}\right ),\quad is\to +\infty, \end{equation}
where
\begin{equation}\label{eq:B-1}
{\hat B}_1(\zeta)=\frac{1}{2\pi i}\oint_{\partial U(-1)}\frac{J_{{\hat B},1}(x)dx}{x-\zeta}+\frac{1}{2\pi i}\oint_{\partial U(+1)}\frac{J_{{\hat B},1}(x)dx}{x-\zeta}
\end{equation}
for $\zeta\in\mathbb{C}\setminus\{\partial U(-1)\cup \partial U(+1)\}$, and the integration paths in \eqref{eq:B-1} are small clockwise circles encircling $-1$ and $1$, respectively.
Here, $J_{{\hat B},1}$ is the coefficient taken from
\begin{equation}\label{eq:B-jump-left-expand}
J_{\hat B}(\zeta)=I+\frac{J_{{\hat B},1}(\zeta)}{s}+O\left(\frac 1{s^2}\right ), ~~\zeta\in \partial U(-1)\cup \partial U(+1),~~\mbox{as}~is\to+\infty,   \end{equation}
where, for $\zeta\in\partial U(-1)$,
\begin{equation}\label{eq:J-1-left}
J_{{\hat B},1}(\zeta)=\frac{1}{2\sqrt{\zeta^2-1}}A^{(\infty)}(\zeta)e^{-\frac{1}{2}\pi i(\alpha-\beta)\sigma_3}
\begin{pmatrix}
-1 & -2i \\
                       -2i & 1
\end{pmatrix}
e^{\frac{1}{2}\pi i(\alpha-\beta)\sigma_3}A^{(\infty)}(\zeta)^{-1},\end{equation}
as follows from  \eqref{eq:A-local}, \eqref{def:E} and \eqref{eq:B-jump}, and a combination of \eqref{eq:A-local-1}, \eqref{def:E-1} and \eqref{eq:B-jump} gives
\begin{equation}\label{eq:J-1-right}
J_{{\hat B},1}(\zeta)=\frac{1}{2\sqrt{\zeta^2-1}}A^{(\infty)}(\zeta)e^{-\frac{1}{2}\pi i(\alpha-\beta)\sigma_3}\left(
                     \begin{array}{cc}
                       -1 & 2i \\
                       2i & 1 \\
                     \end{array}
                   \right)e^{\frac{1}{2}\pi i(\alpha-\beta)\sigma_3}A^{(\infty)}(\zeta)^{-1} \end{equation}
for $\zeta\in\partial U(+1)$,  with the principal branches taken such that $\arg(\zeta\pm1)\in(-\pi,\pi)$.
Putting in use \eqref{eq:A-out} and \eqref{eq:D-expan-1}, we obtain that $J_{{\hat B},1}(\zeta)$ has simple poles  $\zeta=\pm 1$ with the following leading term expansions
 \begin{equation}\label{eq:J-1-near -1}
J_{{\hat B},1}(\zeta)\sim \frac{1}{4(\zeta+1)}2^{\beta\sigma_3}e^{-\frac{\beta}{2}\pi i\sigma_3}\left(\sigma_3-i\sigma_1\right)2^{-\beta\sigma_3}e^{\frac{\beta}{2}\pi i\sigma_3} \quad \mbox{as}\quad \zeta\to-1,\end{equation}
and
\begin{equation}\label{eq:J-1-near 1}
J_{{\hat B},1}(\zeta)\sim \frac{1}{4(\zeta-1)}2^{\beta\sigma_3}e^{-\frac{\beta}{2}\pi i\sigma_3}\left(\sigma_3+i\sigma_1\right)2^{-\beta\sigma_3}e^{\frac{\beta}{2}\pi i\sigma_3} \quad\mbox{as} \quad \zeta\to 1.\end{equation}
Now the residue theorem applies. From \eqref{eq:J-1-near -1}  and  \eqref{eq:J-1-near 1}  we have
\begin{equation}\label{eq:J-1-res-left}
\frac{1}{2\pi i}\oint_{\partial U(-1)}\frac{J_{{\hat B},1}(x)dx}{x-\zeta}=\frac{1}{4}\left(\sigma_3-i2^{\beta\sigma_3}e^{-\frac{\beta}{2}\pi i\sigma_3}\sigma_12^{-\beta\sigma_3}e^{\frac{\beta}{2}\pi i\sigma_3}\right)\frac{1}{\zeta+1}
\end{equation}
for $\zeta\not\in U(-1)$, and
\begin{equation}\label{eq:J-1-res-right}
\frac{1}{2\pi i}\oint_{\partial U(+1)}\frac{J_{{\hat B},1}(x)dx}{x-\zeta}=\frac{1}{4}\left(\sigma_3+i2^{\beta\sigma_3}e^{-\frac{\beta}{2}\pi i\sigma_3}\sigma_12^{-\beta\sigma_3}e^{\frac{\beta}{2}\pi i\sigma_3}\right)\frac{1}{\zeta-1}
\end{equation}
for $\zeta\not\in U(+1)$.
Then, substituting \eqref{eq:J-1-res-left} and \eqref{eq:J-1-res-right} into \eqref{eq:B-1} gives the approximation as $\zeta\to \infty$
\begin{equation}\label{eq:B-1-infty}
\hat B_1(\zeta)= \frac{\sigma_3}{2\zeta}+O\left(\frac 1{\zeta^2}\right).
\end{equation}

\subsection{Proof of Theorem \ref{thm-asy}: Asymptotics of $sH$ as $is\to +\infty$}

In the outer region $|\zeta-1|>\delta$ and $|\zeta+1|>\delta$ for some small positive constant $\delta$, we get from the transformations \eqref{def: A} and  \eqref{eq:B-}
 \begin{equation}\label{eq: Psi-app-out}
 \Psi(\zeta)={\hat B}(\zeta)A^{(\infty)}(\zeta)\exp\left(\frac{s}{4}\sqrt{\zeta^2-1}\;\sigma_3\right).
 \end{equation}
 Expanding the last term as $\zeta\to\infty$, we obtain
 $$\exp\left(\frac{s}{4}\sqrt{\zeta^2-1}\;\sigma_3\right)=\exp\left(\frac{1}{4}s\zeta\sigma_3\right) \left(I-\frac{s}{8\zeta}\sigma_3+O\left(\frac 1{\zeta^2}\right)\right).$$
Substituting this and the expansions \eqref{eq:A-out-near infty } and \eqref{eq:B-1-infty} into \eqref{eq: Psi-app-out}, and recalling \eqref{eq:H-psi}, we
have
 \begin{equation*}
H(s)=-\frac {(\Psi_1)_{11}}2-\frac{(\alpha^2-\beta^2)}{s}=\frac{s}{16}+\frac{i\alpha}{2}-
\left(\alpha^2-\beta^2+\frac{1}{4}\right)\frac{1}{s}+O\left(\frac 1{s^2}\right)
\end{equation*} as $is\to+\infty$. This completes the proof of  \eqref{thm: H-asy-large s}.

From  \eqref{eq:y-def}, \eqref{eq:d}, \eqref{eq:A-out-near zero}, \eqref{eq:B-est}  and \eqref{eq: Psi-app-out}, we obtain
\begin{equation}\label{eq: y-asy-large s}
y(s)=2^{2\beta}e^{-\beta\pi i}\left(1+ O(  1/s)\right), \quad is\to+\infty,
\end{equation}
and
\begin{equation}\label{eq: d-asy-large s}
d(s)=2^{-2\alpha} e^{-\alpha\pi i}e^{is/2}\alpha (1+O(1/s)), \quad is\to+\infty.
\end{equation}
Now we derive the large-$s$ behavior of $u_k$ and $v_k$ introduced in the Hamiltonian \eqref{int:H}.
In the region $|\zeta+1|<\delta$, we get by tracing back the invertible transformations \eqref{def: A} and \eqref{eq:B-}
  \begin{equation}\label{eq: Psi-app-in}
 \Psi(\zeta)={\hat B}(\zeta)E_{-1}(\zeta) \Phi_B\left (-|s|^2g^2(\zeta)\right )
e^{\frac{1}{2}\pi i(\alpha-\beta)\sigma_3},\quad  \Im \zeta>0,
 \end{equation}
where $E_{-1}(\zeta)$ is independent of $s$ and analytic for $|\zeta+1|<\delta$,
${\hat B}(\zeta)$ is also analytic for $|\zeta+1|<\delta$.
Thus, we have
\begin{align}\label{eq: A-2-est}
 A_{2}(s)&={\hat B}(-1)E_{-1}(-1) \left(\lim_{\zeta\to 0}\zeta\Phi_B'(\zeta)\Phi_B^{-1}(\zeta)\right)E_{-1}(-1)^{-1}{\hat B}(-1)^{-1} \nonumber \\
 &=-\frac{i}{2}s{\hat B}(-1)E_{-1}(-1) \sigma_{+}E_{-1}(-1)^{-1}{\hat B}(-1)^{-1} \nonumber \\
 &=\frac{i}{8}s{\hat B}(-1)2^{\beta\sigma_3}e^{-\frac{1}{2}\beta\pi i\sigma_3}(i\sigma_3+\sigma_1)
 e^{\frac{1}{2}\beta\pi i\sigma_3}2^{-\beta\sigma_3}{\hat B}(-1)^{-1},
 \end{align}
where use has been made of the differential equation \eqref{eq:Bessel-ODE} and formula \eqref{eq:E-at -1}.
It follows from \eqref{eq:B-est} that
$$\hat B(-1)=I+O(1/s),\quad is\to+\infty.$$
Thus, from \eqref{eq:A-k}, \eqref{eq: A-2-est} and \eqref{eq: y-asy-large s}-\eqref{eq: d-asy-large s},  we obtain the asymptotic approximations of $u_2(s)$ and $v_2(s)$  as $is\to+\infty$, as stated in
\eqref{thm: u-2-asy-large s} and \eqref{thm: v-2-asy-large s}.

Using the approximation of $\Psi(\zeta;s)$ for $|\zeta-1|<\delta$, a similar argument justifies \eqref{thm: u-1-asy-large s} and \eqref{thm: v-1-asy-large s}, namely,   the asymptotic approximations of $u_1(s)$ and $v_1(s)$   as $is\to+\infty$.

Thus, we have obtained the  large-$s$ asymptotics in Theorem \ref{thm-asy}.
This, together with the  small-$s$ asymptotic formulas obtained in  Section \ref{sec:small-s behavior} and
the existence results given in Proposition \ref{pro: existence of solution},  completes the proof of Theorem \ref{thm-asy}.

It is worth noting that with   long but direct calculation   the later terms in the asymptotic expansions of $u_k(s)$ and $v_k(s)$ can also be determined.
For example, we have from \eqref{eq:B-est} that
$$\hat B(\pm1)=I+\frac{{\hat B}_1(\pm1)}{s}+O\left(\frac 1{s^2}\right), \quad is\to+\infty.$$
Thus, the asymptotics \eqref{thm: v-1-asy-large s} and \eqref{thm: v-2-asy-large s} can be refined
respectively to
\begin{equation}\label{eq: v-1-asy-large s}
v_1(s)=i+\frac{\gamma_1}{s}+O(1/s^2), \quad is\to+\infty,
\end{equation}
and
\begin{equation}\label{eq: v-2-asy-large s}
v_2(s)=-i+\frac{\gamma_2}{s}+O(1/s^2), \quad is\to+\infty,
\end{equation}
where the coefficients $\gamma_1$ and $\gamma_2$ can be determined by using ${\hat B}_1(\pm1)$.

\section{RH problem for orthogonal polynomials and the differential identity}\label{sec:d-identity}
We consider the orthogonal polynomials associated with the weight function \eqref{def: weight-circle}.
Note that the weight function \eqref{def: weight-circle} can be analytically  extended to the complex $z$-plane with a cut  $[0,\infty)$
 \begin{equation}\label{def: weight}
w(z)=(z-1)^{2\alpha}z^{\beta-\alpha}e^{-\pi i(\alpha+\beta)}, \end{equation}
where the branch cut of $(z-1)^{2\alpha}$ is taken along $[1,\infty)$
 so that $\arg (z-1)\in(0,2\pi)$ and the branch of $z^{\beta-\alpha}$ is chosen so that $\arg z\in (0,2\pi)$.
 Let $\pi_n(z;t)=z^n+\cdots$
  be the monic orthogonal polynomial  of degree $n$ with respect to
  the weight \eqref{def: weight-circle}, satisfying the orthogonality relation
 \begin{equation}\label{eq: oth}
\frac{1}{2\pi} \int_t^{2\pi-t} \pi_n(e^{i\theta};t)\overline{\pi_m(e^{i\theta};t)}w(e^{i\theta};t)d\theta=\chi_n^{-2}(t)\delta_{n,m}, \end{equation}
where $\chi_n(t)>0$ and $n,m=0,1,\cdots$. Denote the reverse polynomials  associated with $\pi_n(z)$ by
\begin{equation*}
 \pi_n^*(z)=z^n\overline{\pi_n(1/\bar{z})}=z^n\overline{\pi_n}(1/z)=z^n+\cdots,
\end{equation*}where $\overline{\pi_n}$ denotes the polynomial whose coefficients are complex conjugates of that of $\pi_n$.
Let $Y(z)$ be the  $2\times 2$ matrix-valued function
\begin{equation}\label{eq:Y-solution}
Y(z)=
\begin{pmatrix}
\pi_n(z) & \frac{1}{2\pi i } \int_{C_t}\frac{\pi_n(x) w(x)dx}{x^n(x-z)}\\
-\chi_{n-1}^2 \pi^*_{n-1}(z)&  -\frac{\chi_{n-1}^2}{2\pi i}  \int_{C_t}\frac{\pi^*_{n-1}(x) w(x)dx}{x^n(x-z)}
\end{pmatrix}.
\end{equation}
Then,  $Y(z)$ is the unique solution to the RH problem below.
\subsubsection*{RH problem for $Y$}
\begin{description}
  \item (a)  $Y(z)$ is analytic in $\mathbb{C} \setminus C_t$.

  \item (b) $Y(z)$  satisfies the jump condition
  \begin{equation}\label{eq:Y-jump}
  Y_{+}(z)=Y_{-}(z)
  \begin{pmatrix}
  1 & z^{-n}w(z;t ) \\
  0 & 1
  \end{pmatrix},\qquad z\in C_t,
  \end{equation}
where  the arc $C_t$ is defined in \eqref{def: arc}. The weight  function $w(z;t)$, as given  in \eqref{def: weight-circle},  is the restriction of $w(z)$ \eqref{def: weight} on the arc $C_t$.
  \item (c)  As $z\to \infty$, we have
  \begin{equation}\label{eq:Y-infinity}
  Y(z)=\left (I+O\left (\frac 1 {z}\right )\right)
 \begin{pmatrix}
 z^n & 0 \\
 0 & z^{-n}
 \end{pmatrix}.
 \end{equation}
 \item (d)   Denoting  $z_1=e^{it}$ and $z_2=e^{i(2\pi-t)}$, we have, respectively  for   $z_*=z_1$ and  $z_*=z_2$, that
  \begin{equation}\label{eq:Y-endpoint}
  Y(z)=\begin{pmatrix}
 O(1)& O\left(\ln(z-z_*)\right) \\
  O(1) & O\left(\ln(z-z_*)\right)
  \end{pmatrix}~~\mbox{as}~~z\to z_*.
 \end{equation}
\end{description}

We have the  differential identities with respect to $t$ in the following lemma, involving $Y(z)$ and the corresponding Toeplitz determinant. The lemma was first proved in \cite{dkv}.
\begin{lem}\label{lem:differential identity}
Let $D_n(t)$ be the Toeplitz determinant associated with the weight function \eqref{def: weight-circle}, then we have
\begin{equation}\label{eq: diff identity}
    \frac d {dt}\ln D_{n}(t)=-\frac{1}{2\pi }\sum_{j=1}^2z^{-n+1}_j w(z_j;t)\lim_{z\to z_j}\left(Y^{-1}(z)\frac{d}{dz} Y(z)\right)_{21},
\end{equation}
where $z_1= e^{it}$ and $z_2= e^{i(2\pi-t)}$.  \end{lem}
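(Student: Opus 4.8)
The plan is to express the Toeplitz determinant $D_n(t)$ in terms of the orthogonal polynomial data through a Heine-type / Christoffel-type argument, and then differentiate in $t$. The key point is that $t$ enters $D_n(t)$ only through the endpoints of the arc $C_t$ in the integration domain of the inner products in \eqref{eq: oth}, so $\frac{d}{dt}\ln D_n(t)$ reduces to boundary terms at $z_1=e^{it}$ and $z_2=e^{i(2\pi-t)}$. Recall that $D_n(t)=\prod_{j=0}^{n-1}\chi_j^{-2}(t)$, so $\frac{d}{dt}\ln D_n(t)=-\sum_{j=0}^{n-1}\frac{d}{dt}\ln\chi_j^{-2}(t)$; alternatively, and more cleanly, one uses the Andréief/Heine identity $D_n(t)=\frac{1}{(2\pi)^n n!}\int\cdots\int_{(t,2\pi-t)^n}\prod_j w(e^{i\theta_j};t)\prod_{j<k}|e^{i\theta_j}-e^{i\theta_k}|^2\prod_j d\theta_j$ from \eqref{def: D-n}. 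Differentiating this multiple integral in $t$ by Leibniz's rule produces, for each of the $n$ variables, a boundary contribution at $\theta_j=t$ and at $\theta_j=2\pi-t$; by symmetry of the integrand these collapse to $\frac{d}{dt}\ln D_n(t) = -\frac{1}{2\pi}\sum_{j=1}^2 w(z_j;t)\,K_n(z_j,z_j)$, where $K_n$ is the reproducing (Christoffel–Darboux) kernel for the orthonormal polynomials on $C_t$ — the factor $n$ cancels against the $1/n!$ upon relabeling.

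Next I would identify the diagonal kernel value $K_n(z_j,z_j)$ with the quantity appearing on the right of \eqref{eq: diff identity}. The Christoffel–Darboux kernel can be written in terms of the $Y$-matrix: a standard computation gives $z^{-n}w(z;t)K_n(z,z) = \frac{1}{2\pi i}\bigl(Y_+^{-1}(z)Y_-(z)\bigr)$-type expressions, but the cleaner route is to use that $K_n(z,z)$ equals, up to the weight, the $(21)$-entry of $Y^{-1}(z)Y'(z)$ evaluated at the endpoint — this is the classical identity relating the CD kernel on the diagonal to the logarithmic derivative of $Y$, valid because $\det Y \equiv 1$ and the first column of $Y$ carries $\pi_n$ while the second column carries the Cauchy transform whose jump across $C_t$ supplies the kernel. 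Concretely, from \eqref{eq:Y-solution} one computes $\bigl(Y^{-1}(z)Y'(z)\bigr)_{21}$ directly in terms of $\pi_n,\pi_{n-1}^*,\chi_{n-1}$ and their Cauchy transforms, and recognizes the Christoffel–Darboux formula; the limit $z\to z_j$ from \eqref{eq:Y-endpoint} exists (the logarithmic singularities in the second column cancel in the relevant combination). Substituting $z^{-n+1}_j w(z_j;t)\lim_{z\to z_j}(Y^{-1}Y')_{21}$ for the boundary term then yields exactly \eqref{eq: diff identity}.

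The main obstacle I anticipate is handling the endpoint behavior carefully: because of the root-type Fisher–Hartwig factor $|z-1|^{2\alpha}$ and the jump-type factor $z^\beta$, the weight $w(z;t)$ is merely continuous (not smooth) and the polynomials have logarithmic singularities at $z_1,z_2$ as recorded in \eqref{eq:Y-endpoint}; one must verify that the Leibniz differentiation of the multiple integral is legitimate (the integrand is integrable up to the boundary since $\alpha>-1/2$ and $i\beta\in\mathbb R$, so $w\in L^1$), and that the limit $\lim_{z\to z_j}(Y^{-1}(z)Y'(z))_{21}$ is finite — the $O(\ln)$ terms in the individual entries must cancel in this particular matrix entry. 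Since $t$ is bounded away from $0$ and $\pi$, the endpoints $z_1,z_2$ stay away from the singularity $z=1$ and from each other, so there is no competing singularity and the boundary analysis is local and elementary. I would also note that this identity was already established in \cite{dkv}; I would cite that and present the computation in the streamlined form above, since the two-sided (symmetric arc) structure is exactly what produces the sum over $j=1,2$.
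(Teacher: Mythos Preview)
Your proposal is correct and follows essentially the same approach as the paper: produce the Christoffel--Darboux kernel at the endpoints $z_1,z_2$ by differentiating in $t$ (the paper does this via the product formula $D_n=\prod_{k=0}^{n-1}\chi_k^{-2}$ and differentiating each norm through \eqref{eq: oth}, rather than differentiating the Heine integral directly, but the outcome is the same sum $\sum_k\chi_k^2\pi_k(z_j)\overline{\pi_k}(z_j^{-1})$), and then identify this kernel with $(Y^{-1}Y')_{21}$ via the Christoffel--Darboux identity and the Szeg\H{o} recurrence. One simplification to your endpoint discussion: since $\det Y\equiv 1$ one has $(Y^{-1}Y')_{21}=Y_{11}Y'_{21}-Y_{21}Y'_{11}$, which involves only the \emph{first} (polynomial) column of $Y$ from \eqref{eq:Y-solution} and is therefore entire --- the logarithmic singularities in \eqref{eq:Y-endpoint} live only in the second column and never enter this entry.
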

\begin{proof}
We start with the  expression of the Toeplitz determinant  in terms of the leading coefficients of the orthonormal polynomials
\begin{equation}\label{eq: Top-product}
D_n(t)=\prod_{k=0}^{n-1}\chi_k^{-2}(t).
\end{equation}
Taking logarithmic derivative on both side of \eqref{eq: Top-product} yields
\begin{equation}\label{eq: log-de}
\frac{d}{dt}\ln D_n(t)=-2\sum_{k=0}^{n-1}\chi_k^{-1}(t)\frac{d}{dt}\chi_k(t).
\end{equation}
The orthogonal relation \eqref{eq: oth}
serves as an integral representation  of $\chi_n(t)$. Taking derivatives on both sides, we have
 \begin{equation}\label{eq: log-leading coef}
2\chi_k^{-1}(t)\frac{d}{dt}\chi_k(t)=\frac{\chi_k^2(t)}{2\pi}\sum_{j=1}^2 \pi_k(z_j)\overline{\pi_k}(z^{-1}_j)w(z_j;t),
\end{equation}
where $z_1= e^{it}$, $z_2= e^{i(2\pi-t)}$ and $k=0,1,...,n-1$.
Substituting \eqref{eq: log-leading coef} into \eqref{eq: log-de} gives
\begin{equation}\label{eq: log-de-2}
\frac{d}{dt}\ln D_n(t)=-\frac{1}{2\pi}\sum_{j=1}^2\left(\sum_{k=0}^{n-1}\chi_k^2(t)\pi_k(z_j)\overline{\pi_k}(z^{-1}_j)w(z_j;t)\right).
\end{equation}
From
the Christoffel-Darboux identity \cite{Szego}
\begin{equation}\label{eq: CD-1}
\sum_{k=0}^{n-1}\chi_k^2(t)\pi_k(z)\overline{\pi_k}(z^{-1})=z\chi_n^2(t)\left[\overline{\pi}_{n}(z^{-1})\frac{d}{dz}\pi_n(z)-\pi_n(z) \frac{d}{dz}\overline{\pi}_{n}(z^{-1})\right]
-n\chi_n^2(t)\pi_n(z)\overline{\pi}_{n}(z^{-1}),
\end{equation}
and the recurrence relation \cite{Szego}
\begin{equation}\label{eq: recurrence }
z\overline{\pi_n}(1/z)=\frac{\chi_{n-1}(t)^2}{\chi_n(t)^2}\overline{\pi_{n-1}}(1/z)+
z^{1-n}\overline{\pi_n}(0)\pi_n(z),\end{equation}
we have
\begin{align}\label{eq: CD-2}
&\sum_{k=0}^{n-1}\chi_k^2(t)\pi_k(z)\overline{\pi_k}(z^{-1})\nonumber\\
&=-\chi_{n-1}^2(t)\left[\pi_n(z) \frac{d}{dz}\overline{\pi_{n-1}}(z^{-1})
-\overline{\pi_{n-1}}(z^{-1})\frac{d}{dz}\pi_n(z)+(n-1)z^{-1}\pi_n(z)\overline{\pi_{n-1}}(z^{-1})\right]\nonumber\\
&=-\chi_{n-1}^2(t)z^{1-n}\left[\pi_n(z) \frac{d}{dz}\pi_{n-1}^*(z^{-1})
-\pi_{n-1}^*(z^{-1})\frac{d}{dz}\pi_n(z)\right].
\end{align}
Then, the differential identity  \eqref {eq: diff identity} follows from \eqref{eq:Y-solution}, \eqref {eq: log-de-2} and \eqref{eq: CD-2}.

\end{proof}
\section{Asymptotics of the RH problem for $Y$}\label{sec:Y}
We carry out the Deift-Zhou nonlinear steepest descent  analysis of the RH problem for $Y(z)$ as $n\to\infty $ and $t\to 0^+$ in a way such that $nt$ is bounded.
\subsection{Normalization of the RH problem: $Y\to T$}
We introduce the transformation $Y\to T$ to normalize the large-$z$ behavior of $Y(z)$,
\begin{equation}\label{eq:Y-T}
  T(z)=\left\{ \begin{array}{ll}
                      Y(z)z^{-n\sigma_3}, & \hbox{$|z|>1$,} \\
                     Y(z), & \hbox{$|z|<1$,}
                    \end{array}
\right.
 \end{equation}
Then, it is readily seen that $T(z)$ satisfies the following RH problem.
 \subsubsection*{RH problem for $T$}
\begin{description}
  \item (a)  $T(z)$ is analytic in $\mathbb{C} \setminus C_t$.

  \item (b) $T(z)$  satisfies the jump condition
  \begin{equation}\label{eq:T-jump-1}
  T_+(z)=T_-(z)
  \begin{pmatrix}
  z^n & w(z) \\
  0 & z^{-n}
  \end{pmatrix},\qquad z\in C_t,
  \end{equation}
  and
  \begin{equation}\label{eq:T-jump-2}
  T_+(z)=T_-(z)
  \begin{pmatrix}
  z^n & 0 \\
  0 & z^{-n}
  \end{pmatrix},\qquad z\in C\setminus C_t,
  \end{equation}
  where  the weight function $w(z)$ is defined in \eqref{def: weight}, $C=\{z: |z|=1\} $ is the unit circle and the arc $C_t$ is defined in \eqref{def: arc}. Both the arcs in \eqref{eq:T-jump-1} and \eqref{eq:T-jump-2} are oriented counterclockwise.
       \item (c)  As $z\to \infty$, we have
  \begin{equation}\label{eq:T-infinity}
 T(z)=I+O\left (\frac 1 {z}\right ). \end{equation}
 \item (d)   As $z\to z_1$ and $z\to z_2$, where $z_1=e^{it}$ and $z_2=e^{i(2\pi-t)}$, $T(z)$ shares the same behavior as $Y(z)$; cf. \eqref{eq:Y-endpoint}.
\end{description}

\subsection{Opening of the lens: $T\to S$}
The diagonal entries of the jump matrix for $T$  in \eqref{eq:T-jump-1} are highly oscillating for $n$ large.
 To turn the oscillation to exponential decays on certain contours, we deform the contours $C_t$ and introduce
 the second transformation $T\to S$.
The transformation is  based on the following factorization of the  jump matrix  \eqref{eq:T-jump-1}
 \begin{equation}\label{eq:jump-fact}
   \begin{pmatrix}
  z^n & w(z ) \\
  0 & z^{-n}
  \end{pmatrix}=  \begin{pmatrix}
  1 & 0 \\
 z^{-n}w(z)^{-1 }& 1
  \end{pmatrix} \begin{pmatrix}
  0 & w(z) \\
  -w(z )^{-1} & 0  \end{pmatrix}
  \begin{pmatrix}
  1 & 0 \\
  z^{n}w(z)^{-1}&1
  \end{pmatrix}.
  \end{equation}
  Define the transformation
  \begin{equation}\label{eq:T-S}
  S(z)=\left\{ \begin{array}{lll}
                   T(z) \begin{pmatrix}
  1 & 0 \\
 z^{-n}w(z )^{-1 }& 1
  \end{pmatrix} , & \hbox{for $z\in  \Omega_E$,}\\[.5cm]
  T(z)\begin{pmatrix}
  1 & 0 \\
  -z^{n}w(z)^{-1}&1
  \end{pmatrix}  ,   &   \hbox{for $z\in  \Omega_I$,}\\
     T(z), & \hbox{otherwise, } \\
  \end{array}
\right.
 \end{equation}
 with the regions shown in Figure \ref{fig:open-lens}.
 Then,  $S(z)$ satisfies the following RH problem.
  \begin{figure}[ht]
 \begin{center}
   \includegraphics[height=8.5cm]{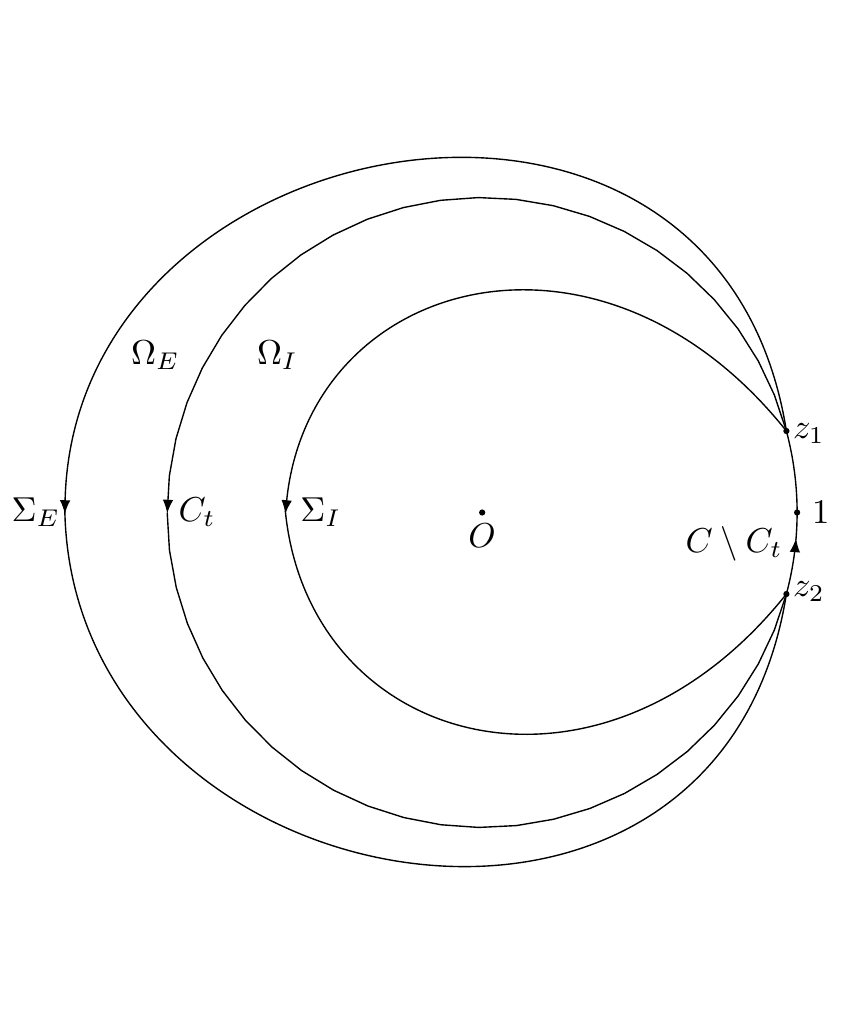}  \end{center}
  \caption{\small{ Opening of the lens, contours and regions for the  RH problem for $S(z)$.}}
 \label{fig:open-lens}
\end{figure}


 \subsubsection*{RH problem for $S$}
\begin{description}
  \item (a)  $S(z)$ is analytic in $\mathbb{C} \setminus \Sigma_E\cup C\cup \Sigma_I$, where the jump curves are illustrated in Figure  \ref{fig:open-lens}, and $C= \{z:|z|=1\}$ is the unit circle.

  \item (b) $S(z)$  satisfies the jump condition
  \begin{equation}\label{eq:S-jump}
  S_+(z)=S_-(z)\left\{ \begin{array}{lll}
                       \begin{pmatrix}
  1 & 0 \\
  z^{-n}w(z)^{-1}&1
  \end{pmatrix}  , & \hbox{for $z\in \Sigma_E$, } \\[.5cm]
                      \begin{pmatrix}
  0 & w(z) \\
-w(z)^{-1 }&0
  \end{pmatrix} , & \hbox{for $z\in C_t$,}\\[.5cm]
  \begin{pmatrix}
  1 & 0 \\
  z^{n}w(z)^{-1}&1
  \end{pmatrix}  ,   &   \hbox{for $z\in \Sigma_I$,}    \\[.5cm]
  \begin{pmatrix}
  z^{n} & 0 \\
  0&z^{-n}
  \end{pmatrix} ,    &   \hbox{for $z\in C\setminus C_t$.}
  \end{array}
\right.
    \end{equation}
  \item (c)  As $z\to \infty$, we have
  \begin{equation}\label{eq:S-infinity}
 S(z)=I+O\left (\frac 1 {z}\right ). \end{equation}
 \item (d)   At $z_1=e^{it}$ and $z_2=e^{i(2\pi-t)}$,   we have
  \begin{equation}\label{eq:S-endpoint}
  S(z)=O\left(\ln(z-z_1)\right)~~\mbox{as}~~ z\to z_1~~~\mbox{and}~~~ S(z)=O\left(\ln(z-z_2)\right )~~\mbox{as}~~ z\to z_2 .
 \end{equation}
\end{description}

\subsection{Global parametrix}
As $n\to \infty$, the jump matrices in \eqref{eq:S-jump} tend  to the identity matrix for $z$ bounded away from the unit circle  $C$.
The  arc $ C_t$ also tends to the unit circle in the double scaling case as $n\to \infty$.   Thus, we consider the approximate RH problem below, with jump occurs on the unit circle alone.

\subsubsection*{RH problem for $N$}
\begin{description}
  \item (a)  $N(z)$ is analytic in $\mathbb{C} \setminus   C$, where $C$ is the unit circle.

  \item (b) $N(z)$  satisfies the jump condition on the unit circle, counterclockwise oriented,
  \begin{equation}\label{eq:N-jump}
 N_+(z)=N_-(z)\                      \begin{pmatrix}
  0 & w(z) \\
-w(z)^{-1 }&0
  \end{pmatrix},   \end{equation}where the weight $w(z)$ is given in \eqref{def: weight}.
  \item (c)  As $z\to \infty$, we have
  \begin{equation}\label{eq:N-infinity}
 N(z)=I+O\left (\frac 1 {z}\right ). \end{equation}
 \end{description}

 The solution to the above RH problem is given explicitly by
 \begin{equation}\label{eq:N-solution}
  N(z)=\left\{ \begin{array}{ll}
                      \left(\frac{z-1}{z}\right)^{(\beta-\alpha)\sigma_3}, & \hbox{$|z|>1$,} \\
                   (z-1)^{(\alpha+\beta)\sigma_3}e^{-\pi i(\alpha+\beta)\sigma_3} \begin{pmatrix}
  0 & 1 \\
-1&0
  \end{pmatrix}, & \hbox{$|z|<1$,}
                    \end{array}
\right.
 \end{equation}
 where the branch of $z^{\alpha-\beta}$ is chosen so that $\arg z\in (0,2\pi)$ and both branch cuts of $(z-1)^{\alpha+\beta}$
 and $(z-1)^{\beta-\alpha}$  are taken along $[1,\infty)$, with $\arg(z-1)\in(0,2\pi)$.

 \subsection{Local parametrix}
The global parametrix $N(z)$ fails to   approximate   $S(z)$ near the endpoints $z_1$ and $z_2$, where the weight has jump discontinuities.  For $n$ large,
 both the endpoints   are close to $1$ and  belong to a neighborhood $U(1,r)=\left\{z: |z-1|<r\right\}$ of $z=1$, with $r$ small but fixed.  We intend to
 construct a local parametrix $P^{(1)}(z)$ in   $U(1,r)$, which  satisfies the same jump condition as $S(z)$ and
  matches with $N(z)$ on the boundary of  $U(1,r)$.

   \subsubsection*{RH problem for $P^{(1)}$}
\begin{description}
  \item (a)  $P^{(1)}(z)$ is analytic in $U(1,r) \setminus \left\{\Sigma_E\cup C\cup \Sigma_I\right\}$; see Figure \ref{fig:open-lens} for the contours.

  \item (b) $P^{(1)}(z)$  satisfies the same jump condition as $S(z)$ on  $U(1,r) \cap \left\{\Sigma_E\cup C\cup \Sigma_I\right\}$; cf. \eqref{eq:S-jump}.
   \item (c)  As $n\to \infty$, we have the matching condition
  \begin{equation}\label{eq:matching condition}
P^{(1)}(z)=\left(I+O\left (\frac 1 {n}\right )\right)N(z). \end{equation}
\end{description}

Let
\begin{equation}\label{eq:M}
  M(z)=\left\{ \begin{array}{ll}
                      z^{\frac{n}{2}\sigma_3}w(z)^{-\frac{1}{2}\sigma_3}, & \hbox{$|z|<1$, $z\in U(1,r)$,} \\
                     z^{\frac{n}{2}\sigma_3}w(z)^{\frac{1}{2}\sigma_3} \begin{pmatrix}
  0 & -1\\
1&0
  \end{pmatrix}, & \hbox{$|z|>1$, $z\in U(1,r)$,}
                    \end{array}
\right.
 \end{equation}where the principal branches are chosen for the power functions.
  We also define
  \begin{equation}\label{eq:mapping}
f(z)=f(z;t)=e^{-\frac{ \pi i}{2} } t^{-1}\ln z,\end{equation}
where the branch   of  $\ln z$  is chosen such that $\arg z\in (-\pi, \pi)$.
Note that $f(z)$ is a conformal mapping from  $U(1,r)$ to a neighborhood of the origin so long as $0<r<1$, with $f(z_1)= 1$ and  $f(z_2)= -1$.
  We seek a solution to the above RH problem of the following form:
\begin{equation}\label{eq:parametrix}
P^{(1)}(z)=E(z)\Psi(f(z); -2int )e^{\pm \frac{\pi i}{2}(\alpha-\beta)\sigma_3}M(z),  \quad \pm \Im z>0, \end{equation}
 where $\Psi(\zeta; s)$ is the solution to the model RH problem discussed in Section \ref{sec:model-RHP},  $E(z)$ is a certain matrix-valued analytic function defined in $U(1,r)$, to be determined.
It then follows from the jumps \eqref{eq:Psi-jump} for $\Psi$ and the definition of $M(z)$ in \eqref{eq:M}  that $P^{(1)}(z)$ fulfills the same jump condition  as $S(z)$ for $z \in U(1,r)\cap \left\{\Sigma_E\cup C\cup \Sigma_I\right\}$.
We then choose $E(z)$ so that  the matching condition  \eqref{eq:matching condition} holds.

\begin{prop}\label{pro:E}
Let
\begin{equation}\label{eq:E}
E(z)=e^{\pm \frac{\pi i(\alpha-\beta)} 2  \sigma_3} e^{-\frac{\pi i(\alpha+\beta)} 2  \sigma_3} z^{\frac{ \alpha-\beta}{2} \sigma_3}
\left(\frac{f(z)}{z-1}\right)^{-\beta\sigma_3}\begin{pmatrix}
  0 & 1\\
-1&0
  \end{pmatrix},~\pm \Im z>0,~ z\in U(1,r), \end{equation}
where $r\in (0, 1)$ is a small constant, the function $\zeta^{-\beta}$ takes  the principal branch, and the branch of $z^{(\alpha-\beta)/2}$ is chosen such that $\arg z\in (0, 2\pi)$.
Then, $E(z)$ is analytic for $z\in U(1,r)$ and  the matching condition  \eqref{eq:matching condition} is satisfied for $nt$ bounded.
\end{prop}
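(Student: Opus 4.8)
The plan is to treat the two assertions separately: first show that the explicitly given $E(z)$ is analytic in $U(1,r)$ by a branch-cancellation check, and then obtain the matching condition \eqref{eq:matching condition} by reducing it to an exact algebraic identity among $N$, $M$ and the large-argument expansion \eqref{eq:Psi-infty} of $\Psi$. The starting observation is the elementary simplification that, with $s=-2int$ and $f(z)=e^{-\pi i/2}t^{-1}\ln z$ as in \eqref{eq:mapping}, one has $\exp\!\bigl(\tfrac{s}{4}f(z)\sigma_3\bigr)=z^{-\frac n2\sigma_3}$, so that the factor $z^{\frac n2\sigma_3}$ inside $M(z)$ in \eqref{eq:M} is absorbed exactly; writing $\widehat M(z):=z^{-\frac n2\sigma_3}M(z)$ — which equals $w(z)^{-\frac12\sigma_3}$ for $|z|<1$ and $w(z)^{\frac12\sigma_3}\left(\begin{smallmatrix}0&-1\\1&0\end{smallmatrix}\right)$ for $|z|>1$, and is $n$-independent — the parametrix ansatz \eqref{eq:parametrix} becomes, modulo the $\Psi$-expansion, $P^{(1)}(z)\approx E(z)f(z)^{-\beta\sigma_3}e^{\pm\frac{\pi i}{2}(\alpha-\beta)\sigma_3}\widehat M(z)$. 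I would therefore \emph{derive} $E$ rather than guess it: impose $E(z)f(z)^{-\beta\sigma_3}e^{\pm\frac{\pi i}{2}(\alpha-\beta)\sigma_3}\widehat M(z)=N(z)$, i.e.\ $E(z)=N(z)\widehat M(z)^{-1}e^{\mp\frac{\pi i}{2}(\alpha-\beta)\sigma_3}f(z)^{\beta\sigma_3}$, substitute the explicit $N$ from \eqref{eq:N-solution}, the explicit $\widehat M$, and $w(z)$ from \eqref{def: weight}, and use the anticommutation $\left(\begin{smallmatrix}0&1\\-1&0\end{smallmatrix}\right)\sigma_3=-\sigma_3\left(\begin{smallmatrix}0&1\\-1&0\end{smallmatrix}\right)$ to move the off-diagonal factor past the diagonal ones. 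The $(z-1)$-powers coming from $N$ and from $w(z)^{\pm1/2}$ then collapse to $(z-1)^{\beta\sigma_3}$, which combines with $f(z)^{\beta\sigma_3}$ into $\left(\tfrac{f(z)}{z-1}\right)^{-\beta\sigma_3}$; the $z$-powers collapse to $z^{\frac{\alpha-\beta}{2}\sigma_3}$ and the constant phases to $e^{-\frac{\pi i}{2}(\alpha+\beta)\sigma_3}$, reproducing exactly the stated formula. I would perform this computation separately for $|z|<1$ and $|z|>1$ and check the two outcomes coincide — this is precisely where the jump \eqref{eq:N-jump} of $N$ across $C$ is used — so that $E$ has no jump across $C\cap U(1,r)$.

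For analyticity of $E$ in $U(1,r)$, the only multivalued ingredients in the final formula are $z^{\frac{\alpha-\beta}{2}\sigma_3}$ (cut along $[0,\infty)$, hence along the real segment of $U(1,r)$) and $\left(\tfrac{f(z)}{z-1}\right)^{-\beta\sigma_3}$. The latter is analytic in $U(1,r)$ because $\tfrac{f(z)}{z-1}=-i\,t^{-1}\tfrac{\ln z}{z-1}$ is analytic and nonvanishing near $z=1$ with value $-i/t$, which stays off $(-\infty,0]$, so the principal branch applies. For the former, across the real segment the diagonal factor $z^{\frac{\alpha-\beta}{2}\sigma_3}$ acquires $e^{\pi i(\alpha-\beta)\sigma_3}$ while the prefactor $e^{\pm\frac{\pi i}{2}(\alpha-\beta)\sigma_3}$ switches from $+$ (upper half-plane) to $-$ (lower half-plane); commuting the $e^{\mp}$ past the diagonal factors and using $e^{-\frac{\pi i}{2}(\alpha-\beta)\sigma_3}e^{\pi i(\alpha-\beta)\sigma_3}=e^{\frac{\pi i}{2}(\alpha-\beta)\sigma_3}$ gives $E_+(x)=E_-(x)$ there. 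Since $E$ and $E^{-1}$ are moreover bounded near $z=1$ (the limit of $\left(\tfrac{f(z)}{z-1}\right)^{-\beta\sigma_3}$ is finite and invertible, and $\beta\in i\mathbb{R}$), the isolated singularity at $z=1$ is removable, so $E$ is analytic in $U(1,r)$.

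For the matching condition, on $\partial U(1,r)$ we have $|\ln z|\asymp r$, so $|f(z)|\asymp t^{-1}$, and $nt$ bounded forces $t=O(1/n)$, hence $f(z)\to\infty$ uniformly; at the same time $s=-2int$ ranges over a bounded subset of $-i(0,\infty)$, on which, by Proposition~\ref{pro: existence of solution} and continuity in $s$ of the expansion coefficients, \eqref{eq:Psi-infty} holds uniformly, yielding $\Psi(f(z);s)=(I+O(1/f(z)))f(z)^{-\beta\sigma_3}e^{\frac s4 f(z)\sigma_3}$. Inserting this into \eqref{eq:parametrix}, applying the identity $\exp(\tfrac s4 f(z)\sigma_3)=z^{-\frac n2\sigma_3}$ to cancel the $z^{\frac n2\sigma_3}$ of $M$, and then the exact identity $E(z)f(z)^{-\beta\sigma_3}e^{\pm\frac{\pi i}{2}(\alpha-\beta)\sigma_3}\widehat M(z)=N(z)$, all diagonal exponentials cancel and $P^{(1)}(z)N(z)^{-1}=E(z)\bigl(I+O(1/f(z))\bigr)E(z)^{-1}=I+O(1/n)$ uniformly on $\partial U(1,r)$, since $E$ and $E^{-1}$ are bounded there and $\beta\in i\mathbb{R}$ keeps $f(z)^{\pm\beta\sigma_3}$ bounded. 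This is \eqref{eq:matching condition}.

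The conceptual reduction is standard; the main obstacle is the meticulous bookkeeping of branches in the derivation of $E$ and in the analyticity check — the interplay of the principal branch versus the $\arg\in(0,2\pi)$ convention for $z^{\alpha-\beta}$, $(z-1)^{2\alpha}$ and the square roots hidden in $w^{\pm1/2}$ inside $M$, together with the branch of $f(z)^\beta$ — and in particular verifying that the $|z|<1$ and $|z|>1$ computations of $E$ genuinely agree. A secondary technical point is confirming that \eqref{eq:Psi-infty} is uniform in $s$ down to $s\to 0$, which is where the solvability statement of Proposition~\ref{pro: existence of solution} together with a compactness argument is needed.
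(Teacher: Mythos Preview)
Your proposal is correct and follows essentially the same route as the paper: the analyticity of $E$ is checked by the same two observations (that $f(z)/(z-1)$ is analytic and nonvanishing near $z=1$, and that the sign-switching prefactor $e^{\pm\frac{\pi i}{2}(\alpha-\beta)\sigma_3}$ exactly compensates the branch jump of $z^{\frac{\alpha-\beta}{2}\sigma_3}$ across the real segment), and the matching condition is obtained by inserting \eqref{eq:Psi-infty} into \eqref{eq:parametrix} and reducing to the identity between $E$, $f^{-\beta\sigma_3}$, $M$ and $N$. The only difference is presentational: you \emph{derive} $E$ from the constraint $E\,f^{-\beta\sigma_3}e^{\pm\frac{\pi i}{2}(\alpha-\beta)\sigma_3}\widehat M=N$ and then verify it agrees with the stated formula, whereas the paper starts from the stated $E$, computes $M N^{-1}$ explicitly (its \eqref{eq:MN}), and substitutes; your explicit identification $e^{\frac{s}{4}f(z)\sigma_3}=z^{-\frac n2\sigma_3}$ makes the cancellation mechanism more transparent, and your flagging of the uniformity-in-$s$ of \eqref{eq:Psi-infty} down to $s\to 0$ is a point the paper handles only implicitly.
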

\begin{proof} It follows from the definition of $f(z)$ in \eqref{eq:mapping} that
$\left(\frac{f(z)}{z-1}\right)^{-\beta}=e^{\frac{ \pi i \beta}{2} } t^{\beta}+O(z-1)$ is analytic for $z\in U(1,r)$.
With the branch chosen, it is readily seen that $e^{\pm  {\pi i(\alpha-\beta)}/ 2   }  z^{ (\alpha-\beta)/{2}  }$
 is also analytic  for $z \in U(1,r)$ with $r<1$. Thus, $E(z)$ is  analytic     in $U(1,r)$.

Next, we show that $P^{(1)}(z)$ satisfies the  matching condition \eqref{eq:matching condition}. Since $nt$ is bounded, we see that $f(z)$ is large for $n$ large. Hence, for   $z\in \partial U(1,r)$, in view of \eqref{eq:parametrix} and   the large-$\zeta$ behavior \eqref{eq:Psi-infty} of $\Psi(\zeta; s)$, we obtain
\begin{equation}\label{eq:PN}
P^{(1)}(z)N(z)^{-1}=E(z)\left[I+O\left(  1 /n\right)\right ]f(z)^{-\beta\sigma_3}z^{-\frac{n}{2}\sigma_3}
e^{\pm \frac{\pi i(\alpha-\beta)}{2}\sigma_3}M(z)N(z)^{-1}, \quad \pm \Im z>0. \end{equation}
Recalling the definitions of $N(z)$ and $M(z)$, given  in \eqref{eq:N-solution} and \eqref{eq:M}, respectively, we have
\begin{equation}\label{eq:MN}
M(z)N(z)^{-1}=z^{\frac{n}{2}\sigma_3}e^{-\frac{\pi i}{2}(\alpha+\beta)\sigma_3}z^{\frac{1}{2}(\alpha-\beta)\sigma_3}(z-1)^{\beta\sigma_3}
\begin{pmatrix}
  0 & -1\\
1&0
  \end{pmatrix}, \end{equation}
where the branch for  $z^{ {n}/{2} }$  is chosen such that $\arg z\in (-\pi, \pi)$, the branch cut for $z^{ (\beta-\alpha)/2}$ is taken along the positive real axis so that $\arg z\in (0,2\pi)$ and
the branch cut for $(z-1)^{-\beta}$ is chosen along $[1,\infty )$ so that $\arg (z-1)\in (0,2\pi)$. Note that $|f(z)^{\beta}|=e^{-\frac{\pi i}{2} \beta}e^{i\beta \arg \ln z}$ is independent  of $t$, bounded and
bounded away from zero for pure imaginary parameter $\beta$ and  $z \in\partial U(1,r)$.
Substituting \eqref{eq:E} and \eqref{eq:MN}  into \eqref {eq:PN} gives us the matching condition \eqref{eq:matching condition}.
This completes the proof of the proposition.
\end{proof}

\subsection{Final transformation: $S \to R$}

Define
\begin{equation}\label{eq:R}
R(z)=\left\{ \begin{array}{ll}
                      S(z)N(z)^{-1}, & \hbox{$|z-1|>r$,} \\
                     S(z)P^{(1)}(z)^{-1}, & \hbox{$|z-1|<r$}.
                    \end{array}
\right.
 \end{equation}
 Then, $R(z)$ satisfies the RH problem.

\begin{figure}[ht]
 \begin{center}
  \includegraphics[height=8.5 cm]{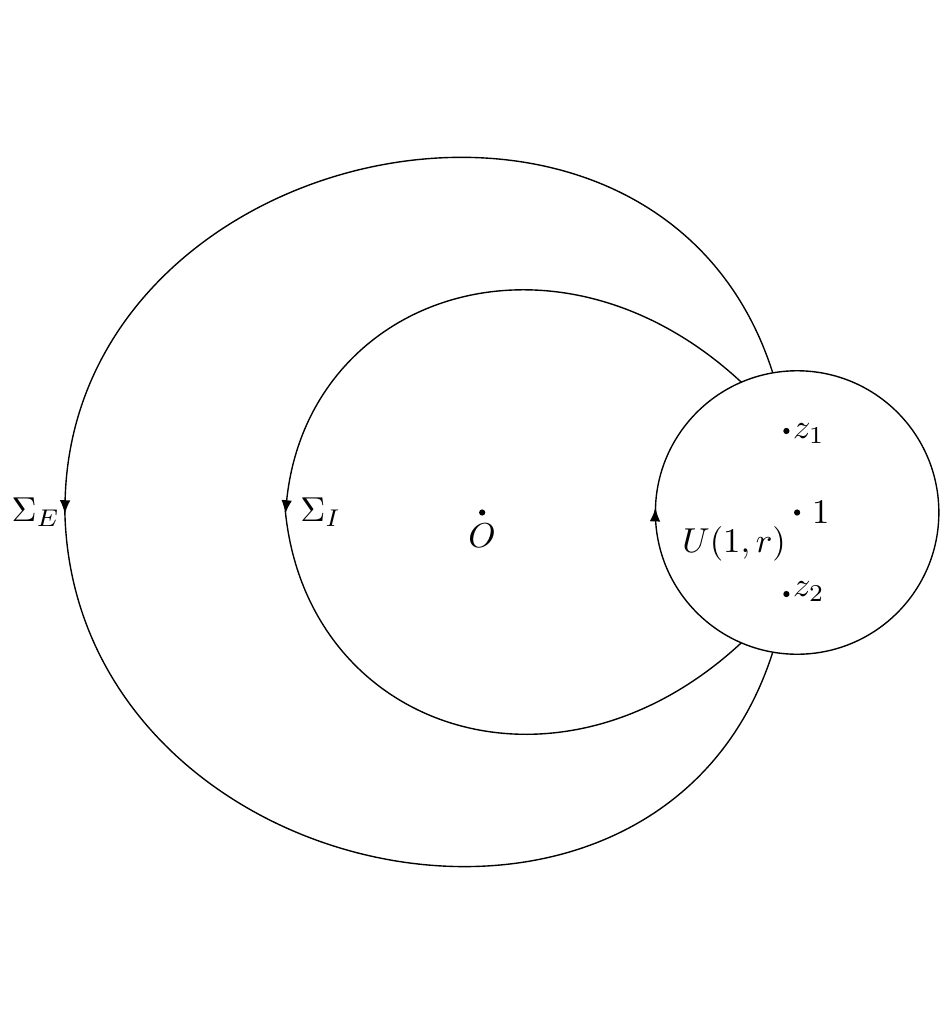}  \end{center}
  \caption{\small{Contours  for  the RH problem for  $R(z)$ }}
 \label{fig: contours for R}
\end{figure}

  \subsubsection*{RH problem for $R$}
\begin{description}
  \item (a)  $R(z)$ is analytic for  $C\setminus \Sigma_R$, where the remaining contour  $\Sigma_R$ is illustrated in Figure \ref{fig: contours for R}.
\item (b) $R(z)$  satisfies the   jump condition
  \begin{equation}\label{eq:R-jump}
R_{+}(z)=R_{-}(z)J_{R}(z), \quad z\in  \Sigma_R,\end{equation}
with
\begin{equation}\label{eq:J-R}
J_{R}(z)=\left\{ \begin{array}{ll}
                   P^{(1)}(z)N(z)^{-1}, & \hbox{$|z-1|=r$, clockwise,} \\
                     N(z)J_E(z)N(z)^{-1}, & \hbox{$z\in \Sigma_E$, $|z-1|>r$,}\\
                     N(z)J_I(z)N(z)^{-1}, & \hbox{$z\in \Sigma_I$, $|z-1|>r$,}
                    \end{array}
\right.
 \end{equation}   where $J_E(z)$ and $J_I(z)$ denote the jumps of $S(z)$ on $\Sigma_E$ and $\Sigma_I$, respectively; see \eqref{eq:S-jump} for the jump matrices and Figure \ref{fig: contours for R} for the contours.
\item (c)  As $z\to \infty$, we have
  \begin{equation}\label{eq:R-infty}
R(z)=I+O(1/z). \end{equation}
\end{description}

It is readily seen that $J_{R}(z)-I$ is uniformly exponentially small for $z$ on the portions of
 $\Sigma_E$ and $\Sigma_I$, described in \eqref{eq:J-R}, as $n\to\infty$.
 Therefore, for large $n$ and bounded $nt$, we have from the matching condition \eqref{eq:matching condition} that
 \begin{equation}\label{eq:R-jump-estimate}
J_{R}(z)=I+O(1/n),
\end{equation}
 uniformly for $z\in \Sigma_{R}$.
As before, after a standard analysis of the small-norm RH problems \cite{dkmv1}, a combination of  \eqref{eq:R-jump} and \eqref{eq:R-jump-estimate} gives
 \begin{equation}\label{eq:R-approx}
R(z)=I+O(1/n),  \quad \frac{d}{dz}R(z)=O(1/n),
\end{equation}
uniformly in the whole complex   $z$-plane.
\subsection{Proof of Theorem \ref{thm-gap-probability}}

As a consequence of the nonlinear steepest descent analysis of the RH problem for $Y$ performed in Section \ref{sec:Y},  and applying the differential identity \eqref{eq: diff identity}, we derive  the asymptotic approximation of the Toeplitz determinant as stated in Theorem \ref{thm-gap-probability}.

Tracing back the series of transformations $Y\to T\to S\to R$, we find that
 \begin{equation}\label{eq:Y-rep}
Y(z)=R(z) E(z)\Psi\left (f(z); -2int \right)e^{\pm \frac{\pi i(\alpha-\beta)}{2}\sigma_3}z^{\frac{n}{2}\sigma_3}w(z)^{-\frac{1}{2}\sigma_3},~~\pm\Im z>0,
\end{equation}
for  $|z|<1$ and $|z-1|<r$.
From Proposition \ref{pro:E}, we see that $E(z)$ is analytic and bounded for $|z-1|<r$.  We also have the analyticity of
$R(z)$ in $|z-1|<r$ and  the approximation \eqref{eq:R-approx}.
Thus, we have
 \begin{equation}\label{eq:Y-derivative}
z^{-n}w(z)\left(Y(z)^{-1}\frac{d}{dz} Y(z)\right)_{21}=f'(z)e^{\pm \pi i(\alpha-\beta)}
\left\{\Psi^{-1}(f(z); -2int )\Psi_{\zeta}(f(z); -2int )\right\}_{21}+O\left(\frac 1 n\right),
\end{equation}respectively for $\Im \pm z>0$,
with  $|z|<1$  and $|z-1|<r$.
The definition of $f(z)$ implies that $f'(z)=\frac 1{itz}$; cf. \eqref{eq:mapping}.
Substituting \eqref{eq:Y-derivative} into the differential identity \eqref{eq: diff identity}, we obtain
\begin{align}\label{eq: D-log-der}
    \frac d {dt}\ln D_{n}(t)=&-\frac{1}{2\pi i t}e^{ \pi i(\alpha-\beta)}
    \lim_{\zeta\to 1}\left(\Psi_{+}(\zeta; -2int )^{-1}\frac{d}{d\zeta}\Psi_{+}(\zeta; -2int )\right)_{21}\nonumber  \\
    &-\frac{1}{2\pi i t} e^{ -\pi i(\alpha-\beta)}
    \lim_{\zeta\to -1}\left(\Psi_{+}(\zeta; -2int )^{-1}\frac{d}{d\zeta}\Psi_{+}(\zeta; -2int )\right)_{21}+O\left(\frac 1 n\right )\nonumber  \\
=& -\frac{1}{2\pi i t} \left(e^{ \pi i(\alpha-\beta)}\Psi^{(1)}_1(-2int)+e^{ -\pi i(\alpha-\beta)}\Psi_1^{(2)}(-2int)\right)_{21}+O\left(\frac 1 n\right ),
\end{align}
where $\Psi^{(1)}_1(s)$ and $\Psi^{(2)}_1(s)$ are defined in \eqref{eq: Psi-1} and  \eqref{eq: Psi-1 minus}, respectively.
Comparing \eqref{eq:Psi-1-diff} with \eqref{eq:H-der}, we have
\begin{equation}\label{eq:Psi-1-H}
-\frac{1}{\pi i } \left(e^{ \pi i(\alpha-\beta)}\Psi^{(1)}_1(s)+e^{ -\pi i(\alpha-\beta)}\Psi_1^{(2)}(s)\right)_{21}=2sH(s)+c,\end{equation}
where $H(s)$ is the Hamiltonian for the coupled Painlev\'e V system and  $c$ is a constant independent of the variable $s$. Using the asymptotics as $is\to0^+$ in  \eqref{thm: H-asy-small s} and  \eqref{eq:Psi-1-1}, we have the constant $c=0$.
Substituting \eqref{eq:Psi-1-H} into  \eqref{eq: D-log-der}, we then obtain by taking integration
\begin{equation*}
\frac{D_n(t)}{D_n(0)}=\exp\left(\int_0^{-2int}H(\tau)d\tau+O(1/n)\right),\end{equation*}
where  the error bound is uniform for $nt$ bounded. Thus we obtain \eqref{eq:D-n-approx}, and complete the proof of Theorem \ref{thm-gap-probability}.

\section{Proof of Theorem \ref{thm: total integral}   }\label{sec:large-gap}

Integrating  the large-$s$ asymptotic approximation of $H(s;\alpha,\beta)$ in Theorem \ref{thm-asy}, we have as $is\to +\infty$
\begin{equation}\label{def: C-1}
\int_0^{s}H(\tau;\alpha,\beta)d\tau=\frac{s^2}{32}+\frac{i\alpha }{2}s-\left(\alpha^2-\beta^2+\frac{1}{4}\right)\ln|s|+C_1(\alpha,\beta)+O\left(\frac 1 s\right), \end{equation}
where the path of integration is  a line segment   of the negative imaginary axis.
Since $H(s)$ is free of poles for $s\in -i(0, +\infty)$ with at most weak singularity as $is\to 0^+$, as indicated in the asymptotic behavior   \eqref{thm: H-asy-small s},  the integral  \eqref{def: C-1}  is well-defined.

To determine the constant $C_1(\alpha,\beta)$,  which contains the global information of the Hamiltonian, the strategy is the following.
First, we express the integral alternatively
by using the differential identity \eqref{eq:total diff}.
The evaluation of the constant $C_1(\alpha,\beta)$ amounts to a calculation of   the constant term $C_2(\alpha,\beta)$ in the asymptotic expansion of the action integral. Next, it is much more convenient  to evaluate    $C_2(\alpha,\beta)$.
Actually,  by  using the differential identities  \eqref{eq:action-diff-a} and \eqref{eq:action-diff-b}, we see that each derivative  of the action integral, with respect to the parameters $\alpha$ or $\beta$, is a simple combination of the  functions $u_k(s)$ and $v_k(s)$, not involving    any integrals.  Thus, the derivatives of the constant term $C_2(\alpha,\beta)$ with respect to the parameters $\alpha$ and $\beta$  are then readily determined.  Finally, it is known that for special parameters $\alpha=1/2$ and $\beta=0$, the Hamiltonian admits a special function solution \eqref{eq: H-special solution} and its integral is explicitly calculated in \eqref{initial-D}.
This special case actually serves as an initial condition. The constant terms are thus determined, in terms of the Barnes $G$-function.

To integrate on both sides of the differential   identity \eqref{eq:total diff}, we first consider the following limit
 as $is\to 0^+$,
\begin{align}\label{def: anti-d-zero}
&\lim_{is\to 0^+}\left(sH(s;\alpha,\beta)+\alpha \ln d(s)-\beta\ln y(s)-2(\alpha^2-\beta^2)\ln s   \right)  \nonumber\\
&= \alpha\ln\left(\frac{\Gamma(1+\alpha+\beta)\Gamma(1+\alpha-\beta)}{\Gamma(1+2\alpha)^2}\right)
-\beta\ln\left(\frac{\Gamma(1+\alpha-\beta)}{\Gamma(1+\alpha+\beta)}\right)   \nonumber\\
&\quad+\alpha\ln(2\alpha)-2(\alpha^2-\beta^2)\ln2,  \end{align}
by using \eqref{eq: y-asy-small s} and \eqref{eq: d-asy-small s},
and  the asymptotics of  $H(s;\alpha,\beta)$ given in Theorem \ref{thm-asy}.
 Recalling \eqref{eq:total diff}, the convergence of the integral  \eqref{def: C-1} and the    boundedness near $0^+$ of the function in  \eqref{def: anti-d-zero} imply that  the action differential defined by
 \begin{equation*}
 u_1(s)\frac{dv_1(s)}{ds}+u_2(s)\frac{dv_2(s)}{ds}-H(s;\alpha,\beta)
 \end{equation*}
 is pole-free and integrable along $-i(0,\delta_1]$  for some small $\delta_1>0$.
 From the large-$s$ asymptotics of $H(s;\alpha,\beta)$, $u_k(s)$, $v_k(s)$ in Theorem \ref{thm-asy}, the action differential is also free of poles on $-i[M_1,+\infty)$ for   $M_1>0$ sufficiently large. Moreover, the action differential  has at most finite number of poles on $-i(\delta_1,M_1)$. Integrating  both sides of   \eqref{eq:total diff} on the interval $[0,s]$, and using the approximation     \eqref{def: anti-d-zero} as an initial value, we obtain
 \begin{align}\label{eq: H and action integral}
\int_0^{s}H(\tau;\alpha,\beta)d\tau=& \int_0^{s}\left[u_1(\tau)\frac{dv_1(\tau)}{d\tau}+u_2(\tau)\frac{dv_2(\tau)}{d\tau}-H(\tau;\alpha,\beta)\right ]d\tau\nonumber\\
&+sH(s;\alpha,\beta)+\alpha \ln d(s)-\beta\ln y(s)-2(\alpha^2-\beta^2)\ln s-C_0, \end{align}
where  $C_0$ is the limit in \eqref{def: anti-d-zero}, the path of integration of the first integral is an interval of the negative imaginary axis as described  in \eqref{def: C-1}. The integration contour of the second integral consists of the intervals $(0,-i\delta_1]$, $[-iM_1,s]$ and a curve $\Gamma$ connecting $\delta_1$ and $M_1$ in the right half complex plane $ \Re \tau>0$ to avoid the poles of the action differential and the Hamiltonian, and such that the region inside the closed curve $-i(\delta_1,M_1)\cup \Gamma^{-1}$ contains no poles of the action differential  and the Hamiltonian.
 From the large-$s$ asymptotic formulas \eqref{thm: u-1-asy-large s}, \eqref{thm: u-2-asy-large s}, \eqref{thm: H-asy-large s}, \eqref{eq: v-1-asy-large s} and  \eqref{eq: v-2-asy-large s},  we obtain
\begin{align}\label{def: C-2}
&\int_0^{s}\left[u_1(\tau)\frac{dv_1(\tau)}{d\tau}+u_2(\tau)\frac{dv_2(\tau)}{d\tau}-H(\tau;\alpha,\beta)\right ]d\tau\nonumber\\
&=-\frac{s^2}{32}-\frac{i\alpha s}{2}
+\left(\alpha^2-\beta^2+\frac{1}{4}-\frac{i(\gamma_1+\gamma_2)}{8}\right)\ln|s|+C_2(\alpha,\beta)+
O\left(\frac 1 s\right)  \end{align}
and
\begin{align}\label{def: anti-d-infty}
&sH(s;\alpha,\beta)+\alpha \ln d(s)-\beta\ln y(s)-2(\alpha^2-\beta^2)\ln s\nonumber\\
&= \frac{s^2}{16}+i \alpha s
-2(\alpha^2-\beta^2)\ln |s|-2(\alpha^2+\beta^2)\ln 2
-\left(\alpha^2-\beta^2+\frac{1}{4}\right)+\alpha\ln \alpha+O\left(\frac 1 s\right)
\end{align}
as $is\to+\infty$,  where the logarithms take the principal branches, the constant  $C_2(\alpha,\beta)$  depends  only on the parameters $\alpha$ and $\beta$, and the constants $\gamma_1$ and  $\gamma_2$ are the coefficients in the asymptotic expansions \eqref{eq: v-1-asy-large s} and \eqref{eq: v-2-asy-large s}, respectively.
In deriving  \eqref{def: anti-d-infty}, use has also been made of the large-$s$ asymptotic formulas \eqref{eq: y-asy-large s} and \eqref{eq: d-asy-large s}.
Substituting \eqref{def: C-1}, \eqref{def: C-2} and  \eqref{def: anti-d-infty} into \eqref{eq: H and action integral},  we obtain
\begin{align}\label{def: C-k relation}
C_1(\alpha,\beta)=&C_2(\alpha,\beta)-\left(\alpha^2-\beta^2+\frac{1}{4}\right )-\alpha\ln 2-4\beta^2\ln2\nonumber\\
&-\alpha\ln\left(\frac{\Gamma(1+\alpha+\beta)\Gamma(1+\alpha-\beta)}{\Gamma(1+2\alpha)^2}\right)+\beta\ln\left(\frac{\Gamma(1+\alpha-\beta)}{\Gamma(1+\alpha+\beta)}\right),
\end{align}
and the relation
\begin{equation}\label{eq: gamma-1+2}
\gamma_1+\gamma_2=-4i.\end{equation}

Next, we evaluate the constant $C_2(\alpha,\beta)$ by using the differential identities \eqref{eq:action-diff-a} and \eqref{eq:action-diff-b}.
From the relation \eqref{eq: gamma-1+2}  and
the large-$s$ asymptotics of $d(s)$, $u_k(s)$ and $v_k(s)$, given in \eqref{thm: u-1-asy-large s}, \eqref{thm: u-2-asy-large s}, \eqref{eq: d-asy-large s}, \eqref{eq: v-1-asy-large s} and \eqref{eq: v-2-asy-large s}, we have
\begin{equation}\label{def: anti-d-infty-2}
u_1(s) \frac{dv_1(s)}{d\alpha} +u_2 (s)\frac{dv_2(s)}{d\alpha} -\ln d(s) +2\alpha\ln s=-\frac{i}{2}s+2\alpha\ln|s|
+2\alpha\ln2
-\ln \alpha+O(1/s),
\end{equation}
and
\begin{equation}\label{def: anti-d-infty-3}
u_1(s) \frac{dv_1(s)}{d\beta} +u_2(s) \frac{dv_2(s)}{d\beta} +\ln y(s) -2\beta\ln s=-2\beta\ln|s|+2\beta\ln2+O(1/s),  \end{equation}
as $is\to +\infty$.
On the other hand, from the asymptotics of  $d(s)$, $y(s)$, $u_k(s)$ and $v_k(s)$ as $is\to0^+$, given in \eqref{eq: y-asy-small s}, \eqref{eq: d-asy-small s} and Theorem \ref{thm-asy}, we have
\begin{align}\label{def: anti-d-zero-2}
  &\lim_{is\to0^+}\left(u_1(s)\frac{dv_1(s)}{d\alpha}+u_2(s)\frac{dv_2(s)}{d\alpha}-\ln d(s)+2\alpha\ln s\right )\nonumber \\
   &  =2\alpha\ln 2-\ln \frac{2\alpha \Gamma(1+\alpha+\beta)\Gamma(1+\alpha-\beta)}{\Gamma(1+2\alpha)^2},
   \end{align}
and
\begin{equation}\label{def: anti-d-zero-3}
\lim_{is\to0^+}\left(u_1(s)\frac{dv_1(s)}{d\beta}+u_2(s)\frac{dv_2(s)}{d\beta}+\ln y(s)-2\beta\ln s\right)=
-2\beta\ln 2+\ln \frac{ \Gamma(1+\alpha-\beta)}{\Gamma(1+\alpha+\beta)}
. \end{equation}
Integrating   both sides of   \eqref{eq:action-diff-a} and \eqref{eq:action-diff-b} on the interval $[0,s]$ (the integration contour deformed if necessary, similar to  \eqref{eq: H and action integral}, to keep away from   the poles of the integrand), and using the approximations \eqref{def: C-2} and \eqref{def: anti-d-infty-2}-\eqref{def: anti-d-zero-3}, we obtain
the first-order differential equations
\begin{equation}\label{def: C-2-dff-a}
\frac{d}{d\alpha}C_2(\alpha,\beta)=\ln2+\ln\left(\frac{\Gamma(1+\alpha+\beta)\Gamma(1+\alpha-\beta)}{\Gamma(1+2\alpha)^2}\right),
\end{equation}
and
\begin{equation}\label{def: C-2-dff-b}
\frac{d}{d\beta}C_2(\alpha,\beta)=4\beta\ln2-\ln\left(\frac{\Gamma(1+\alpha-\beta)}{\Gamma(1+\alpha+\beta)}\right).\end{equation}
It follows from  \eqref{def: C-2-dff-a}  and  \eqref{def: C-2-dff-b} that
\begin{equation}\label{eq: C-2-0, alpha}
C_2(\alpha,\beta)=C_2(0,\beta)+\alpha\ln2+\int_{\beta}^{\alpha+\beta}\ln(\Gamma(1+\tau))d\tau+\int_{-\beta}^{\alpha-\beta}\ln(\Gamma(1+\tau))d\tau-\int_{0}^{2\alpha}\ln(\Gamma(1+\tau))d\tau,
\end{equation}
\begin{equation}\label{eq: C-2-0, beta}
C_2(0,\beta)=C_2(0,0)+2\beta^2\ln2+\int_0^{\beta}\ln(\Gamma(1+\tau))d\tau+\int_0^{-\beta}\ln(\Gamma(1+\tau))d\tau.
\end{equation}
Recalling the integral expression of   the Barnes $G$-function
\begin{equation}\label{def: Barnes-G}\ln G(1+z)=\frac{z}{2}\ln (2\pi)-\frac{z(z+1)}{2}+z\ln \Gamma(z+1)-\int_0^z \ln \Gamma(1+t)dt, \quad \Re z>-1;
\end{equation}cf., e.g., \cite[(2.37)]{DIK1},
we obtain from \eqref{def: C-k relation} and \eqref{eq: C-2-0, alpha} that
\begin{align}\label{eq: C-1-C-2-beta}
C_1(\alpha,\beta)=&C_2(0,\beta)-\ln\left(\frac{G(1+\alpha+\beta)G(1+\alpha-\beta)}{G(1+2\alpha)}\right)\nonumber\\
&-4\beta^2\ln2-\frac{1}{4}-\int_0^{\beta}\ln(\Gamma(1+\tau))d\tau-\int_0^{-\beta}\ln(\Gamma(1+\tau))d\tau.
\end{align}
This equation, together with \eqref{eq: C-2-0, beta}, implies that
\begin{equation}\label{eq: C-1-C-2-est}
C_1(\alpha,\beta)=C_2(0,0)-\ln\left(\frac{G(1+\alpha+\beta)G(1+\alpha-\beta)}{G(1+2\alpha)}\right)
-2\beta^2\ln2-\frac{1}{4}.
\end{equation}
In particular, we have
\begin{equation}\label{eq: C-1-special}
C_1(1/2,0)=C_2(0,0)-\ln\left(\frac{G(3/2)^2}{G(2)}\right)-\frac{1}{4}=C_2(0,0)-\ln \left(\pi G(1/2)^2\right)-\frac{1}{4}
\end{equation}
by using the recurrence relation
$$G(z+1)=\Gamma(z)G(z), \quad G(1)=1.$$
Comparing  \eqref{eq: C-1-C-2-est} and \eqref{eq: C-1-special}, we obtain
\begin{equation}\label{eq: C-1-expr}
C_1(\alpha,\beta)=C_1(1/2,0)+\ln\left(\frac{\pi G( {1}/{2})^2G(1+2\alpha)}{G(1+\alpha+\beta)G(1+\alpha-\beta)}\right)-2\beta^2\ln2.
\end{equation}
Substituting \eqref{eq: C-1-expr} into \eqref{def: C-1}, we obtain the large-$s$ asymptotic approximation of the integral of  the Hamiltonian $H(s)$
\begin{align}\label{eq: total integral of H}
\int_0^{s}H(\tau;\alpha,\beta)d\tau=&\frac{s^2}{32}+\frac{i\alpha s}{2}-\left(\alpha^2-\beta^2+\frac{1}{4}\right)\ln|s|+C_1(1/2,0)\nonumber\\
&+\ln\left(\frac{\pi G( {1}/{2})^2G(1+2\alpha)}{G(1+\alpha+\beta)G(1+\alpha-\beta)}\right)-2\beta^2\ln2+O\left(\frac 1 s\right), \end{align}
as $is\to+\infty$.

Finally,   we evaluate $C_1(1/2,0)$ by using the special function solution to the Hamiltonian for $\alpha=1/2$ and $\beta=0$.
From \eqref{initial-D}, we have
\begin{equation} \label{eq:total integral of H-special}
\int_0^{s}H(\tau;1/2,0)d\tau=\frac{s^2}{32}+  \frac{is}{4}-\frac{ \ln|s|}{2}
-\frac{ \ln(\pi/2)}{2}+O\left(\frac 1 s\right),\quad  is\to +\infty.
\end{equation}Here we have used the asymptotic formula $I_0(x)=\frac{e^x}{\sqrt{2\pi x}}\left\{1+O(1/x)\right\}$ as $x\to+\infty$; see \cite[(9.7.1)]{as}.
Comparing \eqref{eq: total integral of H} and \eqref{eq:total integral of H-special} yields
$$C_1( {1}/{2},0)=\frac{1}{2}\ln 2-\frac{1}{2}\ln \pi.$$
Thus, we obtain
\begin{align}\label{eq: total integral of H-done}
\int_0^{s}H(\tau;\alpha,\beta)d\tau=&\frac{s^2}{32}+\frac{i\alpha s}{2}-\left(\alpha^2-\beta^2+\frac{1}{4}\right)\ln(|s|/4)\nonumber\\
&+\ln\left(\frac{\sqrt{\pi}\; G( {1}/{2})^2\;G(1+2\alpha)}{2^{2\alpha^2}G(1+\alpha+\beta)\;G(1+\alpha-\beta)}\right)+O\left(\frac 1 s\right), \end{align}
as $is\to+\infty$. The equation  can be written equivalently in the form of \eqref{eq: total integral-H-thm}. This completes the proof of Theorem \ref{thm: total integral}.

\section*{Acknowledgements}

The work of Shuai-Xia Xu was supported in part by the National Natural Science Foundation of China under grant numbers 11571376 and 11971492. Yu-Qiu Zhao was supported in part by the National Natural Science Foundation of China under grant numbers 11571375 and 11971489.


\end{document}